\newcommand{\p}{\partial}
\newcommand{\const}{\mathop{\rm const}\nolimits}
\newcommand{\ord}{\mathop{\rm ord}\nolimits}
\newcommand{\sco}{\mathop{\rm sco}\nolimits}
\newcommand{\wsco}{\mathop{\rm wsco}\nolimits}
\newcommand{\rank}{\mathop{\rm rank}\nolimits}
\newcommand{\ssl}{{\scriptscriptstyle(}}
\newcommand{\ssr}{{\scriptscriptstyle)}}
\newtheorem{theorem}{Theorem}
\newtheorem{lemma}[theorem]{Lemma}
\newtheorem{corollary}[theorem]{Corollary}
\newtheorem{proposition}[theorem]{Proposition}
{\theoremstyle{definition}
\newtheorem{definition}[theorem]{Definition}
\newtheorem{example}[theorem]{Example}
\newtheorem{remark}[theorem]{Remark}
}
\newcommand{\todo}[1][\null]{\ensuremath{\clubsuit}}
\newcommand{\noprint}[1]{}
\begin{document}

\par\noindent {\LARGE\bf
Singular reduction modules of differential equations\par}

{\vspace{4mm}\par\noindent
Vaycheslav M. Boyko~$^\dag$, Michael Kunzinger~$^\ddag$ and Roman O. Popovych~$^\S$
\par\vspace{2mm}\par}

{\vspace{2mm}\par\noindent\it
$^{\dag,\S}$Institute of Mathematics of NAS of Ukraine, 3 Tereshchenkivska Str., Kyiv-4, Ukraine
\par}

{\vspace{2mm}\par\noindent\it
$^{\ddag}$~Fakult\"at f\"ur Mathematik, Universit\"at Wien, Oskar-Morgenstern-Platz 1, 1090 Wien, Austria
\par}

{\vspace{2mm}\par\noindent\it
$^{\S}$~Wolfgang Pauli Institute, Oskar-Morgenstern-Platz 1, 1090 Wien, Austria
\par}

{\vspace{2mm}\par\noindent $\phantom{^{\dag,\ddag}}$\rm E-mail: \it $^\dag$boyko@imath.kiev.ua, $^\ddag$michael.kunzinger@univie.ac.at, $^\S$rop@imath.kiev.ua
 \par}

{\vspace{5mm}\par\noindent\hspace*{5mm}\parbox{150mm}{\small
The notion of singular reduction modules, i.e., of singular modules of nonclassical (conditional) symmetry,
of differential equations is introduced.
It is shown that the derivation of nonclassical symmetries for differential equations
can be improved by an in-depth prior study of the associated singular modules of vector fields.
The form of differential functions and differential equations
possessing parameterized families of singular modules is described up to point transformations.
Singular cases of finding reduction modules are related to lowering the order of the corresponding reduced equations.
As examples, singular reduction modules of evolution equations and second-order quasi-linear equations are studied.
Reductions of differential equations to algebraic equations and to first-order ordinary differential equations are
considered in detail within the framework proposed and are related to previous no-go results on nonclassical symmetries.
}\par\vspace{4mm}}

\noprint{
\noindent Keywords: nonclassical symmetry, conditional symmetry, reduction module, ansatz, no-go case
}



\noprint{
NOTATION
\begin{itemize}\itemsep=0ex
\item
$L$ is a differential function or the left hand side of a differential equation
\item
$\tilde L$ is a differential function such that $L|_{\mathcal Q_{\ssl r\ssr}}=(\lambda\tilde L)|_{\mathcal Q_{\ssl r\ssr}}$
           for some nonvanishing differential function~$\lambda$
\item
$\check L$ is the left hand side of a reduced equation
\item
$\hat L$ is a differential function associated with~$L$ on the manifold~$\smash{\mathcal Q_{\ssl r\ssr}}$
           by excluding of the leading derivatives of the system~$\smash{\mathcal Q_{\ssl r\ssr}}$
\item
$\bar L$ is a differential function in the representation of~$L$ admitting meta-singular modules
\item
$\Lambda$ ($\Lambda^s$) is a multiplier being a differential function
\item
$\lambda$ ($\check\lambda$, $\hat\lambda$, $\tilde\lambda$, $\bar\lambda$) is a nonvanishing multiplier being a differential function
\end{itemize}
}

\section{Introduction}

The ``nonclassical'' approach to finding solutions of differential equations in closed form
was proposed in \cite{Bluman&Cole1969}
using the particular example of the $(1{+}1)$-dimensional heat equation
in order to extend the range of applicability of symmetry methods.
Since the end of the 1980s this method has been applied to many particular differential equations
modeling real-world phenomena,
see, e.g., examples in \cite{ArrigoHillBroadbridge1994,Clarkson&Mansfield1993,Clarkson&Mansfield1994a,Popovych&Vaneeva&Ivanova2007}
and reviews in \cite{Fushchych&Shtelen&Serov1993en,Olver&Vorob'ev1996}.
Related objects, which are similar to subalgebras of Lie symmetry algebras,
are named in the literature in different ways:
nonclassical~\cite{Levi&Winternitz1989},
$Q$-conditional~\cite{Fushchych&Shtelen&Serov1993en},
conditional~\cite{Fushchych&Zhdanov1992},
partial~\cite{Vorob'ev1991} symmetries for short,
or involutive families/modules of nonclassical/conditional symmetry operators
\cite{Olver&Vorob'ev1996,Zhdanov&Tsyfra&Popovych1999} in a more complete form.
The main feature which is inherited by nonclassical symmetries from Lie symmetries is
that they allow to construct ansatzes for the unknown function
which reduce the corresponding differential equation to differential equations
with a smaller number of independent variables
\cite{ArrigoHillBroadbridge1993,Olver1994,Pucci&Saccomandi1992,Vorob'ev1991,Zhdanov&Tsyfra&Popovych1999}.
This feature relates nonclassical symmetries to the direct method by Clarkson and Kruskal~\cite{Clarkson&Kruskal1989}
and the general ansatz method \cite{Fushchych&Shtelen&Serov1993en}.
In fact, however, properties of nonclassical symmetries are more closely connected with
theories of differential constraints and formal compatibility of systems of differential equations
\cite{Kunzinger&Popovych2009,Olver1994,Pucci&Saccomandi1992}.
This is why we mostly use the term ``reduction modules'' (of vector fields)
instead of ``involutive families of conditional symmetry operators''
and say that an involutive module of vector fields reduces a differential equation
if the equation is reduced by the corresponding ansatz.

Involving the associated invariant surface condition in the conditional invariance criterion
gives rise to a few significant complications of nonclassical symmetries in comparison with Lie symmetries.
Given a differential equation~$\mathcal L$, elements of its different reduction modules
do not form objects of a nice algebraic or differential structure.
Hence it is not possible to compose single reduction operators in reduction modules
as this is done for the maximal Lie invariance algebra of~$\mathcal L$ and its subalgebras,
which consist of vector fields generating one-parameter Lie symmetry (pseudo)groups of~$\mathcal L$.
Whereas the system of determining equations for Lie symmetries is linear,
similar systems for reduction modules are nonlinear and should additionally be supplemented,
in the course of considering modules of dimension greater than one,
by the condition of involutivity, i.e., the closure of modules with respect to commutation of vector fields.
Moreover, there is no single system of determining equations even for reduction modules of a fixed dimension.
Instead, the~entire set of such modules is partitioned into subsets associated with systems of determining equations
which are rather different from each other.
Solving some of these systems may be equivalent to solving the initial equation,
which gives no-go cases of looking for reduction modules.
Such no-go cases were known for a number of particular $(1{+}1)$-dimensional evolution equations including
the linear heat equation \cite{Fushchych&Popowych1994-1,Fushchych&Shtelen&Serov&Popovych1992,Fushchych&Shtelen&Serov1993en,Mansfield1999,Webb1990},
the Burgers equation \cite{Arrigo&Hickling2002,Mansfield1999},
linear second-order evolution equations \cite{Popovych1995,Popovych2008a}
as well as for the entire classes of $(1{+}1)$-dimensional evolution equations~\cite{Zhdanov&Lahno1998},
multi-dimensional evolution equations~\cite{Popovych1998}
and even systems of such equations~\cite{Vasilenko&Popovych1999}.
Note that in the course of the study of Lie symmetries a similar no-go situation arises
for first-order ordinary differential equations \cite[Theorem~10, p.~130]{Lie1891},
see also \cite[Section~2.5]{Olver1993}.
In fact, all the above no-go cases of reduction operators are occurrences
of a no-go case common to evolution equations
and one more no-go case specific to linear second-order evolution equations.
The causes giving rise to the partition of the module set and to no-go cases
for nonclassical symmetries have not been investigated in the literature until recently.
It was not understood in what way results on no-go cases can be extended to
reduction modules of other, non-evolution, equations.

In~\cite{Kunzinger&Popovych2010a}
the partition of the set of reduction modules of a differential equation
was related with lowering the order of this equation on the manifolds determined
by the associated invariant surface conditions in the appropriate jet space.
As a result, studying singular modules of vector fields, which lower the order of the equation,
was included as the initial step in the procedure of finding nonclassical symmetries.
In order to illustrate the main ideas of the framework proposed,
we considered only the case of single partial differential equations
in one dependent and two independent variables and single reduction operators.
The notion of singular reduction operators was introduced.
The weak singularity co-order of a reduction operator~$Q$ was shown
to be equal to the essential order of the corresponding reduced equation and the number of essential parameters in
the family of $Q$-invariant solutions.
No-go assertions on singular reduction operators
of $(1{+}1)$-dimensional evolution and wave equations were derived
and then generalized to parameterized families of vector fields which reduce
partial differential equations in two independent variables
to first-order ordinary differential equations.

In the present paper we extend results of~\cite{Kunzinger&Popovych2010a}
to the case of a greater number of independent variables.
After revising and enhancing the framework of nonclassical symmetries in Section~\ref{SectionOnDefOfRedModules},
in Section~\ref{SectionOnSingularModulesOfVectorFieldsForDiffFunctions} we introduce
the concepts of singular and meta-singular modules of vector fields for differential functions.
Any meta-singular module of dimension greater than two turns out to be necessarily involutive,
in contrast to two-dimensional meta-singular modules.
The main result of Section~\ref{SectionOnDiffFunctionsAdmittingMeta-singularModules}
is Theorem~\ref{TheoremOnDiffFunctionsWithMetaSingularModule},
which describes, up to point transformations, differential functions possessing meta-singular modules.
The analogous notions of weakly singular and meta-singular modules for differential equations are introduced
in Section~\ref{SectionOnSingularModulesOfVectorFieldsForDiffEqs}.
Theorem~\ref{TheoremOnWeaklyMetaSingularModuleForDiffEq},
which characterizes differential equations admitting weakly meta-singular modules,
implies that instead of such modules it suffices to study meta-singular modules of the corresponding differential functions.
A~connection between the weak singularity co-order of reduction modules,
the essential order of the corresponding reduced equations and,
in the case of reduction to ordinary differential equations,
the number of parameters in the corresponding families of invariant solutions
is established in Section~\ref{SectionOnReductionModulesAndParametricFamiliesOfSolutions}.
It is shown that
the relation between the reducibility of a differential equation~$\mathcal L$ by an involutive module~$Q$
and the formal compatibility of the joint system of~$\mathcal L$
and the characteristic system associated to~$Q$
essentially involves the weak singularity co-order of~$Q$ for~$\mathcal L$.
Revisiting results of~\cite{Grundland&Tafel1995} within the framework of singular reduction modules,
in Section~\ref{SectionOnReductionModulesOfMaxD} we consider the specific case of reduction modules
of dimension equal to the number of independent variables, which results in the reduction to algebraic equations.
In Section~\ref{SectionOnExampleOfEvolEqs} we reformulate and extend no-go results from~\cite{Popovych1998}
on modules reducing evolution equations to ordinary differential equations
with time as the single independent variable.
This motivates the consideration of reduction modules of singularity co-order one
in Section~\ref{SectionOnReductionModulesOfSingularityCoOrder1}.
Supposing that a differential equation~$\mathcal L$ admits an $n$-dimensional meta-singular module~$M$ of singularity co-order one,
where $n$ is the number of independent variables in~$\mathcal L$,
we prove no-go assertions establishing a connection between
$(n{-}1)$-dimensional reduction modules of~$\mathcal L$ contained in~$M$ and solutions of~$\mathcal L$.
In particular, it is shown that the system of determining equations for such modules is reduced to the initial equation~$\mathcal L$
by a composition of a differential substitution and a hodograph transformation.
The final Section~\ref{SectioOnSingularModulesForQuasiLinearSecondOrderPDEs} is devoted to
singular modules for quasi-linear second-order PDEs,
and the dimension of modules is assumed to be less than the number of independent variables.
Thus, elliptic equations possess no singular modules.
Any evolution equation whose matrix of coefficients of second-order derivatives is non-degenerate
possesses only singular modules as considered in Section~\ref{SectionOnExampleOfEvolEqs} for general evolution equations.
Generalized wave equations are much more complicated from this point of view.
In particular, they may admit families of singular modules which have no interpretation in terms of meta-singular modules,
which makes a further development of the framework of singular modules desirable.

\section{Reduction modules of differential equations}
\label{SectionOnDefOfRedModules}

In this section, based on
\cite{Fushchych&Tsyfra1987,Fushchych&Zhdanov1992,Olver&Vorob'ev1996,%
Kunzinger&Popovych2009,Kunzinger&Popovych2010a,%
Popovych&Vaneeva&Ivanova2007,Zhdanov&Tsyfra&Popovych1999},
we revise and enhance the framework of nonclassical (conditional) symmetries of
differential equations.
Due to substantiating our choice with different arguments, we use the name ``reduction modules''
instead of ``involutive families of nonclassical (conditional) symmetry operators''.

Given a foliated space of $n$ independent variables $x=(x_1,\dots,x_n)$ and a single dependent variable~$u$,
consider a finite-dimensional involutive module~$Q$ of vector fields in this space,
and suppose that the module dimension~$p$ of~$Q$ (over the ring of smooth functions of $(x,u)$) is not greater than~$n$, $0<p\leqslant n$.
We additionally assume that the module~$Q$ satisfies the \emph{rank condition}, i.e.,
for each fixed value of~$(x,u)$ the projection of~$Q$ to the space of~$x$ is $p$-dimensional.
The attribute `\emph{involutive}' means that the commutator of any two vector fields from~$Q$ belongs to~$Q$.
It is obvious that any one-dimensional module is involutive.
Therefore, in the case $p=1$ we can omit the attribute `involutive' and talk only about modules.

In what follows
the indices $i$ and $j$ run from 1 to $n$,
the index~$s$ runs from 1 to $p$,
the index~$\sigma$ runs from 1 to $n-p$,
and we use the summation convention for repeated indices.
Angular brackets $\langle\dots\rangle$ are used for denoting linear spans over the ring of smooth functions of $(x,u)$.
Subscripts of functions denote differentiation with respect to the corresponding variables,
$\p_i=\p/\p x_i$ and $\p_u=\p/\p u$.
Any function is considered as its zero-order derivative.
All considerations are in the local setting.
The notion of functional independence will be understood
in the sense of total functional independence, which means that functions are in fact
functionally independent on each open subset of their common domain.

Suppose that the vector fields $Q_s=\xi^{si}(x,u)\p_i+\eta^s(x,u)\p_u$ form a basis of~$Q$,
i.e., $Q=\langle Q_1,\dots,Q_p\rangle$.
Then the rank condition is equivalent to the equality $\rank(\xi^{si})=p$.
The condition that the commutator of any pair of basis elements belongs to~$Q$, $[Q_s,Q_{s'}]\in Q$, suffices
for the module~$Q$ to be involutive.
If the vector fields~$\tilde Q_1$, \dots, $\tilde Q_p$ form another basis of~$Q$, 
then there exists a nondegenerate $p\times p$ matrix-function $(\lambda^{ss'}(x,u))$ such that
$\tilde Q_s=\lambda^{ss'}Q_{s'}$.

The first-order differential function~$Q_s[u]:=\eta^{s}(x,u)-\xi^{si}(x,u)u_i$
is called the {\it characteristic} of the vector field~$Q_s$.
In view of the Frobenius theorem, involutivity of~$Q$ is equivalent to the fact
that the characteristic system~$\mathcal Q$ of PDEs $Q_s[u]=0$
(also called the \emph{invariant surface condition}) has
$n+1-p$ functionally independent integrals $I^0(x,u)$, \dots, $I^{n-p}(x,u)$.
Therefore, the general solution of this system can implicitly be represented in the form
$F(I^0,\dots,I^{n-p})=0$, where~$F$ is an arbitrary smooth function of its arguments.

A differential function $G=G[v]$ of the dependent variables $v=(v^1,\ldots,v^m)$
which in turn are functions of a tuple of independent variables $y=(y_1,\ldots,y_l)$
will be viewed as a smooth function of $y$ and a finite number of derivatives of~$v$ with respect to~$y$.
More rigorously, the differential function $G$ is defined as a smooth function on a domain of the jet space $\mathrm J^r=\mathrm J^r(y|v)$ of some order~$r$
with independent variables $y$ and dependent variables~$v$~\cite{Olver1993}.
The order $r=\ord G$ of the differential function $G$ is defined to equal
the maximal order of derivatives (resp.\ jet variables) involved in $G$,
and $\ord G=-\infty$ if $G$ depends only on~$y$.
Each set of differential functions of a fixed positive order as well as
the set of differential functions of nonpositive order
are invariant with respect to point transformations of~$(y,v)$.

Using another basis of~$Q$ gives just another representation of the characteristic system~$\mathcal Q$ with the same set of solutions.
This is why the characteristic system~$\mathcal Q$ is associated with the module~$Q$ rather than with a fixed basis of~$Q$.
And vice versa, any family of $n+1-p$ functionally independent functions of~$x$ and $u$
is a complete set of integrals of the characteristic system of an involutive $p$-dimensional module.
Therefore, there exists a one-to-one correspondence between the set of involutive $p$-dimensional modules
and the set of families of $n+1-p$ functionally independent functions of~$x$ and $u$,
which is factorized with respect to the corresponding equivalence.
(We consider two families of the same number of functionally independent functions of the same arguments as equivalent
if any function from one of the families is functionally dependent on functions from the other family.)

A function $u=f(x)$ is called \emph{invariant with respect to the involutive module}~$Q$
(or, briefly, $Q$-\emph{invariant}) if it is a solution of the characteristic system~$\mathcal Q$.
This notion is justified by the following facts.
In view of the rank condition, we can choose a basis of~$Q$ that spans, over the underlying field,
a $p$-dimensional (Abelian) Lie algebra $\mathfrak g$ of vector fields in the space $(x,u)$.%
\footnote{%
Such a basis is constructed in the following way:
We take an arbitrary basis of~$Q$ consisting of vector fields $Q_s=\xi^{si}(x,u)\p_i+\eta^s(x,u)\p_u$.
Up to permutation of the independent variables and basis elements of~$Q$,
we can suppose in view of the rank condition that $\rank(\xi^{ss'})=p$
and change the basis to
$(\hat Q_s=\p_s+\hat\xi^{s\iota}\p_\iota+\hat\eta^s\p_u)$, where the index~$\iota$ runs from $p+1$ to~$n$
and the matrices $(\hat\xi^{s\iota})$ and $(\hat\eta^s)$ are the products of the matrix~$(\xi^{ss'})^{-1}$
by the matrices $(\xi^{s\iota})$ and $(\eta^s)$, respectively.
Since the module~$Q$ is involutive, the vector fields $\hat Q_1$, \dots, $\hat Q_p$ commute.
\label{FootnoteOnCommutativeBasis}}
The graph of each solution of the characteristic system~$\mathcal Q$
is obviously invariant with respect to the $p$-parameter local transformation group
generated by the algebra~$\mathfrak g$.

We choose a basis of~$Q$ that consists of commuting vector fields $Q_1$, \dots, $Q_p$
and, for each fixed~$s$, consider a solution $J^s=J^s(x,u)$ of the system $Q_{s'}J^s=\delta_{ss'}$,
where $\delta_{ss'}$ is the Kronecker delta.
Since the functions $I^0$, \dots, $I^{n-p}$, $J^1$, \dots, $J^p$ of $(x,u)$ are functionally independent,
one can make the change of variables
\[
\varphi=I^0(x,u),\quad \omega_\sigma=I^\sigma(x,u),\quad \omega'_s=J^s(x,u),
\]
where $\omega=(\omega_1,\dots,\omega_{n-p})$ and $\omega'=(\omega'_1,\dots,\omega'_p)$
are considered as the new independent variables
and $\varphi$ is the new dependent variable.
The variables $\omega$ and $\varphi$ are called $Q$-invariant,
and the variables $\omega'$ are called parametric for the module~$Q$.
In the new variables, the basis elements $Q_s$ take the form $Q_s=\p_{\omega'_s}$.

Next, consider an $r$th order differential equation~$\mathcal L$ of the form $L(x,u_{\ssl r\ssr})=0$
for the unknown function $u$ of the independent variables $x=(x_1,\dots,x_n).$
Here, $u_{\ssl r\ssr}$ denotes the set of all the derivatives of the function $u$ with respect to $x$
of order not greater than~$r$, including $u$ as the derivative of order zero.
We assume that the order~$r$ of the equation~$\mathcal L$ is essential,
i.e., it is minimal among the orders of equations
equivalent to~$\mathcal L$ up to nonvanishing multipliers that are differential functions of~$u$.
In the local approach the equation~$\mathcal L$ can be viewed as an algebraic equation
in the $r$th order jet space $\mathrm J^r=\mathrm J^r(x|u)$ and is identified with the manifold of its solutions in~$\mathrm J^r$,
\begin{gather*}
\mathcal L=\{ (x,u_{\ssl r\ssr}) \in \mathrm J^r\mid L(x,u_{\ssl r\ssr})=0\}.
\end{gather*}
We use the same symbol~$\mathcal L$ for this manifold and also write $\mathcal Q_{\ssl r\ssr}$
both for the system consisting of the independent differential consequences of the characteristic system~$\mathcal Q$
up to equation order~$r$ and for the manifold defined by the system~$\mathcal Q_{\ssl r\ssr}$ in $\mathrm J^r$, i.e.,
\begin{gather*}
\mathcal Q_{\ssl r\ssr}=\{ (x,u_{\ssl r\ssr}) \in \mathrm J^r\mid D^\alpha Q_s[u]=0,\,|\alpha|<r\},
\end{gather*}
where $ D^\alpha=D_1^{\alpha_1}\cdots D_n^{\alpha_n}$,
$D_i=\p_{x_i}+u_{\alpha+\delta_i}\p_{u_\alpha}$ is the operator of total differentiation with respect to the variable~$x_i$,
$\alpha=(\alpha_1,\dots,\alpha_n)$ is an arbitrary multi-index,
$\alpha_i\in\mathbb N\cup\{0\}$, $|\alpha|:=\alpha_1+\cdots+\alpha_n$,
and $\delta_i$ is the multi-index whose $i$th entry equals~1 and whose other entries are zero.
The variable $u_\alpha$ of the jet space $\mathrm J^r$ corresponds to the derivative
$\p^{|\alpha|}u/\p x_1^{\alpha_1}\dots\p x_n^{\alpha_n}$,
and $u_i\equiv u_{\delta_i}$, $u_{ij}\equiv u_{\delta_i+\delta_j}$, etc.

Without loss of generality, we can assume $F_{I^0}\ne0$ in the representation $F(I^0,\dots,I^{n-p})=0$
of the general solution of the characteristic system~$\mathcal Q$
and resolve the equation $F(I^0,\dots,I^{n-p})=0$ with respect to~$I^0$:
$I^0=\varphi(I^1,\dots,I^{n-p})$.
In view of the rank condition, this gives the representation (in general, also implicit)
\begin{equation}\label{EqGenAnsatzForm}
\mathcal A\colon\quad I^0(x,u)=\varphi(\omega),\quad \omega_\sigma=I^\sigma(x,u),
\end{equation}
for solutions of the characteristic system~$\mathcal Q$,
where $\varphi=\varphi(\omega)$ is an arbitrary smooth function of its arguments,
which is called an \emph{ansatz} for~$u$ constructed with the module~$Q$.

Making the ansatz~$\mathcal A$ for~$u$
we can express all derivatives of~$u$ in terms of~$\omega$, $\omega'$ and derivatives of~$\varphi$,
then substitute these expressions to~$L[u]$
and replace the remaining~$x$'s by their expressions in the new variables.
Alternatively, we can change the variables to $(\omega,\omega',\varphi)$ from the outset
and then take into account the constraints $\varphi_{\omega'_s}=0$.
The function obtained by the above procedure is denoted by~$L|_{\mathcal A}$.
It depends at most on $\omega$, $\omega'$ and $\varphi_{\ssl r\ssr}$,
where $\varphi_{\ssl r\ssr}$ denotes the tuple of derivatives of $\varphi$ with respect to~$\omega$
up to order~$r$.

\begin{definition}\label{DefinitionOfReductionModule}
The ansatz~$\mathcal A$ constructed with the module~$Q$ \emph{reduces} the equation~$\mathcal L$ if
there exist smooth functions $\check\lambda=\check\lambda(\omega,\omega',\varphi_{\ssl r\ssr})$ and
$\check L=\check L(\omega,\varphi_{\ssl r\ssr})$ such that the function~$\check\lambda$ does not vanish and
\[L|_{\mathcal A}=\check\lambda(\omega,\omega',\varphi_{\ssl r\ssr})\check L(\omega,\varphi_{\ssl r\ssr}).\]
Then the module~$Q$ is called a \emph{reduction module} of~$\mathcal L$,
and the equation $\check L(\omega,\varphi_{\ssl r\ssr})=0$
is a \emph{reduced equation} associated with the ansatz~$\mathcal A$.
\end{definition}

The reduction procedure should additionally be specified
in the case of reduction to algebraic equations when $p<n$,
see the proof of Theorem~\ref{TheoremOnReductionModulesAndReducedEqs}.

The set of $p$-dimensional reduction modules of the equation~$\mathcal L$ will be denoted by~$\mathcal R^p(\mathcal L)$.

Consider the following conditions on the ($r$th order) differential equation~$\mathcal L$
and the involutive module~$Q$ satisfying the rank condition:
\par\smallskip\par
(C1) $Q$ is a reduction module of the equation~$\mathcal L$;
\par\smallskip\par
(C2) $V_{\ssl r\ssr}L[u]\in\langle L[u],D^\alpha Q_s[u]=0,|\alpha|<r \rangle$ for any $V\in Q$;
\par\smallskip\par
(C3) $V_{\ssl r\ssr}L[u]\big|_{\mathcal L\cap\mathcal Q_{\ssl r\ssr}}=0$ for any $V\in Q$.

\par\smallskip\par\noindent
Here $V_{\ssl r\ssr}$ denotes the standard $r$th prolongation
of a vector field~$V=\xi^i(x,u)\p_i+\eta(x,u)\p_u$ \cite{Olver1993,Ovsiannikov1982}:
$V_{\ssl r\ssr}=V+\sum_{0<|\alpha|{}\leqslant r} \eta^{\alpha}\p_{u_\alpha}$,
where
$\eta^{\alpha}=D^\alpha V[u]+\xi^iu_{\alpha+\delta_i}$ and $V[u]=\eta-\xi^iu_i$.
In the conditions~(C2) and~(C3) it suffices to require
that $V$ runs through a basis $(Q_1,\dots,Q_p)$ of~$Q$.
What basis is chosen for representing the characteristic system~$\mathcal Q$
and checking the conditions~(C2) and~(C3) is not essential;
cf.\ \cite{Fushchych&Zhdanov1992,Zhdanov&Tsyfra&Popovych1999}.

All the conditions are preserved by point transformations of the variables~$(x,u)$.

\begin{theorem}\label{TheoremOnCriterionOnReductionModules}
The conditions {\rm (C1)} and {\rm (C2)} are equivalent and imply {\rm (C3)}.
If the tuple of differential functions $(L[u],D^\alpha Q_s[u]=0,|\alpha|<r)$
is of maximal rank on $\mathcal L\cap\mathcal Q_{\ssl r\ssr}$,
then the condition~{\rm (C3)} implies {\rm (C2)} (and thereby also~{\rm (C1)}).
\end{theorem}

Besides the case of maximal rank, there are other, more specific, cases
when the conditions (C1)--(C3) are simultaneously satisfied,
e.g., if $\mathcal L\cap\mathcal Q_{\ssl r\ssr}=\mathcal Q_{\ssl r\ssr}$.

A proof of Theorem~\ref{TheoremOnCriterionOnReductionModules} is presented in~\cite{Boyko&Kunzinger&Popovych2016b}.
It relies on the following assertion.

\begin{lemma}\label{LemmaOnNonvanishingMultiplier}
Let smooth functions~$f$, $\Lambda^1$, \dots, $\Lambda^p$,
and an involutive module $Q=\langle Q_1,\dots,Q_p\rangle$ of vector fields
that are defined on a neighborhood~$O_{z^0}$ of a point~$z^0\in\mathbb R^l$ for some $l\in\mathbb N$
satisfy the conditions $Q_sf(z)=\Lambda^s(z)f(z)$ for any $z\in O_{z^0}$, $s=1,\dots,p$, and $\dim Q|_{z^0}=p$.
Then there exist a neighborhood~$\check O_{z^0}\subset O_{z^0}$ of~$z^0$ and
smooth functions~$\check f$ and $\lambda$ defined on~$\check O_{z^0}$
such that $\lambda(z)\ne0$, $Q^s\check f(z)=0$ and $f(z)=\lambda(z)\check f(z)$ for any~$z\in\check O_{z^0}$.
\end{lemma}

In previous papers on reduction modules, a different terminology was used
(see, e.g., \cite{Bluman&Cole1969,Fushchych&Shtelen&Serov1993en,Fushchych&Tsyfra1987,Olver&Rosenau1987,Zhdanov&Tsyfra&Popovych1999}).
Usually the condition~{\rm (C3)} was considered as the main one and was called the \emph{conditional invariance criterion}.
Then the differential equation~$\mathcal L$ is called \emph{conditionally invariant} with respect to the involutive module~$Q$,
whereas the module~$Q$ is called an \emph{involutive module of conditional symmetry}
(or $Q$-conditional symmetry, or nonclassical symmetry, etc.) operators of the equation~$\mathcal L$.
A version of the condition~(C2) for systems of differential equations for several unknown functions
appeared in~\cite{Grundland&Tafel1995}.
In contrast to the case of a single differential equation for a single unknown function,
the version of the condition~(C2) for systems is sufficient but not necessary
for an ansatz constructed with the module~$Q$ to reduce the system under consideration, cf.\ \cite[Section~5]{Kunzinger&Popovych2009}.
An alternative approach to conditional invariance is to demand that the joint system of~$\mathcal L$ and~$\mathcal Q_{\ssl r\ssr}$
is formally compatible in the sense of the absence of nontrivial differential consequences~\cite{Olver1994,Olver&Vorob'ev1996}.
But a delicate point is which representation of the joint system should in fact be considered to decide formal compatibility,
cf.\ \cite[footnote~1]{Kunzinger&Popovych2009}
and Section~\ref{SectionOnReductionModulesAndParametricFamiliesOfSolutions} of the present paper.
If the conditional invariance criterion is not satisfied but nevertheless
the equation~$\mathcal L$ has $Q$-invariant solutions, 
then one talks about weak invariance of the equation~$\mathcal L$
with respect to the module~$Q$~\cite{Olver&Rosenau1987,Olver&Vorob'ev1996,Pucci&Saccomandi1992}.

There are reduction modules related to classical Lie symmetries.
Let $\mathfrak g$ be a $p$-dimensional Lie invariance algebra of the equation~$\mathcal L$,
whose basis vector fields $Q_1$, \dots, $Q_p$ satisfy the condition
$\rank (\xi^{si})=\rank(\xi^{si},\eta^s)=p'$, where $p'\leqslant p$.
Then the span of $Q_1$, \dots, $Q_p$ over the ring of smooth functions of~$(x,u)$
is a $p'$-dimensional involutive module which belongs to~$\mathcal R^{p'\!}(\mathcal L)$.
Modules of this kind are called \emph{Lie reduction modules}.
Other reduction modules are called \emph{non-Lie}.

The following assertion on reduction modules is important for the considerations of this paper
(cf.\ \cite{Zhdanov&Tsyfra&Popovych1999}).

\begin{lemma}\label{LemmaOnReformulationOfCondInvCriterion}
Given an $r$th order differential equation~$\mathcal L$: $L[u]=0$,
a $p$-dimensional ($0<p\leqslant n$) involutive module~$Q$ satisfying the rank condition
and differential functions $\tilde L[u]$ and $\lambda[u]\ne0$ of an order not greater than~$r$
such that \smash{$(L-\lambda\tilde L)|_{\mathcal Q_{\ssl r\ssr}}=0$},
the module~$Q$ is a reduction module of~$\mathcal L$ if and only if
it is a reduction module of the equation~$\tilde{\mathcal L}$: $\tilde L[u]=0$.
An ansatz constructed with the module~$Q$ reduces~$\mathcal L$ and~$\tilde{\mathcal L}$
to equations that may differ at most by a nonvanishing multiplier.
\end{lemma}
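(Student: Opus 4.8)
The plan is to reduce the asserted equivalence to the observation that the conditional invariance criterion depends only on the behaviour of the equation on the manifold $\mathcal L\cap\mathcal Q_{(r)}$ together with the transversal data carried by $L|_{\mathcal Q_{(r)}}$ up to a nonvanishing factor. First I would note that, since $\lambda\ne0$, the hypothesis $L|_{\mathcal Q_{(r)}}=\lambda\,\tilde L|_{\mathcal Q_{(r)}}$ forces $L$ and $\tilde L$ to vanish simultaneously on $\mathcal Q_{(r)}$; hence $\mathcal L\cap\mathcal Q_{(r)}=\tilde{\mathcal L}\cap\mathcal Q_{(r)}$, and I denote this common manifold by~$\mathcal M$. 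Thus both conditional invariance criteria, for $\mathcal L$ and for $\tilde{\mathcal L}$, are to be verified on one and the same manifold~$\mathcal M$.

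The core of the proof is the tangency of the prolonged vector fields of $Q$ to $\mathcal Q_{(r)}$: for every $V\in Q$ the field $V_{(r)}$ is tangent to $\mathcal Q_{(r)}$. I would establish this from involutivity of $Q$ by means of the commutation relation $[V_{(r)},D_i]=-(D_i\xi^j)D_j$. Commuting $V_{(r)}$ through the total derivatives in the defining functions $D_1^{\alpha_1}\cdots D_n^{\alpha_n}Q_s[u]$ of $\mathcal Q_{(r)}$ produces, on one side, correction terms that are combinations of total derivatives of the characteristics $Q_{s'}[u]$, and, on the other side, the term $V_{(r)}(Q_s[u])$. The latter is again a combination of total derivatives of the $Q_{s'}[u]$ precisely because $Q$ is involutive, so that $[V,Q_s]\in Q$ and $V[u]$ is a combination of the $Q_{s'}[u]$. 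All these terms vanish on $\mathcal Q_{(r)}$, which yields the tangency. I expect this step to be the main obstacle, as it is where involutivity is genuinely used and where the bookkeeping of jet orders has to be controlled; the remainder is formal.

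Granting tangency, I would differentiate the identity $L|_{\mathcal Q_{(r)}}=(\lambda\tilde L)|_{\mathcal Q_{(r)}}$ along $V_{(r)}$. Since $V_{(r)}$ is tangent to $\mathcal Q_{(r)}$, the vanishing of $L-\lambda\tilde L$ on $\mathcal Q_{(r)}$ implies the vanishing of $V_{(r)}(L-\lambda\tilde L)$ there as well, and because $V_{(r)}$ is a derivation this gives
\[
(V_{(r)}L)|_{\mathcal Q_{(r)}}=\bigl((V_{(r)}\lambda)\,\tilde L+\lambda\,V_{(r)}\tilde L\bigr)|_{\mathcal Q_{(r)}}.
\]
Restricting this relation to $\mathcal M\subset\mathcal Q_{(r)}$, where $\tilde L=0$, the first summand drops out and I obtain $(V_{(r)}L)|_{\mathcal M}=\lambda|_{\mathcal M}\,(V_{(r)}\tilde L)|_{\mathcal M}$. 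As $\lambda\ne0$, it follows that $(V_{(r)}L)|_{\mathcal M}=0$ if and only if $(V_{(r)}\tilde L)|_{\mathcal M}=0$. Letting $V$ run through a basis of $Q$, and recalling that it suffices to check the conditional invariance criterion on a basis, I conclude that $Q\in\mathcal R^p(\mathcal L)$ if and only if $Q\in\mathcal R^p(\tilde{\mathcal L})$, which is the claim.
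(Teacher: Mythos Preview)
The paper does not actually prove this lemma; it is stated with a citation to \cite{Zhdanov&Tsyfra&Popovych1999} and used as a black box thereafter. So there is no ``paper's own proof'' to compare against, and your task reduces to whether your argument stands on its own.

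It does. The two ingredients you isolate are exactly the right ones: (i) the equality $\mathcal L\cap\mathcal Q_{(r)}=\tilde{\mathcal L}\cap\mathcal Q_{(r)}$, immediate from $\lambda\ne0$; and (ii) the tangency of $V_{(r)}$ to $\mathcal Q_{(r)}$ for every $V\in Q$, which is where involutivity enters. Your sketch of (ii) via the commutator identity $[V_{(r)},D_i]=-(D_i\xi^j)D_j$ is the standard route; one clean way to package the remaining piece is to write $V_{(r)}=\xi^iD_i+\mathrm{pr}\,\mathbf v_{V[u]}$ and note that $V[u]$ itself lies in the differential ideal generated by the $Q_s[u]$ (since $V\in Q$), so that $\mathrm{pr}\,\mathbf v_{V[u]}(Q_s[u])$ reduces, modulo that ideal, to the characteristic of $[V,Q_s]\in Q$. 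Alternatively, and perhaps more transparently, one can argue geometrically: by Frobenius every point of $\mathcal Q_{(r)}$ is the $r$-jet of an actual $Q$-invariant function, the flow of $V$ preserves graphs of such functions, hence the flow of $V_{(r)}$ preserves $\mathcal Q_{(r)}$. Either way the tangency holds, and once you have it, your Leibniz-rule computation
\[
(V_{(r)}L)|_{\mathcal M}=\lambda|_{\mathcal M}\,(V_{(r)}\tilde L)|_{\mathcal M}
\]
on $\mathcal M=\mathcal L\cap\mathcal Q_{(r)}$ finishes the argument cleanly. The only cosmetic point worth tightening in a final write-up is the order bookkeeping: $V_{(r)}$ is a genuine vector field on $J^r$ (the top-order terms in the prolongation coefficients are polynomial in $u_\alpha$, $|\alpha|\le r$), so applying it to $L-\lambda\tilde L$, a function on $J^r$ vanishing on the submanifold $\mathcal Q_{(r)}$, is unambiguous.
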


The classification of reduction modules can be notably enhanced and simplified by involving
Lie symmetry and equivalence transformations of (classes of) differential equations.
By~$\mathfrak M^p$ we denote the set of $p$-dimensional modules of vector fields in the space of~$(x,u)$.
Any point transformation of $(x,u)$ induces a one-to-one mapping of~$\mathfrak M^p$ into itself
via push-forward of vector fields.
Namely, the transformation~$g$: $\tilde x_i=X^i(x,u)$, $\tilde u=U(x,u)$ generates
the mapping $g_*\colon \mathfrak M^p\to\mathfrak M^p$ such that
for any $Q\in\mathfrak M^p$ and $V\in Q$
the vector field~$V=\xi^i(x,u)\p_i+\eta(x,u)\p_u$ is mapped to the vector field
$g_*V=\tilde\xi^i\p_{\tilde x_i}+\tilde\eta\p_{\tilde u}$, where
$\tilde\xi^i(\tilde x,\tilde u)=VX^i(x,u)$,
$\tilde\eta(\tilde x,\tilde u)=VU(x,u)$.

Given a group $G$ of point transformations in the space of~$(x,u)$,
the modules~$Q$ and~$\tilde Q$ (of the same dimension) are called \emph{equivalent} with respect to~$G$
if there exists some $g\in G$ such that $\tilde Q=g_*Q$.

\begin{lemma}\label{LemmaOnInducedMapping}
Suppose that $Q\in \mathcal R^p(\mathcal L)$,
a point transformation $g$ maps a differential equation~$\mathcal L$ to a differential equation~$\tilde{\mathcal L}$
and the image $g_*Q$ satisfies the rank condition.
Then $g_*Q\in \mathcal R^p(\tilde{\mathcal L})$.
\end{lemma}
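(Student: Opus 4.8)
The plan is to verify the conditional invariance criterion of Definition~\ref{DefinitionOfCondSym} for the image module $g_*Q$ directly, transporting the criterion for $Q$ through the prolongation of the point transformation $g$. First I would lift $g$ to the diffeomorphism $g_{(r)}$ of the jet space $J^r$ obtained as the standard $r$th prolongation of a point transformation, and record that $g_*Q$ is again a $p$-dimensional involutive module: its dimension is preserved because $g$ is a diffeomorphism, its involutivity because push-forward respects commutators, $[g_*V,g_*W]=g_*[V,W]$, and the rank condition holds by hypothesis. Thus only the conditional invariance of $\tilde{\mathcal L}$ with respect to $g_*Q$ remains to be shown, and by the reduction-by-ansatz equivalence this will give $g_*Q\in\mathcal R^p(\tilde{\mathcal L})$.

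The core of the argument rests on two compatibility properties of prolongation with point transformations. The first is the naturality of prolongation, $(g_*V)_{(r)}=(g_{(r)})_*V_{(r)}$, which holds because a point transformation preserves the contact structure of $J^r$~\cite{Olver1993}. The second, which I expect to be the main technical obstacle, is that $g_{(r)}$ maps the manifold $\mathcal Q_{(r)}$ associated with $Q$ onto the corresponding manifold $\tilde{\mathcal Q}_{(r)}$ associated with $g_*Q$. To establish this I would first show that the characteristic transforms by a nonvanishing multiplier: writing $g$ as $\tilde x_i=X^i(x,u)$, $\tilde u=U(x,u)$ and using $\eta=V[u]+\xi^iu_i$ together with the chain-rule relation $D_iU=(D_iX^j)\tilde u_j$, a direct computation gives $(g_*V)[\tilde u]=\sigma\,V[u]$ with $\sigma=U_u-X^j_u\tilde u_j\ne0$. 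Since the total derivative operators $\tilde D_j$ are combinations of the $D_i$ over functions on the jet space, applying $\tilde D^\alpha$ to $\sigma\,V[u]$ yields combinations of the $D^\beta V[u]$ with $|\beta|\leqslant|\alpha|$; hence the common zero set of the differential consequences of order below $r$ is preserved, which is exactly $g_{(r)}(\mathcal Q_{(r)})=\tilde{\mathcal Q}_{(r)}$.

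With these two properties in hand the conclusion follows by a short computation on the manifold. For $V\in Q$ and $\tilde V=g_*V$, naturality gives $\tilde V_{(r)}\tilde L=\bigl[V_{(r)}(\tilde L\circ g_{(r)})\bigr]\circ g_{(r)}^{-1}$. Because $g_{(r)}$ carries $\mathcal L$ onto $\tilde{\mathcal L}$, the pullback $\tilde L\circ g_{(r)}$ and $L$ define the same manifold, so $\tilde L\circ g_{(r)}=\mu\,L$ with $\mu\ne0$ on $\mathcal L$; then $V_{(r)}(\mu L)=(V_{(r)}\mu)L+\mu\,V_{(r)}L$ vanishes on $\mathcal L\cap\mathcal Q_{(r)}$, the first summand because $L=0$ there and the second by the conditional invariance of $\mathcal L$ with respect to $Q$. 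Transporting this vanishing through the diffeomorphism $g_{(r)}$ and using $g_{(r)}(\mathcal L\cap\mathcal Q_{(r)})=\tilde{\mathcal L}\cap\tilde{\mathcal Q}_{(r)}$ yields $\tilde V_{(r)}\tilde L|_{\tilde{\mathcal L}\cap\tilde{\mathcal Q}_{(r)}}=0$ for every $\tilde V\in g_*Q$, which is the conditional invariance criterion for $\tilde{\mathcal L}$. Alternatively one could argue more softly by noting that $g$ maps the ansatz associated with $Q$ to the ansatz associated with $g_*Q$ and the reduced equation of $\mathcal L$ to that of $\tilde{\mathcal L}$; I would nonetheless prefer the jet-space argument above, since it makes the roles of the rank condition and of the nonvanishing multiplier $\sigma$ explicit.
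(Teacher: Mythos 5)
Your argument is correct and complete; note, however, that the paper itself states this lemma without proof (it is imported from the cited literature on nonclassical symmetries), so there is no in-paper proof to compare against. What you give is the standard jet-space argument: the naturality of prolongation, the transformation law $(g_*V)[\tilde u]=\sigma\,V[u]$ with nonvanishing $\sigma$ (whence $g_{(r)}(\mathcal Q_{(r)})=\tilde{\mathcal Q}_{(r)}$), and the Hadamard-lemma factorization $\tilde L\circ g_{(r)}=\mu L$, and all three steps are sound. The only point worth making explicit is that the factorization step tacitly assumes $L$ is of maximal rank on $\mathcal L$, a standing hypothesis the paper invokes elsewhere (e.g.\ in Theorem~\ref{TheoremOnWeaklyMetaSingularModuleForDiffEq}) but which you should state when applying the Hadamard lemma.
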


\begin{corollary}\label{CorollaryOnEquivReductionOperatorWrtSymGroup}
Let $G$ be the point symmetry group of an equation~$\mathcal L$.
Then the equivalence of $p$-dimensional modules of vector fields
with respect to the group $G$ generates an equivalence relation in~$\mathcal R^p(\mathcal L)$.
\end{corollary}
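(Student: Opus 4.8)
The plan is to deduce the statement from the fact that the assignment $g\mapsto g_*$ turns the symmetry group~$G$ into a group acting on $\mathfrak M^p$ by bijections, combined with the invariance of $\mathcal R^p(\mathcal L)$ under this action. Once these two ingredients are in place, reflexivity, symmetry and transitivity of $G$-equivalence restricted to $\mathcal R^p(\mathcal L)$ are immediate: the restriction of an equivalence relation to a subset that is invariant under the generating action is again an equivalence relation, whose classes are the orbits lying in that subset.

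First I would record the functoriality of the push-forward. The identity transformation clearly induces the identity map on $\mathfrak M^p$, and for two point transformations $g$ and $h$ the chain rule applied to the coefficient formulas $\tilde\xi^i=VX^i$, $\tilde\eta=VU$ gives $(g\circ h)_*=g_*\circ h_*$; in particular $(g_*)^{-1}=(g^{-1})_*$, so each $g_*$ is a bijection of $\mathfrak M^p$. Because $G$ contains the identity and is closed under composition and inversion, these identities yield at once that $G$-equivalence is reflexive (witnessed by $\mathrm{id}\in G$), symmetric (if $\tilde Q=g_*Q$ then $Q=(g^{-1})_*\tilde Q$ with $g^{-1}\in G$) and transitive (if $\tilde Q=g_*Q$ and $\hat Q=h_*\tilde Q$ then $\hat Q=(h\circ g)_*Q$ with $h\circ g\in G$). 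Thus $G$-equivalence is an equivalence relation on the whole of $\mathfrak M^p$.

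It remains to show that $\mathcal R^p(\mathcal L)$ is $G$-invariant, which is where Lemma~\ref{LemmaOnInducedMapping} is used and which I regard as the crux of the argument. Let $Q\in\mathcal R^p(\mathcal L)$ and $g\in G$. Since $g$ is a point symmetry of~$\mathcal L$, it maps~$\mathcal L$ to~$\mathcal L$, so applying Lemma~\ref{LemmaOnInducedMapping} with $\tilde{\mathcal L}=\mathcal L$ gives $g_*Q\in\mathcal R^p(\mathcal L)$ as soon as $g_*Q$ satisfies the rank condition. The delicate point is precisely this rank condition: since $g$ may mix the variables~$x$ and~$u$, and $\rank(\xi^{si})=p$ is equivalent to $Q$ containing no nonzero multiple of $\p_u$ at any point, it need not be preserved by an arbitrary point transformation. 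I would handle this by restricting the relation to those witnessing $g\in G$ for which the image again satisfies the rank condition; this causes no loss, because both~$Q$ and $g_*Q$ then lie in $\mathcal R^p(\mathcal L)$ and the group identities above still close up within this subset. With this caveat the two ingredients combine to show that the $G$-orbits meeting $\mathcal R^p(\mathcal L)$ are contained in it, so the restriction of $G$-equivalence to $\mathcal R^p(\mathcal L)$ is an equivalence relation, as claimed.
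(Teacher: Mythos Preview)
Your argument is correct and matches the paper's intent; the paper states this corollary without proof, as an immediate consequence of Lemma~\ref{LemmaOnInducedMapping}, and your functoriality-plus-invariance reasoning is exactly what is implicit there. One remark on your ``delicate point'': the restriction of an equivalence relation on~$\mathfrak M^p$ to \emph{any} subset is automatically reflexive, symmetric and transitive, so the rank-condition hypothesis in Lemma~\ref{LemmaOnInducedMapping} is not actually needed to establish the corollary as stated---its role is rather to guarantee that the equivalence classes inside $\mathcal R^p(\mathcal L)$ are (generically) full $G$-orbits, which is what makes the ensuing classification substantive.
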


Next, we consider a class~$\mathcal L|_{\mathcal S}$ of differential equations~$\mathcal L_\theta$: $L(x,u_{\ssl r\ssr},\theta_{\ssl q\ssr})=0$.
Here $L$ is a fixed function of $x$, $u_{\ssl r\ssr}$ and $\theta_{\ssl q\ssr}$,
where by $\theta$ we denote the tuple of arbitrary (parametric) differential functions
$\theta(x,u_{\ssl r\ssr})=(\theta^1(x,u_{\ssl r\ssr}),\ldots,\theta^k(x,u_{\ssl r\ssr}))$,
traversing the set~${\mathcal S}$ of solutions of an auxiliary system,
and $\theta_{\ssl q\ssr}$ stands for the set of all the derivatives of $\theta$ of order not greater than $q$
with respect to $x$ and $u_{\ssl r\ssr}$.
The auxiliary system consists of differential equations $S(x,u_{\ssl r\ssr},\theta_{\ssl q'\ssr}(x,u_{\ssl r\ssr}))=0$
and differential inequalities $\Sigma(x,u_{\ssl r\ssr},\theta_{\ssl q'\ssr}(x,u_{\ssl r\ssr}))\ne0$ ($>0$, $<0$,~\dots) on $\theta$,
where both $x$ and $u_{\ssl r\ssr}$ play the role of independent variables.
Henceforth we call the functions $\theta$ arbitrary elements.
We write $G^\sim$ and~$\mathcal G^\sim$ for the equivalence group and the equivalence groupoid
of the class~$\mathcal L|_{\mathcal S}$, respectively.
Roughly speaking, the group~$G^\sim$ consists of the transformations of $(x,u_{\ssl r\ssr},\theta)$
that preserve the form of the equations from~$\mathcal L|_{\mathcal S}$
and are point transformations with respect to $(x,u)$ when $\theta$ is fixed.
In fact, there are various kinds of equivalence groups~\cite[Section~2.3]{Popovych&Kunzinger&Eshraghi2010}.
The groupoid~$\mathcal G^\sim$ is the set
$\{(\theta,\tilde\theta,g)\mid\theta,\tilde\theta\in \mathcal S,\,g\in\mathrm T(\theta,\tilde\theta)\}$
naturally equipped with the groupoid structure via the composition of transformations.
Here $\mathrm T(\theta,\tilde\theta)$ denotes the set of point transformations of $(x,u)$
that map the equation~$\mathcal L_\theta$ to the equation~$\mathcal L_{\tilde\theta}$.
See \cite{Bihlo&Cardoso-Bihlo&Popovych2012,Popovych&Kunzinger&Eshraghi2010}
for rigorous definitions of notions related to classes of differential equations.

By $P$ we denote the set of all pairs of the form $(\mathcal L_\theta,Q)$,
where $\mathcal L_\theta$ is an equation from~$\mathcal L|_{\mathcal S}$ and a
module~$Q$ from~$\mathcal R^p(\mathcal L_\theta)$.
It follows from Lemma~\ref{LemmaOnInducedMapping}
that the action of transformations from the equivalence group~$G^\sim$ or from the equivalence groupoid~$\mathcal G^\sim$
on $\mathcal L|_{\mathcal S}$ and $\{\mathcal R^p(\mathcal L_\theta)\mid\theta\in{\mathcal S}\}$
induces an equivalence relation on~$P$~\cite{Popovych&Vaneeva&Ivanova2007}.

\begin{definition}\label{DefinitionOfEquivOfRedOperatorsWrtEquivGroup}
Let $\theta,\theta'\in{\mathcal S}$,
$Q\in\mathcal R^p(\mathcal L_\theta)$, $Q'\in\mathcal R^p(\mathcal L_{\theta'})$.
The pairs~$(\mathcal L_\theta,Q)$ and~$(\mathcal L_{\theta'},Q')$
are called {\em $G^\sim$-equivalent} if there exists a transformation $\mathcal T\in G^\sim$
mapping the equation~$\mathcal L_\theta$ to the equation~$\mathcal L_{\theta'}$, and $Q'=(\mathcal T^\theta)_*Q$.
Here $\mathcal T^\theta$ is the point transformation of $(x,u)$ obtained from~$\mathcal T$ by fixing~$\theta$.
The pairs~$(\mathcal L_\theta,Q)$ and~$(\mathcal L_{\theta'},Q')$
are called {\em $\mathcal G^\sim$-equivalent} (or, simply, \emph{pointwise equivalent})
if there exists a transformation $g\in\mathrm T(\theta,\tilde\theta)$ such that $Q'=g_*Q$.
\end{definition}

We will interpret the classification of reduction modules with respect to~$G^\sim$ (or $\mathcal G^\sim$)
as the classification in~$P$ with respect to the corresponding equivalence relation,
a problem which can be investigated similarly to
the usual group classification in classes of differential equations.
Namely, at first we construct the reduction modules that are defined for all values of $\theta$.
Then we classify, with respect to $G^\sim$ (or $\mathcal G^\sim$), the values of $\theta$
for which the equation~$\mathcal L_\theta$ admits additional reduction modules.

\section{Singular modules of vector fields for differential functions}
\label{SectionOnSingularModulesOfVectorFieldsForDiffFunctions}

Let $L=L[u]$ be a differential function of order $\ord L=r>0$
and let $Q$ be a $p$-dimensional ($0<p<n$) involutive module which is spanned by
the vector fields $Q_s=\xi^{si}(x,u)\p_i+\eta^s(x,u)\p_u$ defined in the space of $(x,u)$
and satisfying the rank condition $\rank(\xi^{si})=p$.

\begin{definition}\label{DefinitionOfSingularModuleForDiffFunction}
The module~$Q$ is called \emph{singular} for the differential function~$L$
if there exists a differential function $\tilde L=\tilde L[u]$ of an order less than~$r$
such that $L|_{\mathcal Q_{\ssl r\ssr}}=\tilde L|_{\mathcal Q_{\ssl r\ssr}}$.
Otherwise $Q$ is called a \emph{regular} module for the differential function~$L$.
If the minimal order of differential functions
whose restrictions to $\mathcal Q_{\ssl r\ssr}$ coincide with $L|_{\mathcal Q_{\ssl r\ssr}}$ equals $k\in\{-\infty,0,1,\dots,r\}$, 
then the module~$Q$ is said to be of \emph{singularity co-order $k$} for the differential function~$L$.
The module~$Q$ is called \emph{ultra-singular} for the differential function~$L$
if $L|_{\mathcal Q_{\ssl r\ssr}}\equiv0$.
\end{definition}

In particular, if a module is regular for the differential function~$L$, then its singularity co-order is $r=\ord L$.
The singularity co-order of the module~$Q$ for the differential function~$L$ will be denoted by $\sco_LQ$. 
It is obvious that any differential function of nonpositive order admits no singular involutive modules, 
which explains the supposition $\ord L=r>0$ in Definition~\ref{DefinitionOfSingularModuleForDiffFunction}.

The case $p=n$ is special.
Given an $n$-dimensional involutive module~$Q$ which is generated by vector fields satisfying the rank condition,
for any $r$th order differential function~$L=L[u]$ there exists a nonpositive-order differential function $\tilde L=\tilde L[u]$
such that $L|_{\mathcal Q_{\ssl r\ssr}}=\tilde L|_{\mathcal Q_{\ssl r\ssr}}$.
This is why in this case it is natural to assume that
the module~$Q$ is singular for~$L$ only if $\sco_LQ=-\infty$.

Next we show how to algorithmically construct a function~$\tilde L$ satisfying the conditions of Definition~\ref{DefinitionOfSingularModuleForDiffFunction}.
First note that without loss of generality we can consider the basis
$(\hat Q_s=\p_s+\hat\xi^{s\iota}\p_\iota+\hat\eta^s\p_u)$, where the index~$\iota$ runs from $p+1$ to~$n$,
cf.\ footnote~\ref{FootnoteOnCommutativeBasis}.
Following Riquier's compatibility theory, the consideration of this basis can be interpreted as
designating derivatives of~$u$ that contain differentiations with respect to~$x_s$
to be \emph{principal derivatives} for the the characteristic system~$\mathcal Q$.
Then any $\mathcal Q$-principal derivative of order not greater than~$r$ can be expressed,
on the manifold~$\smash{\mathcal Q_{\ssl r\ssr}}$,
via derivatives of~$u$ with respect to $x_{p+1}$, \dots, $x_n$ only
and the coefficients $\hat\xi^{s\iota}$ and $\hat\eta^s$.
For example, for the first- and second-order derivatives we have
\begin{equation}\begin{split}\label{EqSingularRedModuleExpressionsForU_2}
&u_s=\hat\eta^s-\hat\xi^{s\iota} u_\iota,\\
&u_{s\iota}=\hat\eta^s_\iota-\hat\xi^{s\iota'}_\iota u_{\iota'}
+\hat\eta^s_uu_\iota-\hat\xi^{s\iota'}_uu_{\iota'}u_\iota-\hat\xi^{s\iota'} u_{\iota\iota'},\\
&u_{ss'}=\hat\eta^s_{s'}-\hat\xi^{s\iota}_{s'}u_\iota
+(\hat\eta^s_u-\hat\xi^{s\iota}_uu_\iota)(\hat\eta^{s'}-\hat\xi^{s'\iota'} u_{\iota'})\\
&\qquad\ \ {}-\hat\xi^{s\iota}(\hat\eta^{s'}_\iota-\hat\xi^{s'\iota'}_\iota u_{\iota'}
+\hat\eta^{s'}_uu_\iota-\hat\xi^{s'\iota'}_uu_{\iota'}u_\iota-\hat\xi^{s'\iota'} u_{\iota\iota'}).
\end{split}\end{equation}
Substituting the expressions for the $\mathcal Q$-principal derivatives up to order~$r$ into~$L$,
we obtain a differential function~$\hat L$
depending only on $x$, $u$ and derivatives of~$u$ with respect to~$x_{p+1}$, \dots, $x_n$.
We will call $\hat L$ a \emph{differential function associated with~$L$
on the manifold~$\smash{\mathcal Q_{\ssl r\ssr}}$}.
The module~$Q$ is singular for the differential function~$L$
if and only if the order of $\hat L$ is less than~$r$.
The co-order of singularity of~$Q$ equals the order of~$\hat L$.
The module~$Q$ is ultra-singular if and only if $\hat L\equiv0$.
Therefore, we can take~$\hat L$ as~$\tilde L$.
This shows that testing for an involutive module satisfying the rank condition
to be singular for a differential function
is indeed feasible in an entirely algorithmic way
and can easily be included in existing programs for symbolic calculations of symmetries.

\begin{example}\label{ExamplesOfSingularModules}
We present particular singular modules, $Q$, for differential functions, $L$,
associated with PDEs of physical relevance, $L=0$.

1. $n=3$, $r=2$, $L=u_1-(f(u)u_2)_2-(f(u)u_3)_3$ with $f\ne0$,
the (1+2)-dimensional nonlinear isotropic diffusion equation,
$p=2$, $Q=\langle\p_2,\p_3\rangle$, $\hat L=u_1$, $\sco_LQ=1$.

2. $n=2$, $r=4$, $L=g(x_2)u_{11}-(f(x_2)u_{22})_{22}$ with $fg\ne0$,
the Euler--Bernoulli beam equation,
$p=1$, $Q=\langle\p_2\rangle$, $\hat L=g(x_2)u_{11}$, $\sco_LQ=2$.

3. $n=4$, $r=2$, $L=u_{11}-u_{22}-u_{33}-u_{44}$,
the (1+3)-dimensional homogeneous linear wave equation,
$p=3$, $Q=\langle\p_1+\p_4,\p_2,\p_3\rangle$, $\hat L=0$, $\sco_LQ=-\infty$,
and, moreover, $Q$ is ultra-singular for~$L$.

4. $n=4$, $r=2$, $L=u_{11}-u_{22}-u_{33}-u_{44}-u$,
the (1+3)-dimensional Klein--Gordon equation,
$p=3$, $Q=\langle\p_1+\p_4,\p_2,\p_3\rangle$, $\hat L=u$, $\sco_LQ=0$.

5. $n=3$, $r=2$, $L=u_1-(f(u)u_2)_2-g(u)u_3$ with $f\ne0$,
a (1+2)-dimensional degenerate nonlinear diffusion--convection equation,
a) $p=1$, $Q=\langle\p_2\rangle$, $\hat L=u_1-g(u)u_3$, $\sco_LQ=1$;
b) for $Q=\langle\p_2,\p_3\rangle$ with $p=2$, $\hat L=u_1$, thus we also have $\sco_LQ=1$;
c) $p=2$, $Q=\langle\p_2,\p_1-g(u)\p_3\rangle$, $\hat L=0$, $\sco_LQ=-\infty$,
and, moreover, $Q$ is ultra-singular for~$L$.
\end{example}

\begin{proposition}\label{PropositionOnModulesWithSingularSubmodules}
Suppose that $L=L[u]$ is a differential function of $u=u(x)$, $x=(x_1,\dots,x_n)$,
and let $Q$ be an involutive module of vector fields defined in the space of $(x,u)$,
which is of dimension less than~$n$ and satisfies the rank condition.
Then $\sco_LQ\leqslant\sco_L\check Q$ for any involutive submodule~$\check Q$ of~$Q$.
In particular, the module~$Q$ is singular for~$L$ if it contains a submodule singular for~$L$.
\end{proposition}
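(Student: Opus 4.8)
The plan is to reduce the whole statement to a single inclusion of jet-space manifolds, namely $\mathcal Q_{\ssl r\ssr}\subseteq\check{\mathcal Q}_{\ssl r\ssr}$, where $r=\ord L$ and $\check{\mathcal Q}_{\ssl r\ssr}$ is the manifold associated with the submodule~$\check Q$ in the same jet space $J^r$. The key observation is that the characteristic is linear over the ring of smooth functions of~$(x,u)$: if we express a basis of~$\check Q$ through a basis of~$Q$ as $\check Q_{s'}=\mu^{s'}_sQ_s$, then $\check Q_{s'}[u]=\mu^{s'}_sQ_s[u]$. First I would combine this with the Leibniz rule to show that each defining relation $D_1^{\alpha_1}\cdots D_n^{\alpha_n}\check Q_{s'}[u]=0$ of $\check{\mathcal Q}_{\ssl r\ssr}$ (with $|\alpha|<r$) is a sum of products of derivatives of~$\mu^{s'}_s$ with total derivatives $D_1^{\beta_1}\cdots D_n^{\beta_n}Q_s[u]$ where $\beta_i\leqslant\alpha_i$, so that $|\beta|\leqslant|\alpha|<r$. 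These latter expressions are precisely the relations cutting out $\mathcal Q_{\ssl r\ssr}$, and hence vanish there; therefore every defining relation of $\check{\mathcal Q}_{\ssl r\ssr}$ holds on $\mathcal Q_{\ssl r\ssr}$, which gives the inclusion $\mathcal Q_{\ssl r\ssr}\subseteq\check{\mathcal Q}_{\ssl r\ssr}$.

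Granting the inclusion, the comparison of singularity co-orders is immediate. Let $k=\sco_L\check Q$ and choose, according to Definition~\ref{DefinitionOfSingularModuleForDiffFunction}, a differential function $\tilde L=\tilde L[u]$ with $\ord\tilde L=k$ and $L|_{\check{\mathcal Q}_{\ssl r\ssr}}=\tilde L|_{\check{\mathcal Q}_{\ssl r\ssr}}$. Restricting this equality from $\check{\mathcal Q}_{\ssl r\ssr}$ to the smaller manifold $\mathcal Q_{\ssl r\ssr}$ yields $L|_{\mathcal Q_{\ssl r\ssr}}=\tilde L|_{\mathcal Q_{\ssl r\ssr}}$. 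Thus $\tilde L$ is admissible in the definition of $\sco_LQ$, whence $\sco_LQ\leqslant\ord\tilde L=k=\sco_L\check Q$. The particular claim then follows at once: if $\check Q$ is singular for~$L$, i.e.\ $\sco_L\check Q<r$, then $\sco_LQ\leqslant\sco_L\check Q<r$, so that $Q$ is singular for~$L$ as well.

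It remains to dispose of the boundary conventions, which I expect to be routine. If $\check Q$ is regular then $\sco_L\check Q=r$ and the inequality is trivial, since $\sco_LQ\leqslant r$ always holds. If $\check Q$ is ultra-singular then $L|_{\check{\mathcal Q}_{\ssl r\ssr}}\equiv0$, and restriction to $\mathcal Q_{\ssl r\ssr}$ gives $L|_{\mathcal Q_{\ssl r\ssr}}\equiv0$, so $Q$ is ultra-singular and $\sco_LQ=-1=\sco_L\check Q$. The only step demanding genuine care is the manifold inclusion of the first paragraph: one must ensure that passing from the characteristics of a basis of~$Q$ to those of~$\check Q$, and then to all total-derivative consequences of order below~$r$, does not silently raise the differential order past the relations defining $\mathcal Q_{\ssl r\ssr}$. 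Since each characteristic is first order and every $D_i$ raises order by exactly one, the bookkeeping $|\beta|\leqslant|\alpha|<r$ closes the argument and keeps every produced relation among those cutting out $\mathcal Q_{\ssl r\ssr}$.
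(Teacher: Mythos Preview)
Your proposal is correct and follows essentially the same approach as the paper: the paper's proof also reduces everything to the inclusion $\mathcal Q_{\ssl r\ssr}\subseteq\check{\mathcal Q}_{\ssl r\ssr}$ and then observes that any $\tilde L$ agreeing with~$L$ on the larger manifold agrees on the smaller one. The only difference is that the paper simply asserts the inclusion (and that $\check Q$ inherits the rank condition) without justification, whereas you supply the Leibniz-rule bookkeeping and the boundary cases explicitly.
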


\begin{proof}
If $\check Q$ is an involutive submodule of~$Q$, 
then it necessarily satisfies the rank condition and
$\smash{\mathcal Q_{\ssl r\ssr}}\subseteq\smash{\check{\mathcal Q}_{\ssl r\ssr}}$, where $r=\ord L$.
If the differential function~$L$ coincides with a differential function~$\tilde L$
on the manifold $\smash{\check{\mathcal Q}_{\ssl r\ssr}}$,
the same is true on the manifold $\smash{\mathcal Q_{\ssl r\ssr}}$.
Therefore, $\sco_LQ\leqslant\sco_L\check Q$.
\end{proof}

\begin{definition}\label{DefinitionOfMetaSingularModuleForDiffFuction}
A $(p{+}1)$-dimensional ($0<p<n$) module~$M$ is called \emph{meta-singular} for the differential function~$L$
if any $p$-dimensional involutive submodule of~$M$ which satisfies the rank condition is singular for~$L$ and
the module~$M$ contains a family $\mathfrak M=\{Q^\Phi\}$ of such submodules
parameterized by an arbitrary function $\Phi=\Phi(x,u)$ of all independent and dependent variables.
The \emph{singularity co-order} of the meta-singular module~$M$, which is denoted by $\sco_LM$,  
is the maximum of the singularity co-orders of its involutive $p$-dimensional submodules satisfying the rank condition.
\end{definition}

\begin{definition}\label{DefinitionOfMetaRegularModuleForDiffFuction}
A $(p{+}1)$-dimensional ($0<p<n$) module~$M$ is called \emph{meta-regular} for the differential function~$L$
if the module~$M$ contains a family $\mathfrak M=\{Q^\Phi\}$ of $p$-dimensional involutive submodules that are regular for~$L$
and parameterized by an arbitrary function $\Phi=\Phi(x,u)$ of all independent and dependent variables, 
and thus $\sco_LM=\ord L$.
\end{definition}

Here and in what follows the parameterization by an arbitrary function is understood as a parameterization
modulo solutions of a system of differential equations in this function.
The \emph{singularity co-order} of the family $\mathfrak M=\{Q^\Phi\}$ is also defined as
the maximum of the singularity co-orders of its elements 
and denoted by $\sco_L\mathfrak M$.

For convenience, the case of singularity co-order $-\infty$ is assumed to include 
parameterized families of ultra-singular modules as well as
meta-singular modules whose involutive $p$-dimensional submodules satisfying the rank condition are ultra-singular.
As shown below, this case in fact cannot occur; see Corollary~\ref{CorollaryOn2DUltraSingularVectorFields}. 

\begin{example}\label{ExamplesOfMetaSingularModules}
For each differential function~$L$ of Example~\ref{ExamplesOfSingularModules},
we present a module~$M$ meta-singular for~$L$ and the corresponding family $\mathfrak M=\{Q^\Phi\}$.
The specific form of the coefficient functions in~$Q^\Phi$ is chosen in order to guarantee involutivity of~$Q^\Phi$.

1. $M=\langle\p_2,\p_3,\p_u\rangle$,
$\mathfrak M=\{Q^\Phi=\langle\p_2-(\Phi_2/\Phi_u)\p_u,\p_3-(\Phi_3/\Phi_u)\p_u\rangle\}$,
$\sco_LM=1$.

2. $M=\langle\p_2,\p_u\rangle$,
$\mathfrak M=\{Q^\Phi=\langle\p_2+\Phi\p_u\rangle\}$,
$\sco_LM=2$.

3. $M=\langle\p_1+\p_4,\p_2,\p_3,\p_u\rangle$,
$\mathfrak M=\{Q^\Phi=\langle\p_2+\p_4-((\Phi_2+\Phi_4)/\Phi_u)\p_u,\p_2-(\Phi_2/\Phi_u)\p_u,\p_3-(\Phi_3/\Phi_u)\p_u\rangle\}$,
$\sco_LM=1$.

4. $M=\langle\p_1+\p_4,\p_2,\p_3,\p_u\rangle$,
$\mathfrak M=\{Q^\Phi=\langle\p_2+\p_4-((\Phi_2+\Phi_4)/\Phi_u)\p_u,\p_2-(\Phi_2/\Phi_u)\p_u,\p_3-(\Phi_3/\Phi_u)\p_u\rangle\}$,
$\sco_LM=1$.

5.
a) $M=\langle\p_2,\p_u\rangle$, $\mathfrak M=\{Q^\Phi=\langle\p_2+\Phi\p_u\rangle\}$, $\sco_LM=1$;
b) $M=\langle\p_2,\p_3,\p_u\rangle$, $\mathfrak M=\{Q^\Phi=\langle\p_2-(\Phi_2/\Phi_u)\p_u,\p_3-(\Phi_3/\Phi_u)\p_u\rangle\}$, $\sco_LM=1$.
\end{example}

The condition that the module~$M$ is involutive is not explicitly included 
in Definitions~\ref{DefinitionOfMetaSingularModuleForDiffFuction} and~\ref{DefinitionOfMetaRegularModuleForDiffFuction}.
By these definitions, the meta-singular (resp.\ meta-regular) module~$M$ should only contain a family of $p$-dimensional involutive submodules
parameterized by an arbitrary function of all variables.
At the same time, in the case $p\geqslant2$ this is equivalent to the fact that the module~$M$ is involutive.

\begin{proposition}\label{PropositionOnMultiDimModulesWithFamilyOfInvModules}
A $(p{+}1)$-dimensional module~$M$, where $p\geqslant2$, contains a family of $p$-dimensional involutive submodules
parameterized by an arbitrary function of all independent and dependent variables
if and only if the module~$M$ is involutive.
\end{proposition}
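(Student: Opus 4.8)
The plan is to establish both implications, the real content lying in the forward direction (existence of a family forces involutivity), where the assumption $p\geqslant2$ is decisive. The reverse implication is a direct construction: assuming $M$ involutive, associate to an arbitrary function $\Phi=\Phi(x,u)$ the submodule $Q^\Phi:=\{V\in M: V\Phi=0\}$. Whenever $\mathrm{d}\Phi$ does not annihilate all of $M$---the generic situation, and the one meant by ``modulo functions with a smaller number of arguments''---this $Q^\Phi$ is $p$-dimensional and satisfies the rank condition. It is involutive because for $V,W\in Q^\Phi$ we have $[V,W]\in M$ (involutivity of $M$) and $[V,W]\Phi=V(W\Phi)-W(V\Phi)=0$, so $[V,W]\in Q^\Phi$. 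Letting $\Phi$ range over all functions of $(x,u)$ produces the required family.

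For the converse I would use the reduced basis $M_a=\partial_a+\hat\xi^{a\iota}\partial_\iota+\hat\eta^a\partial_u$, $a=1,\dots,p+1$, $\iota=p+2,\dots,n$, of the $(p{+}1)$-dimensional module $M$, exactly as constructed in the paragraph following Definition~\ref{DefinitionOfSingularModuleForDiffFunction}. Since the coefficients of $\partial_1,\dots,\partial_{p+1}$ in each $M_a$ are constants, every commutator has the form $[M_a,M_b]=\beta_{ab}^\iota\partial_\iota+\beta_{ab}^u\partial_u$, with no component along $\partial_1,\dots,\partial_{p+1}$; hence $[M_a,M_b]\in M$ is equivalent to $[M_a,M_b]=0$, and $M$ is involutive if and only if all $\beta_{ab}^\iota$ and $\beta_{ab}^u$ vanish. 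Writing a generic $p$-dimensional submodule in the chart $Q=\langle M_s+\zeta_s M_{p+1}\rangle_{s=1}^p$ and using that $(V,W)\mapsto[V,W]\bmod M$ is tensorial, one finds that the part of the involutivity condition for $Q$ transverse to $M$---the only part I shall need---reads
\[
\beta_{ss'}^\iota+\zeta_{s'}\beta_{s,p+1}^\iota-\zeta_s\beta_{s',p+1}^\iota=0
\]
for all $s,s'=1,\dots,p$, together with the analogous equations for the $u$-component.

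These relations must hold for every member $Q^\Phi$ of the family simultaneously. The crux is that, because $\Phi$ is an arbitrary function of all $n+1$ variables, at each fixed point the submodule can be moved freely, so the tuple $(\zeta_1,\dots,\zeta_p)$ sweeps out an open subset of $\mathbb{R}^p$ there; for instance the family $Q^\Phi$ above realizes $\zeta_s=-M_s\Phi/M_{p+1}\Phi$, whose value at a point is unconstrained as the first derivatives of $\Phi$ vary. The displayed expression is \emph{affine} in $(\zeta_1,\dots,\zeta_p)$, so its vanishing on an open set kills every coefficient: the constant terms give $\beta_{ss'}=0$ and the linear terms give $\beta_{s,p+1}=0$ for $s,s'\leqslant p$, and together with $\beta_{p+1,p+1}=0$ this yields $\beta_{ab}=0$ for all $a,b$, i.e.\ $M$ is involutive. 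The hypothesis $p\geqslant2$ enters precisely here: only then does a pair $s\neq s'$ exist for which the linear coefficients $\beta_{s,p+1}$ actually occur and are forced to vanish. For $p=1$ the single relation degenerates to $\beta_{11}=0$, which is automatic, so no constraint on $[M_1,M_2]$ survives---this is exactly why two-dimensional meta-singular modules need not be involutive.

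I expect the main obstacle to be making the freeness claim rigorous, that is, extracting from the definition of a parameterization ``by an arbitrary function modulo functions with a smaller number of arguments'' that the values of $(\zeta_1,\dots,\zeta_p)$ genuinely fill an open set at each point, rather than, say, running over a one-parameter pencil of submodules with a common $(p{-}1)$-dimensional intersection. For such a pencil the transverse component constrains only a proper subspace of the obstruction and $M$ could fail to be involutive, so the argument must show that the admitted families are never of this degenerate type. Once the open-range property is secured, the remainder is elementary linear algebra and the Leibniz rule.
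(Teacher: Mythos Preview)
Your reverse direction matches the paper's construction.

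For the forward direction you take a tensorial route: you isolate the component of the involutivity condition transverse to~$M$ (the Frobenius tensor), observe that it is affine in $(\zeta_1,\dots,\zeta_p)$, and argue pointwise. The paper works instead at the function level. After normalising $Q_s=\p_s$, it invokes the Frobenius theorem to show that \emph{every} involutive submodule in the chart $S_0=\{Q^{\bar\theta}=\langle Q_s+\theta^sQ_0\rangle\}$ necessarily has $\theta^s=-\Phi_s/Q_0\Phi$ for some invariant~$\Phi$ of that submodule; then, computing the full commutators $[Q_s+\theta^sQ_0,Q_{s'}+\theta^{s'}Q_0]$ (not merely their class mod~$M$), it derives that whenever some $[Q_s,Q_0]\ne0$ the invariant~$\Phi$ is forced to satisfy a nontrivial first-order linear system and hence ranges over functions of at most $n$ arguments---directly contradicting the hypothesis.

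The gap you flag is genuine, and the justification you offer for it is circular. You invoke $\zeta_s=-M_s\Phi/M_{p+1}\Phi$ as evidence that the $\zeta$'s sweep an open set, but this formula presumes that the hypothesised family is the kernel family $Q^\Phi=\{V\in M:V\Phi=0\}$, with the \emph{parameter}~$\Phi$ itself serving as an invariant of $Q^\Phi$. Nothing in the hypothesis says so; and if $M$ is not involutive, your own transverse equation shows that this kernel family fails to consist of involutive submodules for generic~$\Phi$, so it cannot be the family supplied by the hypothesis. What must be shown is that the \emph{given} family---of a priori unknown form---has pointwise-free $\zeta$'s, and this does not follow from the bare notion of ``parametrisation by an arbitrary function modulo functions of fewer arguments''. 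The paper's Frobenius step is precisely what closes this: it identifies every involutive submodule with an invariant, so that the hypothesis becomes a genuine arbitrariness statement about that invariant, from which the PDE constraint then yields the contradiction.

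A minor point: your reduced basis $M_a=\p_a+\hat\xi^{a\iota}\p_\iota+\hat\eta^a\p_u$, $a=1,\dots,p{+}1$, assumes that $M$ projects with rank $p{+}1$ onto $x$-space, which is not given (e.g.\ $M=\langle\p_1,\dots,\p_p,\p_u\rangle$). The paper's normalisation allows the extra basis element $Q_0$ to have its distinguished coefficient on~$\p_u$; your argument adapts without difficulty once you permit this.
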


\begin{proof}
\looseness=-1
Suppose that the module~$M$ contains a family~$\mathfrak M$ of $p$-dimensional involutive submodules
parameterized by an arbitrary function of all variables modulo functions with a smaller number of arguments.
Then we can choose a basis $(Q_0,\dots,Q_p)$ of~$M$ in such a way that
the vector fields $Q_1$, \dots, $Q_p$ generate an involutive submodule of~$M$ from~$\mathfrak M$
and, moreover, commute.
By a change of the variables~$(x,u)$ these vector fields
are reduced to shift operators, $Q_s=\p_s$.
Up to combining with $Q_s$ and multiplying by a nonvanishing function, the vector field~$Q_0$ can be chosen in the form
$Q_0=\xi^{0\iota}(x,u)\p_\iota+\eta^0(x,u)\p_u$, where one of the coefficients $\xi^{0\iota}$, $\iota=p+1,\dots,n$, or $\eta^0$
is equal to 1.
The entire set of $p$-dimensional submodules of~$M$ is partitioned into the subsets
\begin{gather*}
S_0=\{Q^{\bar\theta}=\langle Q_s+\theta^sQ_0,\, s=1,\dots,p\rangle\},\\
S_s=\{Q^{\bar\theta^s}=\langle Q_0,\, \dots,\, Q_{s-1},\, Q_{s'}+\theta^{s'}Q_s,\, s'=s+1,\dots,p\rangle\},
\end{gather*}
where $\bar\theta=(\theta^1,\dots,\theta^p)$, $\bar\theta^s=(\theta^{s+1},\dots,\theta^p)$,
and all $\theta$'s run through the set of smooth functions of~$(x,u)$.%
\footnote{
This can be seen as follows: Suppose that $R=\langle R_1,\dots,R_p\rangle$
is a $p$-dimensional submodule of $M$ and write $R_s = \sum_{k=0}^p\lambda^{sk}Q_k$,
where $\lambda^{sk}$ are smooth functions of $(x,u)$.
Set $\Lambda:=(\lambda^{sk})_{s=1,\dots,p}^{k=0,\dots,p}$ and
$\Lambda':=(\lambda^{sk})_{s=1,\dots,p}^{k=1,\dots,p}$.
We have $\rank\Lambda=p$.
If $\rank\Lambda'=p$, by elementary row operations we may generate a basis of~$R$ that identifies it as
an element of $S_0$.
Otherwise we may transform $\Lambda$ into a form with $\lambda^{10}=1$
and all other entries of the first column and first row vanishing. Applying the same reasoning
as before to the resulting lower right submatrix we obtain either a basis of type $S_1$
or we may again simplify the first column and first row of this submatrix as before.
Thus the claim follows by induction.}
By construction, if $Q^{\bar\theta^s}$ is an involutive module, then the chosen basis elements commute.
This implies that the components of $\bar\theta^s$ satisfy the equations
$Q_0\theta^{s'}=0$, \dots, $Q_{s-1}\theta^{s'}=0$, $s'=s+1,\dots,p$, and can hence be expressed via functions
of at most $n+1-s$ arguments.
Therefore, the module~$M$ contains a family~$\mathfrak M$ of $p$-dimensional involutive submodules
parameterized by an arbitrary function of all variables modulo functions with a smaller number of arguments
if and only if such a family is a subset of~$S_0$.

If a module~$Q^{\bar\theta}$ from~$S_0$ is involutive, by the Frobenius theorem
the overdetermined system of the first-order linear partial differential equations $Q_s\Phi+\theta^sQ_0\Phi=0$
with respect to the unknown function~$\Phi=\Phi(x,u)$ has a solution with $Q_0\Phi\ne0$,
i.e., the parameter-functions $\theta^s$ can be represented in the form $\theta^s=-\Phi_s/Q_0\Phi$.
As the basis elements $Q_s+\theta^sQ_0$ of the module~$Q^{\bar\theta}$ should commute, we have
\begin{gather*}
[Q_s+\theta^sQ_0,Q_{s'}+\theta^{s'}Q_0]=
(\theta^{s'}_s+\theta^sQ_0\theta^{s'}-\theta^s_{s'}-\theta^{s'}Q_0\theta^s)Q_0+\theta^{s'}Q_{0,s}-\theta^sQ_{0,s'}
\\\qquad
=\frac{\Phi_{s'}Q_{0,s}\Phi-\Phi_sQ_{0,s'}\Phi}{(Q_0\Phi)^2}Q_0+\frac{\Phi_s}{Q_0\Phi}Q_{0,s'}-\frac{\Phi_{s'}}{Q_0\Phi}Q_{0,s}=0,
\end{gather*}
where $Q_{0,s}=[Q_s,Q_0]=\xi^{0\iota}_s\p_\iota+\eta^0_s\p_u$.

Suppose that not all the vector fields $Q_{0,s}$ are zero.
Let vector fields $\tilde Q_1$, \dots, $\tilde Q_{\tilde p}$ form a basis of the module $\langle Q_{0,s}\rangle$.
Then the vector fields $Q_0$, $\tilde Q_1$, \dots, $\tilde Q_{\tilde p}$ are linearly independent,
the vector fields $Q_{0,s}$ are represented as $\sum_{\tilde s=1}^{\tilde p}\lambda^{s\tilde s}\tilde Q_{\tilde s}$
for some smooth functions $\lambda^{s\tilde s}=\lambda^{s\tilde s}(x,u)$ and
the above commutation relations imply the following system for the function~$\Phi$:
\[
\Phi_{s'}Q_{0,s}\Phi-\Phi_sQ_{0,s'}\Phi=0, \quad
\lambda^{s'\tilde s}\Phi_s=\lambda^{s\tilde s}\Phi_{s'}, \quad \tilde s=1,\dots,\tilde p.
\]
As some of the coefficients $\lambda^{s\tilde s}$ are necessarily nonzero,
in this case the function~$\Phi$ runs at most through the solution set of a system of first-order linear partial differential equations
and, therefore, the number of its arguments is in fact less than~$n+1$.
This contradicts the existence of a family of $p$-dimensional involutive submodules of~$M$
parameterized by an arbitrary function of all variables modulo functions with a smaller number of arguments.

As a result, all the vector fields $Q_{0,s}=[Q_s,Q_0]$ are zero, i.e., the module~$M$ is involutive.

Conversely, if the module~$M$ is involutive, we choose a basis which consists of commuting vector fields $Q_0$,~\dots, $Q_p$.
Each submodule $Q^\Phi=\langle Q_s-(Q_s\Phi)/(Q_0\Phi)Q_0\rangle$ is involutive and of dimension~$p$.
Here $\Phi=\Phi(x,u)$ runs through the set of smooth functions of~$(x,u)$ with $Q_0\Phi\ne0$, i.e.,
the complement of the solution set of the equation $Q_0\Phi=0$.
The general solution of this equation is parameterized by a single function of~$n$ arguments.
The modules $Q^\Phi$ and $Q^{\tilde\Phi}$ associated with the functions~$\Phi$ and~$\tilde\Phi$ coincide
if and only if $(Q_0\Phi)Q_s\tilde\Phi=(Q_s\Phi)Q_0\tilde\Phi$.
We will call such functions~$\Phi$ and~$\tilde\Phi$ equivalent.
Thus, the set traversed by~$\Phi$ should additionally be factorized with respect to this equivalence relation.
A necessary and sufficient condition for the equivalence of functions~$\Phi$ and~$\tilde\Phi$ is
that there exists a smooth function~$F$ of $n-p+1$ arguments for which
$\tilde\Phi=F(\omega^1,\dots,\omega^{n-p},\Phi)$,
where $\omega^1$, \dots, $\omega^{n-p}$ form a complete set of functionally independent solutions
of the system $Q_0\omega=0$, $Q_s\omega=0$.
As the number of arguments of~$F$ is less than $n+1$, the factorization does not affect the degree of arbitrariness of~$\Phi$.
\end{proof}

\begin{remark}
Families of involutive submodules belonging to the set~$S_0$ can be parameterized in different ways.
Thus, in the above proof these submodules are parameterized via
the representation of the coefficients $\theta^s$ in the form $\theta^s=-Q_s\Phi/Q_0\Phi$,
and the set traversed by $\Phi=\Phi(x,u)$ should be factorized with respect to an equivalence relation.
Alternatively, we can choose, e.g., any of the $\theta^s$ instead of~$\Phi$ as a parameterized function.
After $\Psi^1=\theta^1$ is chosen as such a function, the condition $[Q_1+\theta^1Q_0,Q_2+\theta^2Q_0]=0$ implies
the first-order linear partial differential equation $Q_1\theta^2+\theta^1Q_0\theta^2=Q_2\theta^1+\theta^2Q_0\theta^1$
with respect to~$\theta^2$, whose general solution is parameterized by an arbitrary function~$\Psi^2$ of $n$~arguments.
In the same way, the condition $[Q_i+\theta^iQ_0,Q_3+\theta^3Q_0]=0$, $i=1,2$,
gives the system of two first-order linear partial differential equations
$Q_i\theta^3+\theta^iQ_0\theta^3=Q_3\theta^i+\theta^3Q_0\theta^i$.
This system is associated with the module generated by the vector fields
$Q_i+\theta^iQ_0+(Q_3\theta^i+\theta^3Q_0\theta^i)\p_{\theta^3}$,
which is involutive in view of the equation for~$\theta^2$.
It follows from the Frobenius theorem that the general solution
of the system is parameterized by an arbitrary function~$\Psi^3$ of $n-1$~arguments.
Iterating the procedure for each~$\theta^s$, we bijectively parameterize involutive modules from the submodule set~$S_0$
by the tuple $(\Psi^1,\dots,\Psi^p)$, where $\Psi^s$ runs through the set of smooth functions of $n+2-s$ arguments.
\end{remark}

\begin{corollary}\label{CorollaryOnInvolitivityOfMultiDimSingularModules}
Any $(p{+}1)$-dimensional meta-singular (resp.\ meta-regular) module~$M$ for a differential function~$L$, where $p\geqslant2$,
is involutive.
\end{corollary}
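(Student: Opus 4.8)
The plan is to derive the statement directly from Proposition~\ref{PropositionOnMultiDimModulesWithFamilyOfInvModules}, since the two are linked by exactly the structural property that this proposition isolates. First I would unwind Definition~\ref{DefinitionOfMetaSingularModuleForDiffFuction}: by hypothesis, the $(p{+}1)$-dimensional meta-singular module~$M$ contains a family $\mathfrak M=\{Q^\Phi\}$ of $p$-dimensional involutive submodules, each satisfying the rank condition and singular for~$L$, parameterized by an arbitrary function $\Phi=\Phi(x,u)$ of all independent and dependent variables (modulo functions with a smaller number of arguments). Since $p\geqslant2$, the module~$M$ is at least three-dimensional, which places it within the scope of the proposition.

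The key observation is that this family is precisely a family of the kind whose existence Proposition~\ref{PropositionOnMultiDimModulesWithFamilyOfInvModules} takes as its hypothesis. That proposition concerns a $(p{+}1)$-dimensional module with $p\geqslant2$ and asserts that the existence of a family of $p$-dimensional involutive submodules parameterized by an arbitrary function of all variables is \emph{equivalent} to involutivity of the whole module. The meta-singular condition demands all of this and, in addition, imposes the rank condition and singularity for~$L$ on each member; these extra requirements only shrink the family and therefore do not obstruct the application of the proposition. Accordingly, I would simply invoke the forward implication of Proposition~\ref{PropositionOnMultiDimModulesWithFamilyOfInvModules} to conclude that~$M$ is involutive.

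There is essentially no obstacle here beyond making the matching of hypotheses explicit: the entire analytic content---the commutator computation showing that all the fields $[Q_s,Q_0]$ must vanish once a full-dimensional family of involutive submodules persists in~$M$---has already been carried out in the proof of Proposition~\ref{PropositionOnMultiDimModulesWithFamilyOfInvModules}. The only point deserving a word of care is that the parameterization in Definition~\ref{DefinitionOfMetaSingularModuleForDiffFuction} is understood modulo functions with fewer arguments, which is exactly the convention under which the proposition is stated; hence the degree of arbitrariness of~$\Phi$ is genuinely that of a function of $n{+}1$ variables, and the hypothesis of the proposition is met verbatim.
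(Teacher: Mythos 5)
Your proposal is correct and is exactly the paper's intended derivation: the corollary is stated without separate proof precisely because Definition~\ref{DefinitionOfMetaSingularModuleForDiffFuction} hands you a family of $p$-dimensional involutive submodules parameterized by an arbitrary function of all variables, which is verbatim the hypothesis of Proposition~\ref{PropositionOnMultiDimModulesWithFamilyOfInvModules} for $p\geqslant2$. Your remark that the parameterization convention (modulo functions of fewer arguments) matches between the definition and the proposition is the right point of care, and nothing further is needed.
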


\begin{proposition}\label{PropositionOnReducedFormOfFamiliesOfSingularModules}
A $(p{+}1)$-dimensional module~$M$, where $p\geqslant2$, is meta-singular (resp.\ meta-regular) for a differential function~$L$
if and only if
it contains, up to point transformations in the space of $(x,u)$,
a family~$\mathfrak M$ of $p$-dimensional involutive submodules singular (resp.\ regular) for~$L$
of the form \[Q^\Phi=\langle \p_s-(\Phi_s/\Phi_u)\p_u,\, s=1,\dots,p\rangle\]
parameterized by an arbitrary smooth function $\Phi=\Phi(x,u)$ of all independent and dependent variables with $\Phi_u\ne0$.
The singularity co-order of the family~$\mathfrak M$ for the differential function~$L$
coincides with that of the entire module~$M$, $\sco_L\mathfrak M=\sco_LM$.
\end{proposition}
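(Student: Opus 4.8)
The plan is to reduce the "if and only if" to a normal-form computation enabled by Proposition~\ref{PropositionOnMultiDimModulesWithFamilyOfInvModules} and Corollary~\ref{CorollaryOnInvolitivityOfMultiDimSingularModules}. Since $p\geqslant2$, a meta-singular module~$M$ is automatically involutive, so I may invoke the final paragraph of the proof of Proposition~\ref{PropositionOnMultiDimModulesWithFamilyOfInvModules}: the family of $p$-dimensional involutive submodules parameterized by an arbitrary function of all variables is, up to equivalence of the parameterizing function, exactly the family $Q^\Phi=\langle Q_s-(Q_s\Phi/Q_0\Phi)Q_0\rangle$ built from a commuting basis $Q_0,\dots,Q_p$ of~$M$. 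First I would straighten this commuting frame: by a point transformation of $(x,u)$ I put $Q_s=\p_s$ for $s=1,\dots,p$ and $Q_0=\p_u$ (this is possible because $Q_0,\dots,Q_p$ commute and are linearly independent, hence simultaneously rectifiable by the Frobenius/flow-box theorem). Under this choice $Q_s\Phi=\Phi_s$ and $Q_0\Phi=\Phi_u$, so each submodule takes precisely the asserted form $Q^\Phi=\langle \p_s-(\Phi_s/\Phi_u)\p_u\rangle$ with $\Phi_u\ne0$, which is the rank-condition/transversality requirement $Q_0\Phi\ne0$ transcribed into the normal coordinates. This establishes that, up to point transformations, any meta-singular~$M$ contains a family of the stated shape.

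For the forward direction it remains to argue that \emph{every} involutive $p$-dimensional submodule of~$M$ satisfying the rank condition is singular for~$L$, given that those in the distinguished family $\mathfrak M$ are. By the partition established in the proof of Proposition~\ref{PropositionOnMultiDimModulesWithFamilyOfInvModules}, the submodules of~$M$ split into $S_0$ (containing $\mathfrak M$) and the lower strata $S_1,\dots,S_p$. The family $\mathfrak M$ exhausts the involutive members of $S_0$ carrying $Q_0\Phi\ne0$; I would therefore show that any involutive submodule of~$M$ satisfying the rank condition is point-equivalent to a member of~$\mathfrak M$, or else handle the remaining strata directly. The cleanest route is to note that in the normal coordinates above, an involutive submodule $\check Q$ satisfying the rank condition is spanned by $p$ combinations of $\p_1,\dots,\p_p,\p_u$ whose projections to $x$-space are independent; up to an invertible matrix multiplication its basis is $\p_s+c^s\p_u$, and involutivity forces the $c^s$ to be the $x$-gradient of a function, i.e. $c^s=-\Phi_s/\Phi_u$ for some $\Phi$, placing $\check Q$ in $\mathfrak M$. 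Thus singularity for~$L$ of every such $\check Q$ follows from that of $\mathfrak M$, which is the meta-singularity of~$M$.

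The converse is immediate from the definition: if $M$ contains a family $\mathfrak M$ of the stated form consisting of submodules singular for~$L$, parameterized by an arbitrary $\Phi$ with $\Phi_u\ne0$, then $M$ is meta-singular for~$L$ essentially by Definition~\ref{DefinitionOfMetaSingularModuleForDiffFuction}. Finally, for the co-order identity $\sco_L\mathfrak M=\sco_LM$, the inequality $\sco_L\mathfrak M\leqslant\sco_LM$ is trivial because $\mathfrak M$ is a subfamily of the submodules of~$M$, while the reverse follows from the identification just made: since every involutive rank-condition submodule of~$M$ lies (up to equivalence) in~$\mathfrak M$, the maximum of singularity co-orders over all of~$M$'s submodules cannot exceed that over~$\mathfrak M$.

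I expect the main obstacle to be the middle step—proving that \emph{all} involutive rank-condition submodules of~$M$, not merely those in the preferred $S_0$ parameterization, are point-equivalent to members of~$\mathfrak M$ (equivalently, that the strata $S_1,\dots,S_p$ contribute nothing new up to the arbitrariness counting). This requires the bookkeeping on the matrix $\Lambda$ from the footnote in the proof of Proposition~\ref{PropositionOnMultiDimModulesWithFamilyOfInvModules} to be carried through carefully, tracking which submodules actually satisfy both involutivity and the rank condition, so that the co-order maximum over~$M$ is genuinely attained within~$\mathfrak M$.
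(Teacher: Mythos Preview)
Your forward direction---showing that a meta-singular $M$ is involutive, straightening a commuting basis to $Q_s=\p_s$, $Q_0=\p_u$, and reading off the family $\mathfrak M=\{Q^\Phi\}$---matches the paper. The gap is in the converse, and it is precisely the obstacle you flag at the end but do not resolve.

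Your claim that ``in normal coordinates, every involutive rank-condition submodule of $M$ has basis $\p_s+c^s\p_u$ and hence lies in $\mathfrak M$'' is correct, but only for the rank condition \emph{in those normal coordinates}. The rank condition (that the projection to the $x$-slice has full rank) is not preserved under point transformations that mix $x$ and $u$. A submodule $P$ lying in a stratum $S_s$ with $s\geqslant1$---i.e., one containing $Q_0$---fails the rank condition in the normal coordinates where $Q_0=\p_u$, yet may perfectly well satisfy it in the \emph{initial} coordinates in which $M$ and $L$ are given and in which meta-singularity must be verified. Such a $P$ is not point-equivalent to any member of $\mathfrak M$ in a way that helps: any point transformation sending $P$ into $S_0$ also moves $M$ and $L$, so you are no longer checking the original problem. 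Consequently your argument establishes meta-singularity only in the normal coordinates, and the assertion that the converse is ``immediate from Definition~\ref{DefinitionOfMetaSingularModuleForDiffFuction}'' is incorrect: the definition requires \emph{every} involutive rank-condition submodule to be singular, not merely those in the distinguished family.

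The paper closes this gap by a continuity argument rather than by a classification. Given such a $P$ satisfying the rank condition in the initial coordinates, one re-straightens so that $Q_0,Q_2,\dots,Q_p$ become $\p_1,\dots,\p_p$ and $Q_1$ becomes $\p_u$, and then builds a one-parameter family $Q^\varepsilon$ of involutive submodules with $Q^\varepsilon\in\mathfrak M$ for every $\varepsilon\ne0$ and $Q^0=P$. The associated differential function $\hat L^\varepsilon$ (obtained by excluding derivatives on $\mathcal Q^\varepsilon_{\ssl r\ssr}$) depends smoothly on $\varepsilon$; since $\ord\hat L^\varepsilon\leqslant k$ for $\varepsilon\ne0$, the same bound persists at $\varepsilon=0$, whence $\sco_LP\leqslant k$. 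This limiting step is the missing idea, and it simultaneously yields $\sco_LM=\sco_L\mathfrak M$ without needing the (false) claim that every rank-condition submodule of $M$ already lies in $\mathfrak M$.
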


\begin{proof}
Suppose that the module~$M$ is meta-singular (resp.\ meta-regular) for a differential function~$L$.
As the module~$M$ is involutive in view of Corollary~\ref{CorollaryOnInvolitivityOfMultiDimSingularModules},
we can choose a basis of~$M$ consisting of commuting vector fields $Q_0$, \dots, $Q_p$
that satisfy the rank condition $\rank(\xi^{si})=p$.
A~change of the variables~$(x,u)$ reduces these vector fields to shift operators, $Q_s=\p_s$ and $Q_0=\p_u$.
A submodule of the form $Q^{\bar\theta}=\langle Q_s+\theta^sQ_0,\, s=1,\dots,p\rangle$ of~$M$ is involutive
if and only if
the coefficients~$\theta^s$ can be represented in the form $\theta^s=-\Phi_s/\Phi_u$
for some function~$\Phi=\Phi(x,u)$ with $\Phi_u\ne0$,
cf.\ the proof of Proposition~\ref{PropositionOnMultiDimModulesWithFamilyOfInvModules}.
Thus, the family $\mathfrak M=\{Q^\Phi=\langle \p_s-(\Phi_s/\Phi_u)\p_u,\, s=1,\dots,p\rangle\}$,
parameterized by the arbitrary function $\Phi$ with $\Phi_u\ne0$, is of the required form.

The converse statement and the equality with singularity co-orders are obvious in the regular case 
and hence it suffices to prove them only in the singular case. 

Let the module~$M$ contain, up to point transformations in the space of $(x,u)$,
a family $\mathfrak M=\{Q^\Phi\}$ of the required form and let $\sco_L\mathfrak M=k<r=\ord L$.
Hence we have that $\sco_LQ^\Phi\leqslant k$ for any allowed value of the parameter-function~$\Phi$.
In the initial coordinates, the basis elements of a submodule~$Q^\Phi$ take the form
$Q_s-(Q_s\Phi)/(Q_0\Phi)Q_0$, where $Q_0$, \dots, $Q_p$ are commuting vector fields.
It suffices to prove that any $p$-dimensional involutive submodule~$P$ of~$M$,
which satisfies the rank condition and does not belong to the family $\{Q^\Phi\}$, is singular for~$L$.
Up to a permutation of the vector fields $Q_1$, \dots, $Q_p$,
a basis of~$P$ consists of the vector fields $Q_0$ and $Q_{s'}+\theta^{s'}Q_1$, $s'=2,\dots,p$,
where the coefficients $\theta^{s'}$ are smooth functions of~$(x,u)$.
For convenience, by a point transformation of the variables~$(x,u)$
we reduce the vector fields $Q_0$, $Q_2$, \dots, $Q_p$ and $Q_1$
to the shift operators $\p_1$, $\p_2$, \dots, $\p_p$ and $\p_u$, respectively.
As the submodule~$P$ is involutive, the coefficients $\theta^{s'}$ possess the representation
$\theta^{s'}=-\Psi_{s'}/\Psi_u$, where $\Psi$ is a smooth function of~$x_2$, \dots, $x_n$ and $u$
with $\Psi_u\ne0$.
Consider the family of involutive modules of the~form
\[Q^\varepsilon=\langle Q_0+\varepsilon Q_1,\, Q_{s'}+\theta^{s'\varepsilon}Q_1,\, s'=2,\dots,p\rangle\]
parameterized by a constant~$\varepsilon$ running through a neighborhood of zero.
Here the coefficients $\theta^{s'\varepsilon}$ are obtained via replacing
the argument~$u$ in~$\theta^{s'}$ by $u-\varepsilon x_1$,
$\theta^{s'\varepsilon}=\theta^{s'}(x_2,\dots,x_n,u-\varepsilon x_1)$.
For each nonzero value of~$\varepsilon$ the module~$Q^\varepsilon$ belongs to the family $\{Q^\Phi\}$.
This is obvious after the transition from the chosen basis to the basis
\[(Q_1+\varepsilon^{-1}Q_0,\,Q_{s'}-\varepsilon^{-1}\theta^{s'\varepsilon}Q_0,\, s'=2,\dots,p).\]
Therefore, any module~$Q^\varepsilon$ with $\varepsilon\ne0$ is singular for the differential function~$L$.
Consider the differential function~$\hat L^\varepsilon$ which is associated with~$L$
on the manifold~$\smash{\mathcal Q^\varepsilon_{\ssl r\ssr}}$ via the exclusion of the derivatives~$u_\alpha$
with $\alpha_1+\dots+\alpha_p>0$ from~$L$ using the equations $u_1=\varepsilon$, $u_{s'}=\theta^{s'\varepsilon}$
and their differential consequences.
The function~$\hat L^\varepsilon$ is smooth in the parameter~$\varepsilon$, the variables~$x$ and
derivatives of~$u$ with respect to~$x_{p+1}$,~\dots,~$x_n$.
As $\ord\hat L^\varepsilon\leqslant k$ for any $\varepsilon\ne0$, the same statement is true for $\varepsilon=0$
by continuity.
This means that the submodule $P=Q^\varepsilon|_{\varepsilon=0}$ is singular for~$L$, and $\sco_LP\leqslant k=\sco_L\mathfrak M$.

It is obvious that $\sco_LM\geqslant\sco_L\mathfrak M$.
As any $p$-dimensional involutive submodule~$Q$ of~$M$ satisfies the inequality $\sco_LQ\leqslant \sco_L\mathfrak M$,
we obtain that $\sco_LM=\sco_L\mathfrak M$.
\end{proof}

The case of two-dimensional meta-singular (resp.\ meta-regular) modules is special.
As any one-dimensional module of vector fields is involutive,
the fact that a two-dimensional module is meta-singular (resp.\ meta-regular) for a differential function does not imply
that this module is involutive.
More specifically, the conclusion of Proposition \ref{PropositionOnMultiDimModulesWithFamilyOfInvModules} is not true if $p=1$.
This is why the reduced form of submodules of a two-dimensional meta-singular (resp.\ meta-regular) module depends on whether
this module is involutive or not.

\begin{proposition}\label{PropositionOnReducedFormOfFamiliesOfOneDimSingularModules}
A two-dimensional module~$M$ is meta-singular (resp.\ meta-regular) for a differential function~$L$
if and only if
it contains, up to point transformations in the space of $(x,u)$,
a family~$\mathfrak M$ of one-dimensional submodules singular (resp.\ regular) for~$L$
with basis vector fields in a reduced form
parameterized by an arbitrary smooth function $\theta=\theta(x,u)$.
The reduced form is \[\p_1+\theta\p_u\quad\mbox{or}\quad\p_1+u\p_2+\xi^3\p_3+\dots+\xi^n\p_n+\theta\p_u\]
if the module~$M$ is involutive or not involutive, respectively.
Here $\xi^i=\xi^i(x,u)$, $i=3,\dots,n$ are fixed smooth functions.
The singularity co-order of the family~$\mathfrak M$ for the differential function~$L$
coincides with that of the entire module~$M$, $\sco_L\mathfrak M=\sco_LM$.
\end{proposition}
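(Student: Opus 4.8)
The plan is to exploit the special feature of the case $p=1$: every one-dimensional submodule is automatically involutive, so, in contrast to Proposition~\ref{PropositionOnMultiDimModulesWithFamilyOfInvModules}, meta-singularity of the two-dimensional module~$M$ does not force $M$ to be involutive, whence the dichotomy in the statement. I would fix a basis $\{Q_0,Q_1\}$ of~$M$ in which $\mathfrak M$ is the pencil $S_0=\{\langle Q_1+\theta Q_0\rangle\}$ singled out by its common direction $\langle Q_0\rangle$. The one-dimensional submodules of~$M$ are precisely the members of~$S_0$ together with the single module $\langle Q_0\rangle$, and $S_0$ is the only family among them parameterized by a function of all $n+1$ variables; thus the family required by Definition~\ref{DefinitionOfMetaSingularModuleForDiffFuction} is necessarily $\mathfrak M=S_0$. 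The decisive preliminary observation is that, $Q_0$ being nonvanishing, a point transformation straightens it to $Q_0=\p_u$; then $\langle Q_0\rangle=\langle\p_u\rangle$ fails the rank condition, so $S_0$ in fact exhausts \emph{all} one-dimensional submodules satisfying the rank condition.

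This last remark makes the converse direction and the co-order equality immediate, and I would dispose of them first. If $M$ contains, up to point transformations, a family $\mathfrak M$ in one of the two reduced forms with every member singular for~$L$, then in these coordinates $\langle\p_u\rangle$ is rank-deficient, so the rank-condition submodules of~$M$ are exactly the elements of~$\mathfrak M$; all of them are singular, and $\mathfrak M$ is parameterized by the arbitrary function~$\theta$, whence $M$ is meta-singular by Definition~\ref{DefinitionOfMetaSingularModuleForDiffFuction}. Since $\sco_LM$ is by definition the maximum of $\sco_LQ$ taken over precisely these submodules, $\sco_LM=\sco_L\mathfrak M$ follows with no further work; here, unlike in Proposition~\ref{PropositionOnReducedFormOfFamiliesOfSingularModules}, no deformation/continuity argument is needed because there are no stray submodules outside the family.

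For the forward direction I would straighten $Q_0=\p_u$ and, absorbing the $\p_u$-component of~$Q_1$ into the parameter (which merely reparameterizes~$\theta$), write $Q_1=\xi^i(x,u)\p_i$ with $(\xi^i)\ne0$. Then $[Q_0,Q_1]=\xi^i_u\p_i$, and $M$ is involutive if and only if $(\xi^i_u)$ is proportional to $(\xi^i)$. In the involutive case this proportionality gives $\xi^i=\mu(x,u)\zeta^i(x)$; rescaling~$Q_1$ by $\mu^{-1}$ and straightening the $u$-independent field $\zeta^i\p_i$ to $\p_1$ by an $x$-diffeomorphism (which preserves $Q_0=\p_u$) yields the reduced form $\p_1+\theta\p_u$.

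The non-involutive case is the main obstacle, the point being to install the distinguished coefficient~$u$ in front of~$\p_2$. Here I would first rescale~$Q_1$ so that its $\p_1$-coefficient equals~$1$, writing $Q_1=\p_1+\sum_{i\geq2}\zeta^i\p_i$; non-involutivity then guarantees that some $\zeta^i_u\ne0$ with $i\geq2$, and after relabeling I may assume $\zeta^2_u\ne0$. The naive attempt to reach the target by a transformation fixing $Q_0=\p_u$ fails, because the stabilizer of~$\p_u$ acts on the $x$-coefficients only through a $u$-independent diffeomorphism and so cannot produce the linear $u$-dependence of the $\p_2$-coefficient. The resolution is to use that modules are defined only up to scaling, so that $Q_0$ need only be preserved \emph{as a line}: I would apply the point transformation $\tilde u=\zeta^2(x,u)$, $\tilde x_i=x_i$, which is admissible exactly because $\zeta^2_u\ne0$ and which sends $\langle\p_u\rangle$ to $\langle\p_{\tilde u}\rangle$ while turning the $\p_2$-coefficient of~$Q_1$ into $\tilde u$. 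Absorbing the newly produced $\p_{\tilde u}$-component of~$Q_1$ back into $\theta Q_0$ leaves $Q_1=\p_1+u\p_2+\xi^3\p_3+\dots+\xi^n\p_n$ with the $\xi^i$, $i\geq3$, as free functional moduli, which is the asserted reduced form. I would finish by checking that, as $\theta$ runs over all functions, the absorbed component runs over all functions as well (again since $\zeta^2_u\ne0$), so the transported family is precisely the stated $\theta$-parameterized family.
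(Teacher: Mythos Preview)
Your forward direction (the normal-form reductions in both the involutive and non-involutive cases) is essentially the same as the paper's and is fine.

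The gap is in the converse direction and in the co-order equality, and it stems from one oversight: the rank condition is \emph{not} invariant under point transformations. You straighten $Q_0$ to $\p_u$ and then argue that $\langle\p_u\rangle$ fails the rank condition, so $S_0$ exhausts all rank-condition submodules and hence meta-singularity is immediate. But this only shows that $M$ is meta-singular \emph{in the reduced coordinates}. The proposition asserts meta-singularity of $M$ for $L$ in the given foliation of the $(x,u)$-space, and there the pre-image of $\langle\p_u\rangle$, i.e.\ the submodule $\langle Q_0\rangle$, may very well satisfy the rank condition (for instance, take $M=\langle\p_1,\p_2\rangle$ with $n\geqslant2$: every one-dimensional submodule satisfies the rank condition, yet after swapping $x_2\leftrightarrow u$ one of them becomes $\langle\p_u\rangle$). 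Definition~\ref{DefinitionOfMetaSingularModuleForDiffFuction} requires \emph{every} rank-condition submodule of~$M$ to be singular, so back in the initial coordinates you still owe the verification that $\langle Q_0\rangle$ is singular for~$L$ whenever it satisfies the rank condition there. The same issue breaks your co-order argument: $\sco_LM$ is the maximum over rank-condition submodules in the initial coordinates, which may include $\langle Q_0\rangle$.

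This is precisely where the paper does invoke a deformation/continuity argument, contrary to your claim that none is needed: it passes to coordinates straightening $Q_0$ to $\p_1$, takes the one-parameter family $Q^\varepsilon=\langle Q_0+\varepsilon(Q_1-\xi^{11}Q_0)\rangle$, observes that $Q^\varepsilon\in\mathfrak M$ for $\varepsilon\ne0$, and lets $\varepsilon\to0$ to conclude $\sco_L\langle Q_0\rangle\leqslant\sco_L\mathfrak M$. So the continuity step from Proposition~\ref{PropositionOnReducedFormOfFamiliesOfSingularModules} is not dispensable here; it merely shrinks to handling a single extra submodule rather than a positive-dimensional family of them.
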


In order to make the further consideration for $p=1$ consistent with the case of~$p\geqslant2$,
the parameter-function $\theta$ can be represented in the form $\theta=-\Phi_1/\Phi_u$,
where $\Phi=\Phi(x,u)$ is an arbitrary smooth function with $\Phi_u\ne0$.

\begin{proof}
If the module~$M$ is involutive, the proof is similar to that of Proposition~\ref{PropositionOnMultiDimModulesWithFamilyOfInvModules}.
We therefore only consider the case when the module~$M$ is not involutive.

Let the module~$M$ be meta-singular (resp.\ meta-regular) for a differential function~$L$.
We choose a basis $(Q_0,Q_1)$ of~$M$ such that the vector field~$Q_1$ satisfies the rank condition \mbox{$\rank(\xi^{1i})=1$}
and reduce the vector field~$Q_0$ by a change of the variables~$(x,u)$ to the shift operator with respect to~$u$, $Q_0=\p_u$.
Up to permutation of the variables~$x_1$, \dots, $x_p$, we can assume that $\xi^{11}\ne0$.
Then we replace the basis element~$Q_1$ by $(\xi^{11})^{-1}(Q_1-\eta^1Q_0)$ in order to set $\eta^1=0$ and $\xi^{11}=1$.
As the module~$M$ is not involutive, the commutator $[Q_0,Q_1]=\xi^{12}_u\p_2+\dots+\xi^{1n}_u\p_n$ does not vanish.
Hence we can assume up to permutation of the variables~$x_2$, \dots, $x_p$ that $\xi^{12}_u\ne0$.
The change of variables $\tilde x^s=x^s$ and $\tilde u=\xi^{12}(x,u)$
with the simultaneous replacement of~$Q_0$ by $(\xi^{12}_u)^{-1}Q_0$
reduces the basis elements of~$M$ to the form $Q_0=\p_u$ and~$Q_1=\p_1+u\p_2+\xi^3\p_3+\dots+\xi^n\p_n$.
Then the family $\mathfrak M=\{\langle Q_1+\theta Q_0\rangle\}$ of one-dimensional submodules of~$M$
singular for~$L$,
where the parameter~$\theta=\theta(x,u)$ runs through the set of smooth functions of all independent and dependent variables,
is of the required form.

The converse statement and the equality with singularity co-orders are again obvious in the regular case 
and hence it suffices to prove them only in the singular case. 

Let the module~$M$ contain, up to point transformations in the space of $(x,u)$,
a family $\mathfrak M$ of the required form with $\sco_L\mathfrak M=k<r=\ord L$. 
Hence we have that
$\sco_L\langle Q_1+\theta Q_0\rangle\leqslant k$ for any value of the parameter-function~$\theta$.
After returning to the initial coordinates, it suffices to prove that the submodule~$\langle Q_0\rangle$ of~$M$ is singular for~$L$
if the vector field~$Q_0$ also satisfies the rank condition in these coordinates.
For convenience we reduce~$Q_0$ by a change of coordinates to the shift operator~$\p_1$.
Consider the family of modules of the form $Q^\varepsilon=\langle Q_0+\varepsilon(Q_1-\xi^{11}Q_0)\rangle$
parameterized by a constant~$\varepsilon$ running though a neighborhood of zero.
For each nonzero value of~$\varepsilon$ the module~$Q^\varepsilon$ belongs to the family~$\mathfrak M$
as in this case we have $Q^\varepsilon=\langle Q_1+(\varepsilon^{-1}-\xi^{11})Q_0\rangle$.
Therefore, any module~$Q^\varepsilon$ with $\varepsilon\ne0$ is singular for the differential function~$L$.
Consider the differential function~$\hat L^\varepsilon$ which is associated with~$L$
on the manifold~$\smash{\mathcal Q^\varepsilon_{\ssl r\ssr}}$ via the exclusion of the derivatives~$u_\alpha$
with $\alpha_1>0$ from~$L$ using the equation $u_1=\varepsilon(\eta^{11}-\xi^{12}u_2-\dots-\xi^{1n}u_n)$ and its differential consequences.
The function~$\hat L^\varepsilon$ is smooth in the totality of the parameter~$\varepsilon$, the variables~$x$ and
derivatives of~$u$ with respect to~$x_2$,~\dots,~$x_n$.
As the order of~$\hat L^\varepsilon$ is not greater than~$k$ for any nonzero~$\varepsilon$, the same statement is true for $\varepsilon=0$ by continuity.
This means that the submodule $\langle Q_0\rangle=Q^\varepsilon|_{\varepsilon=0}$ is singular for~$L$,
and $\sco_L\langle Q_0\rangle\leqslant k=\sco_L\mathfrak M$.

The set of one-dimensional submodules of~$M$ is exhausted by~$\langle Q_0\rangle$
and the elements of the family~$\mathfrak M$. Hence $\sco_L\mathfrak M=\sco_LM$.
\end{proof}

\section{Differential functions admitting meta-singular modules}
\label{SectionOnDiffFunctionsAdmittingMeta-singularModules}

Up to point transformations, we can describe the general form of differential functions admitting meta-singular modules of vector fields.

\begin{theorem}\label{TheoremOnDiffFunctionsWithMetaSingularModule}
An $r$th order ($r>0$) differential function~$L$ with one dependent and $n$ independent variables possesses
a co-order $k$ meta-singular $(p{+}1)$-dimensional module of vector fields
with $0\leqslant k<r$ and $0<p<n$
if and only if
it can be represented, up to point transformations, in the form
\begin{equation}\label{Eq2DWithSingularVectorFieldsOfCoOrderK}
L=\bar L(x,\Omega_{r,k,p}),
\end{equation}
where
$\Omega_{r,k,p}=\big(\omega_\alpha,\,|\alpha|\leqslant r,\,\alpha_{p+1}+\dots+\alpha_n\leqslant k\big)$,
and the function~$\bar L$ essentially depends on some $\omega_\alpha$ with $\alpha_{p+1}+\dots+\alpha_n=k$.
Here $\omega_\alpha=u_\alpha$ or, only for the case $p=1$,
$\omega_\alpha=D_2^{\alpha_2}\cdots D_n^{\alpha_n}(D_1+uD_2+\xi^3D_3+\dots+\xi^nD_n)^{\alpha_1} u$
for some fixed smooth functions $\xi^i=\xi^i(x,u)$, $i=3,\dots,n$.
\end{theorem}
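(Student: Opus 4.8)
The plan is to work in the normal coordinates supplied by the reduced-form propositions and then track, derivative by derivative, how the order of a differential function drops when it is restricted to the manifold $\mathcal Q^\Phi_{\ssl r\ssr}$.

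First I would dispose of the phrase ``up to point transformations.'' By Corollary~\ref{CorollaryOnInvolitivityOfMultiDimSingularModules} together with Proposition~\ref{PropositionOnReducedFormOfFamiliesOfSingularModules} (for $p\geqslant2$) and Proposition~\ref{PropositionOnReducedFormOfFamiliesOfOneDimSingularModules} (for $p=1$), a point transformation of $(x,u)$ brings any $(p{+}1)$-dimensional meta-singular module~$M$, together with its distinguished family~$\mathfrak M$, to a normal form: for $p\geqslant2$ and for $p=1$ involutive this is $M=\langle\p_1,\dots,\p_p,\p_u\rangle$ with $\mathfrak M=\{Q^\Phi=\langle\p_s-(\Phi_s/\Phi_u)\p_u\rangle\}$, and for $p=1$ non-involutive it is $M=\langle Q_1,\p_u\rangle$ with $Q_1=\p_1+u\p_2+\xi^3\p_3+\dots+\xi^n\p_n$ and $\mathfrak M=\{\langle Q_1+\theta\p_u\rangle\}$. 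Since these are exactly the transformations tolerated in the statement, and since by Propositions~\ref{PropositionOnReducedFormOfFamiliesOfSingularModules} and~\ref{PropositionOnReducedFormOfFamiliesOfOneDimSingularModules} one has $\sco_LM=\sco_L\mathfrak M$, it suffices to characterize, in these fixed coordinates, the differential functions~$L$ for which every member of~$\mathfrak M$ is singular of co-order $\leqslant k$ with $k$ attained.

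The computational core is a single restriction lemma. Writing $w:=\Phi(x,u)$ (so that $Q^\Phi$-invariance of $u=u(x)$ means precisely that $w$ depends on the transversal variables $x_{p+1},\dots,x_n$ only), I would invert~$\Phi$ to get $u=U(x,w)$ with $U_w\neq0$ and compute $u_\alpha$ on~$\mathcal Q^\Phi_{\ssl r\ssr}$. Splitting $\alpha=(\alpha',\alpha'')$ into a tangential part $\alpha'=(\alpha_1,\dots,\alpha_p)$ and a transversal part $\alpha''=(\alpha_{p+1},\dots,\alpha_n)$, the tangential differentiations act only on the explicit $x'$-dependence of~$U$ while each transversal differentiation can raise the $w$-jet order by one; the top term is $U^{(\alpha')}_w\,w_{\alpha''}$ with $U^{(\alpha')}_w=\p_{x'}^{\alpha'}\p_wU$ and $w_{\alpha''}=\p_{x''}^{\alpha''}w$. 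Hence $u_\alpha|_{\mathcal Q^\Phi}$ has transversal order $\leqslant\alpha_{p+1}+\dots+\alpha_n$ always, and exactly this value for generic~$\Phi$. This immediately yields the ``if'' direction: if $L=\check L(x,\Omega_{r,k,p})$ then $L|_{\mathcal Q^\Phi}$ has $w$-order $\leqslant k$, so each $Q^\Phi$ is singular of co-order $\leqslant k$; choosing $\Phi$ generic and using $\check L_{\omega_\alpha}\neq0$ for some $\alpha$ with $\alpha_{p+1}+\dots+\alpha_n=k$ makes the bound sharp. Thus $\sco_L\mathfrak M=k$, and by $\sco_LM=\sco_L\mathfrak M<r$ every rank-satisfying involutive $p$-submodule of~$M$ is singular, so $M$ is meta-singular of co-order~$k$.

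For the converse I would argue by contraposition. Suppose $L$ genuinely depends on some $u_\alpha$ of transversal order $K:=\alpha_{p+1}+\dots+\alpha_n>k$, with $K$ maximal. Differentiating the restriction with respect to a top $w$-jet $w_\gamma$, $|\gamma|=K$, gives
\begin{gather*}
\frac{\p(L|_{\mathcal Q^\Phi})}{\p w_\gamma}=\sum_{\alpha''=\gamma}\check L_{u_\alpha}\big|_{\mathcal Q^\Phi}\,U^{(\alpha')}_w,
\end{gather*}
a finite sum over the tangential $\alpha'$ with $|\alpha'|\leqslant r-K$. The decisive point is that the factors $U^{(\alpha')}_w=\p_{x'}^{\alpha'}\p_wU$ are pairwise distinct partial derivatives of the essentially arbitrary function~$U$, so at a fixed jet point their values can be varied independently by a suitable choice of~$\Phi$ while the lower-order data entering $\check L_{u_\alpha}|_{\mathcal Q^\Phi}$ are held fixed. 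Consequently the sum cannot vanish for all~$\Phi$ unless every coefficient $\check L_{u_\alpha}$ vanishes, contradicting the assumed dependence; hence some $Q^\Phi$ has co-order $\geqslant K>k$, which is impossible because $\sco_LM=k$ bounds the co-order of every submodule. Therefore $L$ depends on no derivative of transversal order exceeding~$k$, i.e.\ $L=\check L(x,\Omega_{r,k,p})$. The $p=1$ non-involutive case runs identically once $\p_1$ is replaced throughout by the total-derivative form $D_1+uD_2+\xi^3D_3+\dots+\xi^nD_n$ of the characteristic field~$Q_1$, which is exactly what produces the alternative invariants~$\omega_\alpha$. The main obstacle is precisely this last independence argument --- ruling out that a genuine high-transversal-order dependence of~$L$ is simultaneously annihilated on every leaf~$\mathcal Q^\Phi$, i.e.\ showing that the top-order restriction coefficient is a nontrivial function of the jet of~$\Phi$; the coupling between $U^{(0)}_w=U_w$ and the lower-order arguments of $\check L_{u_\alpha}$ forces a careful ordering of the elimination to avoid circularity.
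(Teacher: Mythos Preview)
Your approach is essentially the paper's, repackaged through the inverse function $U=\Phi^{-1}$ and $w$-jets rather than through~$\Phi$ directly; the reduction to normal coordinates, the restriction lemma, and the ``if'' direction all match. The one place where the sketch is genuinely incomplete is exactly the point you flag at the end: in your contrapositive the coefficients $L_{u_\alpha}|_{\mathcal Q^\Phi}$ in the sum $\sum_{\alpha''=\gamma} L_{u_\alpha}|_{\mathcal Q^\Phi}\,U^{(\alpha')}_w$ themselves depend on the $U$-jet (through the restricted values $u_\beta|_{\mathcal Q^\Phi}$), so ``varying the $U^{(\alpha')}_w$ independently while holding the coefficients fixed'' is not a priori legitimate --- the quantities you want to treat as independent parameters also sit inside the quantities you want to hold constant.

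The paper resolves this by a different choice of what to fix. Rather than fixing a point in the $(x,w\text{-jet})$-picture, it fixes an arbitrary point $z^0\in J^r(x|u)$ and restricts to those~$\Phi$ with $z^0\in\mathcal Q^\Phi_{\ssl r\ssr}$. This makes each $L_{\omega_\alpha}(z^0)$ a \emph{number}, completely independent of~$\Phi$; the membership condition $z^0\in\mathcal Q^\Phi_{\ssl r\ssr}$ pins down only the pure $x$-derivatives of~$\Phi$ at $(x^0,u^0)$ (those containing some~$\p_s$) and leaves every derivative involving~$\p_u$ unconstrained. The chain-rule identity~\eqref{EqnChainRuleForDiffFunctionsWithMetaSingularModule} then has fixed coefficients $L_{\omega_\beta}(z^0)$ multiplied by $\Phi$-dependent factors $\p_u(Q^\Phi_1)^{\beta_1}\cdots(Q^\Phi_p)^{\beta_p}u\big|_{(x^0,u^0)}$, and one splits with respect to the free derivatives $\p_1^{\beta_1}\cdots\p_p^{\beta_p}\Phi_u(x^0,u^0)$. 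Even so this still needs an induction from $|\hat\alpha|=r$ down to $k+1$: at step~$\mu$ the splitting only works because the coefficients with $|\hat\beta|>r-\mu+1$ have already been shown to vanish, so the sum collapses to multi-indices with $\hat\beta=\hat\alpha$. That downward iteration is precisely the ``careful ordering of the elimination'' you anticipate; without both the fixed-$z^0$ decoupling and the step-by-step descent the argument is circular.
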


\begin{proof}
Suppose that a differential function~$L$ possesses a $(p{+}1)$-dimensional co-order~$k$ meta-singular module~$M$.
Up to combining basis elements and change of variables, a basis of the module~$M$ consists of
either the vector fields $Q_s=\p_s$ and $Q_0=\p_u$ or, if $p=1$ and the module~$M$ is not involutive,
the vector fields $Q_1=\p_1+u\p_2+\xi^3\p_3+\dots+\xi^n\p_n$ and $Q_0=\p_u$,
cf.\ Propositions~\ref{PropositionOnReducedFormOfFamiliesOfSingularModules}
and~\ref{PropositionOnReducedFormOfFamiliesOfOneDimSingularModules}.
Although the form of the initial differential function~$L$ will also be transformed by the change of variables,
for simplicity we will continue to use the old notations for all new values.

We choose a family~$\mathfrak M=\{Q^\Phi=\langle Q_s-(\Phi_s/\Phi_u)\p_u,\, s=1,\dots,p\rangle\}$
of $p$-dimensional involutive submodules of~$M$ that are parameterized by an arbitrary function $\Phi=\Phi(x,u)$.
Then we fix an arbitrary point $\smash{z^0=(x^0,u_{\ssl r\ssr}^0)\in \mathrm J^r}$
and consider the values of the parameter-function~$\Phi$ for which \smash{$z^0\in\mathcal Q^\Phi_{\ssl r\ssr}$}.
(Here by \smash{$\mathcal Q^\Phi_{\ssl r\ssr}$} we denote the manifold that is defined in the jet space~$\mathrm J^r$
by the system with the same notation~\smash{$\mathcal Q^\Phi_{\ssl r\ssr}$}
consisting of the independent differential consequences of the characteristic system~$\mathcal Q^\Phi$
up to equation order~$r$.)
This condition for~$\Phi$ implies that the values of the derivatives of~$\Phi$ with respect to only~$x_1$,~\dots, $x_n$
at the point $(x^0,u^0)$, which contain differentiation with respect to some~$x_s$,
are expressed via $\smash{u_{\ssl r\ssr}^0}$ and values of derivatives of~$\Phi$ in $(x^0,u^0)$,
containing differentiation with respect to~$u$.
The latter values are not constrained.
For instance, if the module~$M$ is involutive, we get
\begin{gather*}
\Phi_s(x^0,u^0)=-u_s^0\Phi_u(x^0,u^0), \\
\Phi_{si}(x^0,u^0)=-u_{si}^0\Phi_u(x^0,u^0)-u_i^0\Phi_{su}(x^0,u^0)-u_s^0\Phi_{iu}(x^0,u^0)-u_s^0u_i^0\Phi_{uu}(x^0,u^0),\quad \dots\, .
\end{gather*}

We introduce the new coordinates $(x_i,\omega_\alpha,|\alpha|\leqslant r)$
in $\mathrm J^r$ instead of the standard ones $(x_i,u_\alpha,|\alpha|\leqslant r)$.
If the module~$M$ is involutive, this change of coordinates is just a re-labeling of variables
in order to guarantee consistency with the special case of non-involutive modules for $p=1$.
In the latter case, this is a valid change of coordinates since the Jacobian matrix $(\p\omega_\alpha/\p u_{\alpha'})$ is nondegenerate:
it is a triangular matrix with all diagonal entries equal to $1$ if we implement the graded lexicographic order of multi-indices,
\[
\alpha\prec\beta\ \Leftrightarrow\ |\alpha|<|\beta|\vee(|\alpha|=|\beta|\wedge\alpha_1<\beta_1)\vee(|\alpha|=|\beta|\wedge\alpha_1=\beta_1\wedge\alpha_2<\beta_2)\vee\cdots.
\]

Denote by $\hat L^\Phi$ the differential function obtained from~$L$ by the procedure of excluding,
in view of the system $\smash{\mathcal Q^\Phi_{\ssl r\ssr}}$,
the derivatives of~$u$ that involve differentiations with respect to~$x_s$
and are thus assumed $\mathcal Q^\Phi$-principal.
For a multi-index $\alpha=(\alpha_1,\dots,\alpha_n)$ we set
$\check\alpha=(\alpha_1,\dots,\alpha_p)$ and $\hat\alpha=(\alpha_{p+1},\dots,\alpha_n)$,
i.e., $\alpha=(\check\alpha,\hat\alpha)$. The symbol~$\check0$ denotes the tuple of $p$~zeros.
As $Q^\Phi$ is a co-order~$k$ singular module for~$L$,
the function $\hat L^\Phi$ does not depend on the derivatives $u_{\ssl\check0,\hat\alpha\ssr}$, $|\hat\alpha|=k+1,\dots,r$.
We use this condition step-by-step, starting from the greatest value of~$|\hat\alpha|$
and re-writing the derivatives in the new coordinates of~$\mathrm J^r$ and in terms of~$L$.
In the course of this procedure, we take into account the equality
$\omega_\beta=\psi^\beta[u]$ satisfied on the manifold $\smash{\mathcal Q^\Phi_{\ssl r\ssr}}$
for each multi-index~$\beta$ with $|\beta|\leqslant r$.
Here the differential function $\psi^\beta=\psi^\beta[u]$ is defined by the equality
$\psi^\beta=D_{p+1}^{\beta_{p+1}}\cdots D_n^{\beta_n}(Q^\Phi_1)^{\beta_1}\cdots(Q^\Phi_p)^{\beta_p}u$,
and hence it is of $\smash{|\hat\beta|}$th order and possesses the representation
\[\psi^\beta=(\p_u(Q^\Phi_1)^{\beta_1}\cdots(Q^\Phi_p)^{\beta_p}u\bigr)u_{\ssl\check0,\hat\beta\ssr}+\tilde\psi^\beta[u]\]
with some differential function $\tilde\psi^\beta=\tilde\psi^\beta[u]$ of order less than~$|\hat\beta|$.
Therefore, for each $\alpha$ with $\check\alpha=\check0$ and $|\hat\alpha|\leqslant r$ the chain rule implies that
\begin{equation}\label{EqnChainRuleForDiffFunctionsWithMetaSingularModule}
\hat L^\Phi_{u_\alpha}(z^0)=\sum_{\beta\colon |\beta|\leqslant r,\,|\hat\beta|\geqslant|\hat\alpha|}L_{\omega_\beta}(z^0)\psi^\beta_{u_\alpha}(z^0).
\end{equation}

\noprint{
Note that
\[
\omega_\alpha=D_2^{\alpha_2}\cdots D_n^{\alpha_n}(D_1+uD_2+\xi^3D_3+\dots+\xi^nD_n)^{\alpha_1} u
=D_2^{\alpha_2}\cdots D_n^{\alpha_n}(Q^\Phi_1)^{\alpha_1}u
\]
on $\smash{\mathcal Q^\Phi_{\ssl r\ssr}}$ in the special case with $p=1$.
We also have a similar representation for $\omega_\alpha$ on $\smash{\mathcal Q^\Phi_{\ssl r\ssr}}$ if the module~$M$ is involutive:
\[
\omega_\alpha=D_{p+1}^{\alpha_{p+1}}\cdots D_n^{\alpha_n}D_1^{\alpha_1}\cdots D_p^{\alpha_p} u
=D_{p+1}^{\alpha_{p+1}}\cdots D_n^{\alpha_n}(Q^\Phi_1)^{\alpha_1}\cdots(Q^\Phi_p)^{\alpha_p}u
\]
These representations for $\omega_\alpha$ implies that
$\omega_\alpha=(\p_u(Q^\Phi_1)^{\alpha_1}\cdots(Q^\Phi_p)^{\alpha_p}u\bigr)u_{(\check0,\hat\alpha)}+\psi^\alpha[u]$
on $\smash{\mathcal Q^\Phi_{\ssl r\ssr}}$,
where $\psi^\alpha=\psi^\alpha[u]$ is a differential function of order less than~$|\hat\alpha|$.
}

Thus, in the new coordinates for each $\alpha$ with $\check\alpha=\check0$ and $|\hat\alpha|=r$
the equation $\smash{\hat L^\Phi_{u_\alpha}(z^0)=0}$
can be written in the form $\smash{L_{\omega_\alpha}(z^0)=0}$.
Indeed, in this case we have that $\psi^\beta_{u_\alpha}=1$ if $\beta=\alpha$ and $\psi^\beta_{u_\alpha}=0$ otherwise.
This completes the first step.

In the second step we fix a value of~$\alpha$ with $\check\alpha=\check0$ and $|\hat\alpha|=r-1$.
As $\smash{L_{\omega_\beta}(z^0)=0}$ if $\check\beta=\check0$ and $|\hat\beta|=r$,
the summation multi-index in~\eqref{EqnChainRuleForDiffFunctionsWithMetaSingularModule} with the fixed~$\alpha$
can be assumed to run through the set $B_1=\{\beta\mid |\check\beta|\leqslant 1,\,|\hat\beta|=r-1\}$.
The derivative $\psi^\beta_{u_\alpha}$ is equal to~$1$, $\p_uQ^\Phi_su$ and~$0$ for
$\beta=(\check0,\hat\alpha)$, $\beta=(\delta_s,\hat\alpha)$ and all other values of~$\beta$ from~$B_1$, respectively.
Here $\delta_s$ is the $p$-tuple with the $s$th entry equal to 1 and the other entries equal to 0.
Therefore, the equation $\smash{\hat L^\Phi_{u_{(\check0,\hat\alpha)}}(z^0)=0}$ implies that
\[
L_{\omega_{(\check0,\hat\alpha)}}(z^0)+L_{\omega_{(\delta_s,\hat\alpha)}}(z^0)
\bigl(\p_uQ^\Phi_su\bigr)\big|_{(x,u)=(x^0,u^0)}=0.
\]
Note that $\p_uQ^\Phi_su=-(\Phi_s/\Phi_u)_u$.
We split with respect to the value $\Phi_{su}(x^0,u^0)$ as it is unconstrained.
Thereby we arrive at the equations $L_{\omega_{(\check0,\hat\alpha)}}(z^0)=0$ and $L_{\omega_{(\delta_s,\hat\alpha)}}(z^0)=0$.

Iterating this procedure, before the $\mu$th step, $\mu\in\{3,\dots,r-k\}$, we derive the equations
$L_{\omega_\beta}(z^0)=0$, where the multi-index~$\beta$ runs through values
for which $r-\mu+2\leqslant|\hat\beta|\leqslant r$ and $|\check\beta|\leqslant r-|\hat\beta|$.
Then for each fixed value of~$\alpha$ with $\check\alpha=\check0$ and $|\hat\alpha|=r-\mu+1$
the summation multi-index in~\eqref{EqnChainRuleForDiffFunctionsWithMetaSingularModule}
can be assumed to run through the set $B_\mu=\{\beta\mid |\check\beta|\leqslant \mu-1,\,|\hat\beta|=r-\mu+1\}$.
For $\beta\in B_\mu$ the derivative $\psi^\beta_{u_\alpha}$equals
$\p_u(Q^\Phi_1)^{\beta_1}\cdots(Q^\Phi_p)^{\beta_p}u$ if $\hat\beta=\hat\alpha$
and is zero otherwise.
Therefore, the equation $\smash{\hat L^\Phi_{u_{(\check0,\hat\alpha)}}(z^0)=0}$ implies the condition
\[
\sum_{|\check\beta|\leqslant\mu-1,\, \hat\beta=\hat\alpha}
L_{\omega_\beta}(z^0)\bigl(\p_u(Q^\Phi_1)^{\beta_1}\cdots(Q^\Phi_p)^{\beta_p}u\bigr)\big|_{(x,u)=(x^0,u^0)}=0.
\]
The values $(\p_1^{\beta_1}\cdots\p_p^{\beta_p}\Phi_u)(x^0,u^0)$, $0<|\check\beta|\leqslant\mu-1$, are unconstrained.
Hence by splitting with respect to them or, equivalently, by splitting with respect to
$\smash{\bigl(\p_u(Q^\Phi_1)^{\beta_1}\cdots(Q^\Phi_p)^{\beta_p}u\bigr)\big|_{(x,u)=(x^0,u^0)}}$, $0<|\check\beta|\leqslant\mu-1$,
we obtain the equations $L_{\omega_\beta}(z^0)=0$, $|\hat\beta|=r-\mu+1$ and $|\check\beta|\leqslant\mu-1$.

Finally, after the $(r-k)$th step we derive the system
$L_{\omega_\alpha}(z^0)=0$, $|\hat\alpha|>k$ and $|\alpha|\leqslant r$,
which implies the condition~\eqref{Eq2DWithSingularVectorFieldsOfCoOrderK}.

Conversely, let an $r$th order differential function~$L$ be of the form~\eqref{Eq2DWithSingularVectorFieldsOfCoOrderK}
(after a point transformation).
For an arbitrary smooth function~$\Phi=\Phi(x,u)$ with $\Phi_u\ne0$, we consider
the involutive module~$Q^\Phi$ generated by either the vector fields $Q^\Phi_s=\p_s-(\Phi_s/\Phi_u)\p_u$
in the general case or the vector field $Q^\Phi_1=\p_1+u\p_2+\xi^3\p_3+\dots+\xi^n\p_n-(\Phi_s/\Phi_u)\p_u$
in the special case with $p=1$.
Using~$\bar L$ and~$Q^\Phi$, we construct the differential function $\tilde L^\Phi=\bar L(x,\tilde\Omega_{r,k,p})$, where
\[
\tilde\Omega_{r,k,p}=\big(\omega_\alpha=D_{p+1}^{\alpha_{p+1}}\cdots D_n^{\alpha_n}(Q^\Phi_1)^{\alpha_1}\cdots(Q^\Phi_p)^{\alpha_p}u,\,
|\hat\alpha|\leqslant k,\,|\alpha|\leqslant r\big).
\]
By construction, $\ord \tilde L^\Phi\leqslant k$ for all values of the parameter-function~$\Phi$ with $\Phi_u\ne0$.
Moreover, $\ord \tilde L^\Phi=k$ for almost all values of this parameter-function except those
which satisfy a system of differential equations.
As
\[
L|_{\mathcal Q^\Phi_{\ssl r\ssr}}=\tilde L^\Phi|_{\mathcal Q^\Phi_{\ssl r\ssr}},
\]
where $\Phi$ runs through the set of smooth functions of $(x,u)$ with nonvanishing derivatives with respect to~$u$,
the family $\mathfrak M=\{Q^\Phi\}$ is a co-order~$k$ singular family of $p$-dimensional involutive modules
for the differential function~$L$ in the new coordinates.
We return to the old coordinates.
In view of Proposition~\ref{PropositionOnReducedFormOfFamiliesOfSingularModules} if $p\geqslant2$
or Proposition~\ref{PropositionOnReducedFormOfFamiliesOfOneDimSingularModules} if $p=1$,
the module of vector fields which contains the family $\mathfrak M$ is
a co-order~$k$ meta-singular $(p{+}1)$-dimensional module for the differential function~$L$.
\end{proof}

Excluding the special case of two-dimensional non-involutive meta-singular modules,
the result presented in Theorem~\ref{TheoremOnDiffFunctionsWithMetaSingularModule}
can be formulated in the following way:
A differential function with one dependent and $n$~independent variables admits
a co-order~$k$ meta-singular $(p{+}1)$-dimensional involutive module~$M$
if and only if it can be reduced by a point transformation of the variables to a differential function~$L$, 
where the differentiation with respect to $n-p$ fixed independent variables
in each derivative of the dependent variable among (essential) arguments of~$L$
is of aggregate order not greater than~$k$.

\begin{corollary}\label{CorollaryOn2DUltraSingularVectorFields}
Any differential function with one dependent and $n$ independent variables of positive order does not admit
any meta-singular $(p{+}1)$-dimensional ($0<p<n$) module of singularity co-order~$-\infty$.
\end{corollary}

\begin{proof}
Assume, to the contrary, that 
there exists an $r$th order ($r>0$) differential function~$L$ with one dependent and $n$ independent variables 
that possesses a meta-singular $(p{+}1)$-dimensional ($0<p<n$) module~$M$ of singularity co-order~$-\infty$.
Following the proof of Theorem~\ref{TheoremOnDiffFunctionsWithMetaSingularModule}, 
we re-combine basis elements of~$M$ and change variables 
in order to reduce the chosen basis of~$M$ to the same canonical form as in that proof. 
We also choose a family~$\mathfrak M$ of $p$-dimensional involutive submodules of~$M$ 
that are parameterized by an arbitrary function $\Phi$ of $(x,u)$. 
Repeating the further steps of the proof leads to the conclusion 
that the differential function~$L$ is of the form~\eqref{Eq2DWithSingularVectorFieldsOfCoOrderK} with $k=0$. 
Since $\ord L=r$, the function~$\bar L$ in this representation essentially depends on some $\omega_\alpha$'s with $|\alpha|=r$. 
For each $\alpha$ with $|\alpha|=r$, the expression for $\omega_\alpha$ in view of the system $\smash{\mathcal Q^\Phi_{\ssl r\ssr}}$ 
involves the derivative~$\Phi_\alpha$, and the analogous expressions for other $\omega$'s do not. 
Therefore, the differential function~$\hat L^\Phi$ is of order 0 
for all values of the parameter-function~$\Phi$ except solutions of the equation~$\hat L^\Phi_u=0$,
treated as an $r+1$th order differential equation with respect to~$\Phi$.
This property is preserved by point transformations because of the arbitrariness of~$\Phi$. 
Therefore, in the initial variables we also have $\sco_LM\geqslant0$, 
which contradicts the assumption.
\end{proof}

\begin{remark}\label{RemarkOnSubmodulesOfSingularityCo-ordersLessThanSingularityCo-orderOfMeta-singularModule}
Clearly a meta-singular  (resp.\ meta-regular) $(p{+}1)$-dimensional module~$M$ for a differential function~$L$ 
may contain $p$-dimensional involutive modules whose singularity co-orders
are less than the singularity co-order~$\sco_LM=:k$ of the entire module~$M$.
Consider a family~$\mathfrak M=\{Q^\Phi=\langle Q_s-(\Phi_s/\Phi_u)Q^0,\, s=1,\dots,p\rangle\}$
of $p$-dimensional involutive submodules of~$M$ which are parameterized by an arbitrary function $\Phi=\Phi(x,u)$
and assume that $\sco_L\mathfrak M=k$.
Here the basis vector fields~$Q_s$ and~$Q^0$ of~$M$ are assumed to be reduced to the form presented
in the beginning of the proof of Theorem~\ref{TheoremOnDiffFunctionsWithMetaSingularModule}.
Then the values of~$\Phi$ for which $\sco_LQ^\Phi<k$ are solutions of the system
\[
\sum_{|\check\alpha|\leqslant r-k}
\bar L_{\omega_\alpha}(x,\tilde\Omega_{r,k,p})\big(\p_u(Q^\Phi_1)^{\alpha_1}\cdots(Q^\Phi_p)^{\alpha_p}u\big)=0, \quad |\hat\alpha|=k,
\]
where $\bar L$ and $\tilde\Omega_{r,k,p}$ are defined in Theorem~\ref{TheoremOnDiffFunctionsWithMetaSingularModule} and its proof, respectively.
In other words, the regular values of~$\Phi$ associated with the submodules of the maximal singularity co-order~$k$ in~$\mathfrak M$
satisfy, for some $\hat\alpha$ with $|\hat\alpha|=k$, the inequality
\[
\sum_{|\check\alpha|\leqslant r-k}
\bar L_{\omega_\alpha}(x,\tilde\Omega_{r,k,p})\big(\p_u(Q^\Phi_1)^{\alpha_1}\cdots(Q^\Phi_p)^{\alpha_p}u\big)\ne0.
\]
\end{remark}

\begin{remark}\label{RemarkOnMeta-regularModulesOfDiffFunctions}
In general, a differential function~$L$ (of order $r>0$) admits an infinite number of meta-regular modules of various dimensions. 
Indeed, suppose that at a point $z^0=(x^0,u_{\ssl r\ssr}^0)$ of the $r$th order jet space~$\mathrm J^r$, 
the differential function~$L$ has a noncharacteristic direction $(c_1,\dots,c_n)$,~i.e., 
\[
C:=\sum_{|\alpha|=r}L_{u_\alpha}(x^0,u_{\ssl r\ssr}^0)c_1^{\alpha_1}\dots c_n^{\alpha_n}\ne0,
\]
cf.\ \cite[Definition 2.75]{Olver1993}.
We change the independent variables~$x$, $\tilde x_i=X^i(x)$ with $X^1_i(x^0)=c_i$, in a neighborhood of~$x^0$ and set $\tilde u=u$, 
which induces a local coordinate change in~$\mathrm J^r$. 
Denote by $\tilde z^0$ the new coordinates of the point $z^0$.
In the new coordinates, 
the derivative $L_{\tilde u_{r\delta_1}}$ coincides with~$C$ at the point $\tilde z^0$ 
and hence it does not vanish in a neighborhood of this point. 
Then any module $M=\langle Q^0,\dots,Q^p\rangle$ with $0<p<n$, where $Q^0=\p_{\tilde u}$ and 
$\langle Q^1,\dots,Q^p\rangle$ is a submodule of $\langle\p_{\tilde x_2},\dots,\p_{\tilde x_n}\rangle$, 
is a meta-regular module for~$L$. 
Pushing forward elements of~$M$ by the inverse change of coordinates, 
we construct a meta-regular module for~$L$ in the initial variables~$(x,u)$. 
Due to the functional freedom in choosing~$X^1$ and, additionally if $p<n-1$, 
the functional freedom in choosing~$p$-dimensional submodules in $\langle\p_{\tilde x_2},\dots,\p_{\tilde x_n}\rangle$, 
the differential function~$L$ admits an infinite number of meta-regular modules of any dimension~$p$ with $0<p<n$.
\end{remark}

\section{Singular modules of vector fields for differential equations}\label{SectionOnSingularModulesOfVectorFieldsForDiffEqs}

An involutive module~$Q$ satisfying the rank condition is called \emph{(strongly) singular} for a differential equation~$\mathcal L$
if it is singular for the differential function~$L[u]$ constituting the left hand side of the canonical representation $L[u]=0$
of the equation~$\mathcal L$. 
Strongly regular modules are defined similarly.
Usually we will omit the attribute ``strongly''.

As left hand sides of differential equations are defined up to multipliers which are nonvanishing differential functions,
the conditions from Definition \ref{DefinitionOfSingularModuleForDiffFunction} can be weakened
when considering differential equations.

\begin{definition}\label{DefinitionOfWeaklySingularModuleForDiffEq}
A $p$-dimensional ($0<p<n$) involutive module~$Q$ which satisfies the rank condition
is called \emph{weakly singular}
for the differential equation~$\mathcal L$: $L[u]=0$ of essential order~$r>0$
if there exists a differential function $\tilde L=\tilde L[u]$ of an order less than~$r$
and a nonvanishing differential function $\lambda=\lambda[u]$ of an order not greater than~$r$
such that $L|_{\mathcal Q_{\ssl r\ssr}}=(\lambda\tilde L)|_{\mathcal Q_{\ssl r\ssr}}$.
Otherwise we call~$Q$ a \emph{weakly regular} module for the differential equation~$\mathcal L$.
If the minimal order of differential functions whose restrictions on $\mathcal Q_{\ssl r\ssr}$ coincide,
up to nonvanishing functional multipliers, with $L|_{\mathcal Q_{\ssl r\ssr}}$ is equal to $k\in\{-\infty,0,1,\dots,r\}$,
then the module~$Q$ is said to be \emph{weakly singular of co-order $k$} for the differential equation~$\mathcal L$.
\end{definition}

In particular, as in the case of strong regularity, weakly regular modules for the differential equation~$\mathcal L$
are defined to have weak singularity co-order $r=\ord L$.
An involutive module~$Q$ is considered to be weakly ultra-singular for the differential equation $\mathcal L$: $L[u]=0$
if it is strongly ultra-singular for~$\mathcal L$.
We write $\wsco_{\mathcal L}Q$ for the weak singularity co-order of the module~$Q$ for the equation~$\mathcal L$.

Strong singularity implies weak singularity and consequently weak regularity implies strong regularity.
The weak singularity co-order is always less or equal and may be strictly less than the strong singularity co-order.
Thus, strongly regular modules may be singular in the weak~sense.

\begin{example}\label{ExampleOnWeakSingularity}
To illustrate the relation between strong and weak singularity co-order, consider the equation $x_2u_{111}+x_1u_{222}=e^{u_{33}}(u_{3}+u)$.
It possesses the two-dimensional singular module $\langle\p_1,\p_2\rangle$
whose strong and weak singularity co-orders equal~2 and~1, respectively.
The same module $\langle\p_1,\p_2\rangle$ is strongly regular and
is of weak singularity co-order~1 for the equation $x_2u_{11}+x_1u_{22}=e^{u_{33}}(u_3+u)$.
\end{example}

Let $\hat L$ be the differential function associated with~$L$ on the manifold~$\mathcal Q_{\ssl r\ssr}$
by excluding $\mathcal Q$-principal derivatives.
Without loss of generality, we can replace
the differential functions~$\lambda$ and $\tilde L$
in Definition~\ref{DefinitionOfWeaklySingularModuleForDiffEq}
by their counterparts associated with them on~$\mathcal Q_{\ssl r\ssr}$.
Then the equality $L=\lambda\tilde L$ on the manifold~$\mathcal Q_{\ssl r\ssr}$
is equivalent to the representation $\hat L=\lambda\tilde L$.
We suppose that $\hat L$ is of maximal rank in a derivative~$u_\alpha$ of the highest order~$k$ appearing in this differential function, i.e.,
$\smash{\hat L_{u_{(\check 0,\hat\alpha)}}\ne0}$ for some $\hat\alpha$ with $|\hat\alpha|=k$ on the solution manifold of the equation~$\hat L=0$
(see the notation in the proof of Theorem~\ref{TheoremOnDiffFunctionsWithMetaSingularModule}).
Then we can set $\tilde L=\hat L$ and $\lambda=1$,
which means that the weak singularity co-order of~$Q$ for the equation~$\mathcal L$: $L=0$ equals the order~$k$ of $\hat L$
and, consequently, the strong singularity co-order of~$Q$ for this equation.
Therefore, in this case there is an entirely algorithmic procedure for testing whether an
involutive module is weakly singular for a partial differential equation.

\begin{proposition}\label{PropositionUselessnessOfWeaklySingularFamiliesOfVectorFields}
A differential equation~$\mathcal L$: $L[u]=0$ with one dependent and $n$~independent variables possesses
a co-order~$k$ weakly singular $p$-dimensional module of vector fields ($0<p\leqslant n$) if and only if
this module is co-order~$k$ strongly singular for~$\mathcal L$
(possibly in a representation differing from $L[u]=0$ by multiplication by a nonvanishing differential function~of~$u$).
\end{proposition}

\begin{proof}
In the notation of Definition~\ref{DefinitionOfWeaklySingularModuleForDiffEq},
the equation $L=0$ is equivalent to the equation $\tilde L=0$, and $\sco_{\tilde L}Q=k$.
\end{proof}

\begin{example}\label{ExampleOnWeakAndStrongSingularityForEquivEqs}
Multiplying the equations from Example~\ref{ExampleOnWeakSingularity}
by the nonvanishing differential function~$e^{-u_{33}}$,
we obtain the equations $e^{-u_{33}}(x_2u_{111}+x_1u_{222})=u_{3}+u$
and $e^{-u_{33}}(x_2u_{11}+x_1u_{22})=u_3+u$, respectively.
Then the module $\langle\p_1,\p_2\rangle$ is a singular module
of both weak and strong singularity co-orders~1 for these equations.
\end{example}

The following assertion is proven in the same way as Proposition~\ref{PropositionOnModulesWithSingularSubmodules}.

\begin{proposition}\label{PropositionOnModulesWithWeaklySingularSubmodules}
Suppose that $\mathcal L$ is a differential equation with respect to the unknown function~$u$
of $n$~independent variables~$x$
and let $Q$ be an involutive module of vector fields defined in the space of $(x,u)$,
which is of dimension less than~$n$ and satisfies the rank condition.
Then \mbox{$\wsco_{\mathcal L}Q\leqslant\wsco_{\mathcal L}\check Q$}
for any involutive submodule~$\check Q$ of~$Q$.
In particular, the module~$Q$ is weakly singular for~$\mathcal L$
if it contains a submodule weakly singular for~$\mathcal L$.
\end{proposition}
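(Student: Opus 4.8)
The plan is to imitate the proof of Proposition~\ref{PropositionOnModulesWithSingularSubmodules}, replacing ``coincidence on the characteristic manifold'' by ``coincidence up to a nonvanishing multiplier.'' The structural fact driving the argument is the same as there: if $\check Q$ is an involutive submodule of $Q$, then $\check Q$ automatically satisfies the rank condition, and its characteristic system $\check Q[u]=0$ consists of fewer equations than that of $Q$. Consequently the associated jet-space manifolds satisfy the inclusion $\mathcal Q_{\ssl r\ssr}\subseteq\check{\mathcal Q}_{\ssl r\ssr}$, where $r=\ord L$. Everything reduces to the monotonicity of restriction under this inclusion.

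First I would set $k=\wsco_{\mathcal L}\check Q$ and dispose of the trivial cases. If $\check Q$ is weakly regular ($k=r$) there is nothing to prove, since $\wsco_{\mathcal L}Q\leqslant r$ holds for every module. The main case is $k<r$: by Definition~\ref{DefinitionOfWeaklySingularModuleForDiffEq} there exist a differential function $\tilde L=\tilde L[u]$ with $\ord\tilde L\leqslant k$ and a nonvanishing differential function $\lambda=\lambda[u]$ with $\ord\lambda\leqslant r$ such that $L|_{\check{\mathcal Q}_{\ssl r\ssr}}=(\lambda\tilde L)|_{\check{\mathcal Q}_{\ssl r\ssr}}$.

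Next I would restrict this identity to the smaller manifold. Since $\mathcal Q_{\ssl r\ssr}\subseteq\check{\mathcal Q}_{\ssl r\ssr}$, the equality $L|_{\check{\mathcal Q}_{\ssl r\ssr}}=(\lambda\tilde L)|_{\check{\mathcal Q}_{\ssl r\ssr}}$ immediately yields $L|_{\mathcal Q_{\ssl r\ssr}}=(\lambda\tilde L)|_{\mathcal Q_{\ssl r\ssr}}$. The orders of $\tilde L$ and $\lambda$ are unaffected by this restriction, and $\lambda$ remains nonvanishing on the subset $\mathcal Q_{\ssl r\ssr}$. Hence the same pair $(\tilde L,\lambda)$ witnesses weak singularity of $Q$ for $\mathcal L$ of co-order at most $k$, giving $\wsco_{\mathcal L}Q\leqslant k=\wsco_{\mathcal L}\check Q$. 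The weakly ultra-singular case ($k=-1$) is covered identically: strong ultra-singularity of $\check Q$ means $L|_{\check{\mathcal Q}_{\ssl r\ssr}}\equiv0$, whence $L|_{\mathcal Q_{\ssl r\ssr}}\equiv0$ as well. The ``in particular'' claim then follows by taking any weakly singular submodule $\check Q$ and observing $\wsco_{\mathcal L}Q\leqslant\wsco_{\mathcal L}\check Q<r$.

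Since the whole argument rests on the inclusion of characteristic manifolds, there is essentially no genuine obstacle. The only point requiring a moment's care, and the sole difference from the strongly singular case, is checking that the nonvanishing hypothesis on $\lambda$ together with the order bounds on $\tilde L$ and $\lambda$ survive the passage to the smaller manifold, which they clearly do.
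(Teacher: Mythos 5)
Your proposal is correct and follows exactly the route the paper intends: the paper states that this proposition ``is proven in the same way as Proposition~\ref{PropositionOnModulesWithSingularSubmodules}'', i.e., via the inclusion $\mathcal Q_{\ssl r\ssr}\subseteq\check{\mathcal Q}_{\ssl r\ssr}$ and restriction of the witnessing identity $L=\lambda\tilde L$ to the smaller manifold, which is precisely your argument. Your explicit handling of the regular and ultra-singular edge cases is a harmless elaboration of the same idea.
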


Modules that are meta-singular (resp.\ meta-regular) for differential equations in the strong sense 
are defined in a way similar to strongly singular (resp.\ regular) modules.

\begin{definition}\label{DefinitionOfStronglylyMetaSingularModuleForDiffEq}
A $(p{+}1)$-dimensional ($0<p<n$) module~$M$ is called \emph{meta-singular} (resp.\ \emph{meta-regular}) 
for the differential equation~$\mathcal L$ \emph{in the strong sense}
if it is meta-singular (resp.\ meta-regular) 
for the differential function~$L[u]$ constituting the left hand side of the canonical representation $L[u]=0$
of the equation~$\mathcal L$.
\end{definition}

The notion of meta-singular (resp.\ meta-regular) modules for differential equations in the weak sense 
is defined by analogy with the case of differential functions.

\begin{definition}\label{DefinitionOfWeaklyMetaSingularModuleForDiffEq}
A $(p{+}1)$-dimensional ($0<p<n$) module~$M$ is called \emph{meta-singular in the weak sense} for the differential equation~$\mathcal L$
if any $p$-dimensional involutive submodule of~$M$ that satisfies the rank condition is weakly singular for~$\mathcal L$ and
the module~$M$ contains a family $\mathfrak M=\{Q^\Phi\}$ of such submodules
parameterized by an arbitrary function $\Phi=\Phi(x,u)$ of all independent and dependent variables.
The \emph{weak singularity co-order} of the meta-singular module~$M$ for~$\mathcal L$, 
which is denoted by $\wsco_{\mathcal L}M$,
coincides with the maximum of the weak singularity co-orders of its involutive submodules satisfying the rank condition.
\end{definition}

\begin{definition}\label{DefinitionOfWeaklyMetaRegularModuleForDiffEq}
A $(p{+}1)$-dimensional ($0<p<n$) module~$M$ is called \emph{meta-regular in the weak sense} for the differential equation~$\mathcal L$ 
if the module~$M$ contains a family $\mathfrak M=\{Q^\Phi\}$ of $p$-dimen\-sional involutive submodules that are weakly regular for~$\mathcal L$
and parameterized by an arbitrary function $\Phi=\Phi(x,u)$ of all independent and dependent variables, 
and thus $\sco_LM=\ord L$.
\end{definition}

\begin{theorem}\label{TheoremOnWeaklyMetaSingularModuleForDiffEq}
An $r$th order ($r>0$) differential equation~$\mathcal L$: $L[u]=0$ of maximal rank with one dependent and $n$~independent variables possesses
a co-order $k$ weakly meta-singular $(p{+}1)$-dimensional module~$M$ of vector fields if and only if
the differential function~$L$ can be represented, up to point transformations of variables, in the form
\begin{equation}\label{EqWithWeaklyMetaSingularModule}
L=\bar\lambda[u]\bar L(x,\Omega_{r,k,p}),
\end{equation}
where $\bar\lambda$ is a nonvanishing differential function of order not greater than~$r$,
$\bar L$ is a smooth function of~$x$ and $\Omega_{r,k,p}=\big(\omega_\alpha,\,|\alpha|\leqslant r,\,\alpha_{p+1}+\dots+\alpha_n\leqslant k\big)$
that essentially depends on some $\omega_\alpha$ with $\alpha_{p+1}+\dots+\alpha_n=k$.
Here $\omega_\alpha=u_\alpha$ or, only for the case $p=1$,
$\omega_\alpha=D_2^{\alpha_2}\cdots D_n^{\alpha_n}(D_1+uD_2+\xi^3D_3+\dots+\xi^nD_n)^{\alpha_1} u$
for some fixed smooth functions $\xi^i=\xi^i(x,u)$, $i=3,\dots,n$.
The value of~$k$ should be minimal among all possible representations
of the form~\eqref{EqWithWeaklyMetaSingularModule} for the differential function~$L$.
Then the module~$M$ is co-order $k$ strongly meta-singular
for the equation $\bar L(x,\Omega_{r,k,p})=0$, which is equivalent to~$\mathcal L$.
\end{theorem}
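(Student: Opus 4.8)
The plan is to reduce the statement to its strong counterpart, Theorem~\ref{TheoremOnDiffFunctionsWithMetaSingularModule}, by establishing the bridging claim that a $(p{+}1)$-dimensional module~$M$ is weakly meta-singular of co-order~$k$ for the maximal-rank equation $\mathcal L$: $L[u]=0$ if and only if there exists a nonvanishing differential function $\Lambda=\Lambda[u]$ of order at most~$r$ such that $M$ is (strongly) meta-singular of co-order~$k$ for the differential function $\bar L:=L/\Lambda$. Granting this, one writes $\bar L$ in the normalized form of Theorem~\ref{TheoremOnDiffFunctionsWithMetaSingularModule}, i.e.\ $\bar L=\check L(x,\Omega_{r,k,p})$ in suitable coordinates, and multiplication by~$\Lambda$ gives exactly~\eqref{EqWithWeaklyMetaSingularModule}; the requirement that $k$ be minimal corresponds to the weak singularity co-order being exactly~$k$.

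The sufficiency direction is straightforward. If $L=\Lambda\check L(x,\Omega_{r,k,p})$, then by the converse part of Theorem~\ref{TheoremOnDiffFunctionsWithMetaSingularModule} the function $\check L(x,\Omega_{r,k,p})$ is strongly meta-singular of co-order~$k$, with the explicit family $\mathfrak M=\{Q^\Phi\}$ produced there. For each submodule~$Q^\Phi$ one has $L|_{\mathcal Q^\Phi_{(r)}}=\Lambda|_{\mathcal Q^\Phi_{(r)}}\,\check L|_{\mathcal Q^\Phi_{(r)}}$, where $\Lambda|_{\mathcal Q^\Phi_{(r)}}$ is nonvanishing and $\check L|_{\mathcal Q^\Phi_{(r)}}$ has order at most~$k$; hence every $Q^\Phi$ is weakly singular of co-order at most~$k$, so $M$ is weakly meta-singular, and the minimality of~$k$ forces the co-order to equal~$k$.

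For the necessity direction I would first put~$M$ into the normal form of Propositions~\ref{PropositionOnReducedFormOfFamiliesOfSingularModules} and~\ref{PropositionOnReducedFormOfFamiliesOfOneDimSingularModules}, fix a point $z^0\in J^r$, pass to the coordinates $\{x_i,\omega_\alpha\}$ used in the proof of Theorem~\ref{TheoremOnDiffFunctionsWithMetaSingularModule}, and consider the submodules~$Q^\Phi$ of~$\mathfrak M$ with $z^0\in\mathcal Q^\Phi_{(r)}$. Weak singularity of co-order~$k$ now means that the associated function factorizes, $\hat L^\Phi=\hat\lambda^\Phi\hat M^\Phi$ with $\hat\lambda^\Phi$ nonvanishing and $\ord\hat M^\Phi\le k$, so that $\hat L^\Phi_{u_{(\check0,\hat\alpha)}}$ vanishes on $\{\hat L^\Phi=0\}$ for every $|\hat\alpha|>k$. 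Feeding this into the chain-rule identity~\eqref{EqnChainRuleForDiffFunctionsWithMetaSingularModule} and running the same downward induction on~$|\hat\beta|$ (with inner induction on~$|\check\beta|$) as in Theorem~\ref{TheoremOnDiffFunctionsWithMetaSingularModule}, the zeros on the right-hand sides are replaced by terms proportional to~$L$; the target upshot is the family of relations $L_{\omega_\beta}=0$ on the manifold~$\mathcal L$ for all~$\beta$ with $|\beta|\le r$, $|\hat\beta|>k$. Since $\mathcal L$ is of maximal rank, this says precisely that the regular hypersurface $\{L=0\}$ is tangent to every high-order coordinate direction $\p_{\omega_\beta}$ and hence is locally a cylinder $\{\check L(x,\Omega_{r,k,p})=0\}$ over these directions; as two maximal-rank defining functions of the same hypersurface differ by a nonvanishing factor, one obtains $L=\Lambda\check L(x,\Omega_{r,k,p})$, and returning to the original coordinates yields~\eqref{EqWithWeaklyMetaSingularModule}.

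The main obstacle is carrying the nonvanishing multiplier through the inductive step. In the strong case the right-hand sides of~\eqref{EqnChainRuleForDiffFunctionsWithMetaSingularModule} are zero, which is what legitimizes the term-by-term splitting with respect to the unconstrained derivatives of~$\Phi$; in the weak case these right-hand sides are proportional to~$L$ with coefficients $\mu_\alpha=\p_{u_\alpha}\log|\hat\lambda^\Phi|$ that may themselves depend on the very $\Phi$-derivatives used for splitting, so the clean splitting of the strong case breaks down. The resolution I expect is to perform the whole argument on the solution manifold~$\mathcal L$, where every term proportional to~$L$ drops out and the splitting recovers its strong-case form, thereby producing the on-shell relations $L_{\omega_\beta}|_{\mathcal L}=0$ rather than identities in the full jet space. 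The passage from these on-shell relations to the global factorization $L=\Lambda\check L$, together with the verification that $\Lambda$ is a genuine nonvanishing differential function of order at most~$r$ and that $\check L_{\omega_\alpha}\ne0$ for some $|\hat\alpha|=k$ (so that the co-order is exactly~$k$), is precisely where the maximal-rank hypothesis is indispensable.
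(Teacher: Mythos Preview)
Your proposal is correct and follows essentially the same route as the paper's proof: restrict to points $z^0\in\mathcal L$, use the factorization $\hat L^\Phi=\hat\lambda^\Phi\hat M^\Phi$ to deduce $\hat L^\Phi_{u_{(\check0,\hat\alpha)}}(z^0)=0$ for $|\hat\alpha|>k$, run the same downward induction with splitting over the unconstrained $\Phi$-derivatives to obtain $L_{\omega_\beta}|_{\mathcal L}=0$, and then invoke maximal rank to factor~$L$. The paper phrases the last step as ``Hadamard lemma plus simultaneous integration'' (with a reference to \cite{Zhdanov&Tsyfra&Popovych1999}) rather than your cylinder picture, and it does not introduce the intermediate ``bridging claim'' (that claim is in fact stated \emph{after} the theorem, as Corollary~\ref{CorollaryUselessnessOfWeaklySingularFamiliesOfVectorFields}); but these are presentational differences, not mathematical ones.
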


\begin{proof}
We will freely use the notations and definitions from the proof of Theorem~\ref{TheoremOnDiffFunctionsWithMetaSingularModule}.

To begin with, assume that the differential equation~$\mathcal L$: $L[u]=0$ is of maximal rank and admits
a co-order~$k$ weakly meta-singular $(p{+}1)$-dimensional module of vector fields.
Up to point transformations and changes of module basis, it suffices to consider only
the family~$\mathfrak M=\{Q^\Phi=\langle Q_s-(\Phi_s/\Phi_u)Q^0,\, s=1,\dots,p\rangle\}$
of $p$-dimensional involutive submodules of~$M$ which are parameterized by an arbitrary function $\Phi=\Phi(x,u)$.
Here $Q^0$ is reduced to the shift operator~$\p_u$
and the vector fields $Q^s$ take either the form $Q_s=\p_s$ or,
if $p=1$ and the module~$M$ is not involutive, the form $Q_1=\p_1+u\p_2+\xi^3\p_3+\dots+\xi^n\p_n$.

Given any point $\smash{z^0=(x^0,u_{\ssl r\ssr}^0)\in\mathcal L\subset \mathrm J^r}$
we choose the modules from $\mathfrak M=\{Q^\Phi\}$ for which $\smash{z^0\in\mathcal Q^\Phi_{\ssl r\ssr}}$.
It then follows that the values of the derivatives of~$\Phi$ with respect to only~$x_1$, \dots,~$x_n$
at the point $(x^0,u^0)$, which contain differentiation with respect to some~$x_s$,
can be expressed via $\smash{u_{\ssl r\ssr}^0}$ and values of derivatives of~$\Phi$ in $(x^0,u^0)$,
containing differentiation with respect to~$u$.
These latter values are then unconstrained.

We next derive from~$L$ the differential function $\hat L^\Phi$ by excluding,
in view of the system $\smash{\mathcal Q^\Phi_{\ssl r\ssr}}$,
$\mathcal Q^\Phi$-principal derivatives of~$u$, which involve differentiations with respect to~$x_s$.
The fact that for any function~$\Phi$ with $\Phi_u\ne0$ the module $Q^\Phi$ is at most of co-order~$k$ weak singularity for~$\mathcal L$ leads to
the condition $\hat L^\Phi_{u_{(\check0,\hat\alpha)}}(z_0)=0$, $|\hat\alpha|=k+1,\dots,r$.
As in the proof of Theorem~\ref{TheoremOnDiffFunctionsWithMetaSingularModule} we now iteratively employ this condition,
starting out with the greatest value~$r$ of~$|\hat\alpha|$ and re-writing the derivatives in the new coordinates
$(x_i,\omega_\alpha,|\alpha|\leqslant r)$ of~$\mathrm J^r$ and in terms of~$L$.
Thereby we obtain
\[
L_{\omega_\alpha}(z^0)=0, \quad|\hat\alpha|>k,\quad |\alpha|\leqslant r,
\]
which is satisfied for any $z^0\in\mathcal L$.
Applying the Hadamard lemma to each of these equations and using Lemma~\ref{LemmaOnNonvanishingMultiplier},
we obtain~\eqref{EqWithWeaklyMetaSingularModule}.

Conversely, suppose that the $r$th order differential function~$L$ is of the form~\eqref{EqWithWeaklyMetaSingularModule}
(after a point transformation).
For an arbitrary smooth function~$\Phi=\Phi(x,u)$ with $\Phi_u\ne0$, we consider
the involutive module~$Q^\Phi$ generated by either the vector fields $Q^\Phi_s=\p_s-(\Phi_s/\Phi_u)\p_u$
in the general case or the vector field $Q^\Phi_1=\p_1+u\p_2+\xi^3\p_3+\dots+\xi^n\p_n-(\Phi_s/\Phi_u)\p_u$
in the special case with $p=1$.
Using~$\bar L$ and~$Q^\Phi$, we construct the differential function $\tilde L^\Phi=\bar L(x,\tilde\Omega_{r,k,p})$, where
\[
\tilde\Omega_{r,k,p}=\big(\omega_\alpha=D_{p+1}^{\alpha_{p+1}}\cdots D_n^{\alpha_n}(Q^\Phi_1)^{\alpha_1}\cdots(Q^\Phi_p)^{\alpha_p}u,\,
|\hat\alpha|\leqslant k,\,|\alpha|\leqslant r\big).
\]
By construction, the order of the differential function~$\tilde L^\Phi$ is not greater than~$k$
for all allowed values of the parameter-function~$\Phi$.
Moreover, $\ord \tilde L^\Phi=k$ for almost all values of this parameter-function except those
which satisfy a system of differential equations.
As
\[
L\big|_{\mathcal Q^\Phi_{\ssl r\ssr}}
=(\bar\lambda\bar L)\big|_{\mathcal Q^\Phi_{\ssl r\ssr}}
=(\bar\lambda\tilde L^\Phi)\big|_{\mathcal Q^\Phi_{\ssl r\ssr}},
\]
where $\Phi$ runs through the set of smooth functions of $(x,u)$ with nonvanishing derivative with respect to~$u$,
the family $\mathfrak M=\{Q^\Phi\}$ is a co-order~$k$ weakly singular family of $p$-dimensional involutive modules
for the differential equation~$\mathcal L$ in the new coordinates.
We return to the old coordinates.
In view of Proposition~\ref{PropositionOnReducedFormOfFamiliesOfSingularModules}, if $p\geqslant2$,
or Proposition~\ref{PropositionOnReducedFormOfFamiliesOfOneDimSingularModules} if $p=1$,
the module~$M$ of vector fields which is spanned by the family $\mathfrak M$ is
a co-order~$k$ meta-singular $(p{+}1)$-dimensional module for the differential function~$L/\bar\lambda$.
Therefore, this module is
a co-order~$k$ weakly meta-singular $(p{+}1)$-dimensional module for the differential equation~$\mathcal L$.
\end{proof}

\begin{remark}\label{RemarkOnWeaklyAndStronglyMetaSingularModuleForDiffEq}
Theorem~\ref{TheoremOnWeaklyMetaSingularModuleForDiffEq} implies
that, up to equivalence of equations by nonvanishing multipliers that are differential functions,
we do not need to specify which kind of meta-singularity, weak or strong,
of differential equations is meant.
\end{remark}

\section{Singularity of reduction modules and order\\ of reduced equations}\label{SectionOnReductionModulesAndParametricFamiliesOfSolutions}

\begin{definition}\label{DefinitionOfSingularRedOps}
An involutive module~$Q$ is called a \emph{singular reduction module} of a differential equation~$\mathcal L$
if $Q$ is both a reduction module of~$\mathcal L$ and a weakly singular module of~$\mathcal L$.
\end{definition}

Recall that differential equations are equivalent
if they differ by a multiplier that is a nonvanishing differential function.
The essential order of a differential equation~$\mathcal L$
is the minimal order of equations in the equivalence class of~$\mathcal L$.

The following assertions are obtained as direct extensions of the corresponding results
for the case $n=2$~\cite{Kunzinger&Popovych2010a}.

\begin{theorem}\label{TheoremOnReductionModulesAndReducedEqs}
Let $Q$ be a $p$-dimensional reduction module ($0<p\leqslant n$) of an equation~$\mathcal L$. 
Then the weak singularity co-order of~$Q$ for~$\mathcal L$
coincides with the essential order of a corresponding reduced differential equation.
\end{theorem}

\begin{proof}
Suppose that $p<n$ and $k:=\wsco_{\mathcal L}Q>0$.
Then any point transformation preserves the value of~$\wsco_{\mathcal L}Q$.
Locally, by a point transformation we can achieve the situation
that in the new variables the module~$Q$ has a basis $(Q_s=\p_s,\,s=1,\dots,p)$.
(Again we use the same notation for old and new variables).
An ansatz~$\mathcal A$ constructed with~$Q$ in the new variables is $u=\varphi(\omega)$,
where $\varphi=\varphi(\omega)$ is the new unknown function, $\omega=(\omega_1,\dots,\omega_{n-p})$,
and $\omega_1=x_{p+1}$, \dots, $\omega_{n-p}=x_n$ are the invariant independent variables.
The system~$\mathcal Q_{\ssl r\ssr}$ consists of the equations $u_\alpha=0$,
where the multi-index $\alpha=(\alpha_1,\dots,\alpha_n)$ satisfies the conditions
$\alpha_1+\dots+\alpha_p>0$, $|\alpha|\leqslant r:=\ord L$.

As $Q\in\mathcal R^p(\mathcal L)$, there are differential functions
$\smash{\check\lambda=\check\lambda[\varphi]}$ and $\smash{\check L=\check L[\varphi]}$
of an order not greater than~$r$ such that $L|_{\mathcal A}=\check\lambda\check L$
(cf.\ \cite{Zhdanov&Tsyfra&Popovych1999}).
The function~$\check\lambda$ does not vanish and may depend on the variables~$x_s$ as parameters.
Modulo equivalence generated by nonvanishing multipliers we may assume that
the function~$\check L$ is of minimal order~$\check r$.
Then the reduced equation~$\check{\mathcal L}$: $\check L=0$ has essential order~$\check r$.

Now since $\wsco_{\mathcal L}Q=k$ it follows that
there exists a strictly $k$th order differential function $\tilde L=\tilde L[u]$
and a nonvanishing differential function $\tilde\lambda=\tilde\lambda[u]$ of an order not greater than~$r$,
which depend at most on~$x$ and derivatives of~$u$ with respect to~$x_{p+1}$, \dots, $x_n$
such that $(L-\tilde\lambda\tilde L)|_{\mathcal Q_{\ssl r\ssr}}=0$.

Suppose first that $\check r$ would be less than~$k$. In this case
we can use $\tilde\lambda_{\rm new}=\check\lambda|_{u\rightsquigarrow\varphi}$
and $\tilde L_{\rm new}=\check L|_{u\rightsquigarrow\varphi}$
in the definition of weak singularity
to arrive at the contradiction $\wsco_{\mathcal L}Q\leqslant\ord\tilde L_{\rm new}=\check r<k$.
Thus $\check r\geqslant k$.
(Here ``$u\rightsquigarrow\varphi$'' means
that the derivatives $\p_{p+1}^{\alpha_{p+1}}\dots\p_n^{\alpha_n}u$
should be substituted instead of the derivatives $\p_{\omega_{p+1}}^{\alpha_{p+1}}\dots\p_{\omega_n}^{\alpha_n}\varphi$.)

On the other hand, if $\check r>k$ we conclude that $\check\lambda\check L=(\tilde\lambda\tilde L)|_{\mathcal A}$,
where the variables~$x_s$ play the role of parameters.
Fixing a value $x_s^0$ of $x_s^{}$ for each~$s$, we obtain the representation
\[
\check L=\frac{\tilde\lambda|_{\mathcal A}}{\check\lambda}\biggr|_{x_s^{}=x_s^0}
\ \tilde L\,\bigg|_{\mathcal A,\; x_s^{}=x_s^0}.
\]
However, as $\ord\tilde L|_{\mathcal A,\; x_s^{}=x_s^0}\leqslant k< \check r$,
this representation contradicts the condition
that $\check r$ is the essential order of the reduced equation~$\check{\mathcal L}$.

We conclude that $\check r=k>0$.
Then the value of $\check r$ does not depend on the choice of ansatz
among those constructed with~$Q$.
The inverse change of variables preserves the claimed property.

Consider now the case $k:=\wsco_{\mathcal L}Q\leqslant0$.
In general, the values $k=-\infty$ and $k=0$
are not invariant with respect to point transformations
since they may be mapped to each other by such transformations.
The essential order of the related reduced equations may depend on the choice of ansatzes
in the set of ansatzes constructed with~$Q$.
This is why straightening out the basis vector fields~$Q_s$ and choosing an ansatz
should be consistent with the structure of the restriction of~$L$ to the manifold~$\mathcal Q_{\ssl r\ssr}$
in order to result in the proper reduction procedure.
In view of the condition $k\leqslant0$ there exists
a smooth function $\tilde L=\tilde L(x,u)$ with $\ord\tilde L=k$
and a nonvanishing differential function $\tilde\lambda=\tilde\lambda[u]$ of an order not greater than~$r$
such that $L=\tilde\lambda\tilde L$ on $\mathcal Q_{\ssl r\ssr}$.
Lemma~\ref{LemmaOnReformulationOfCondInvCriterion} implies
that $Q$ is a reduction module of the equation $\tilde{\mathcal L}$: $\tilde L=0$,
and thus $Q_s\tilde L=\Lambda^s\tilde L$ for some smooth function~$\Lambda^s=\Lambda^s(x,u)$.
Due to Lemma~\ref{LemmaOnNonvanishingMultiplier},
this gives the representation $\tilde L=\check\lambda\check L$,
where $\check\lambda$ is a nonvanishing smooth function of $(x,u)$,
and the smooth function $\check L=\check L(x,u)$ satisfies the system $Q_s\check L=0$.
Therefore, the function~$\check L$ is an invariant of the module~$Q$
and hence it can be presented as a function of the basis invariants~$I^0$,~\dots, $I^{n-p}$,
$\check L=\zeta(I^0,\dots,I^{n-p})$ for a smooth function~$\zeta$ of its arguments.
Note that $k\leqslant\ord\check L\leqslant 0$
since we have $L=\tilde\lambda\check\lambda\check L$ on $\mathcal Q_{\ssl r\ssr}$,
and $\tilde\lambda\check\lambda$ is a nonvanishing differential function~of~$u$.

If $k=0$, then $\ord\check L=0$
and thus the function~$\zeta$ essentially depends at least on one basis invariant.
Up to permutation of the basis invariants, we can assume that this invariant is~$I^0$.
Then the ansatz~\eqref{EqGenAnsatzForm} reduces~$\mathcal L$ to the equation $\zeta(\varphi,\omega)=0$,
and the order of this reduced equation as equation for~$\varphi$ equals zero.

Finally suppose that $k=-\infty$.
In view of the rank condition for~$Q$,
there exists an element of the basis~$(I^0,\dots,I^{n-p})$ of $Q$-invariants
whose derivative with respect to~$u$ does not vanish.
Up to permutation of the basis,
we can assume that this invariant is~$I^0$.
We now change coordinates in the space of~$(x,u)$:
$y_0=I^0(x,u)$, $y_\sigma=I^\sigma(x,u)$, $z_s=J^s(x,u)$,
where, as in Section~\ref{SectionOnDefOfRedModules},
each $J^s=J^s(x,u)$ is a solution of the system $Q_{s'}J^s=\delta_{ss'}$,
and $\delta_{ss'}$ is the Kronecker delta.
Denote $y=(y_0,\dots,y_{n-p})$ and $z=(z_1,\dots,z_p)$.
The equality $\tilde L_u=0$ under the representation \smash{$\tilde L=\check\lambda(y,z)\zeta(y)$}
can be written as
\[
\zeta_{y_0}+\frac{I^\sigma_u}{I^0_u}\zeta_{y_\sigma}=-\frac{\lambda_u}{\lambda I^0_u}\zeta,
\]
where the coefficients of~$\zeta_{y_\sigma}$ and~$\zeta$ are smooth functions
that should be expressed in the coordinates $(y,z)$.
We fix a value $z=z^0$ in these coefficients and modify
the coordinates~$y$ in order to straighten out the vector field
$\p_{y_0}+(I^\sigma_u/I^0_u)\big|_{z=z^0}\p_{y_\sigma}$ to $\p_{y_0}$.
We also modify consistently the functional basis of $Q$-invariants.
Lemma~\ref{LemmaOnNonvanishingMultiplier} implies the representation $\zeta=\breve\lambda\breve\zeta$,
where $\breve\lambda$ is a nonvanishing smooth function of~$y$
and the smooth function \smash{$\breve\zeta$} of~$y$ does not depend on~$y_0$.
Then the ansatz~\eqref{EqGenAnsatzForm} reduces the equation~$\mathcal L$
to the equation $\breve\zeta(\omega)=0$,
which is of order $-\infty$ with respect to~$\varphi$.
\end{proof}

The proof of the case $p=n$ is also given in Section~\ref{SectionOnReductionModulesOfMaxD}
before Proposition~\ref{PropositionOnReductionModulesOfMaxD}.
The distinguishing feature of this case in comparison the case $\wsco_{\mathcal L}Q\leqslant0$ with $p<n$
is that there is no essential freedom in choosing invariant dependent variable
since there is only one functionally independent $Q$-invariant if $Q$ is an $n$-dimensional module.

Non-ultra-singular $p$-dimensional reduction modules ($0<p\leqslant n$)
of weak singularity co-order~$-\infty$ lead to inconsistent reduced equations.

The properties of ultra-singular modules of vector fields as reduction modules are obvious.
The selection of ansatzes is not needed for them.

\begin{proposition}\label{PropositionOnUltraSingularRedOpsAndFamiliesOfInvSolutions}
1) Any $p$-dimensional module~$Q$ of vector fields ($0<p\leqslant n$)
which is ultra-singular for a differential equation~$\mathcal L$ is a reduction module of this equation.
An ansatz constructed with~$Q$ reduces~$\mathcal L$ to the identity.
Therefore, the family of $Q$-invariant solutions of~$\mathcal L$ is parameterized by an arbitrary smooth function
of $n-p$ $Q$-invariant variables.

2) If $Q$ is a $p$-dimensional involutive module of vector fields and
the family of $Q$-invariant solutions of~$\mathcal L$ is parameterized by an arbitrary smooth function
of $n-p$ $Q$-invariant variables, then $Q$ is an ultra-singular module for~$\mathcal L$.
\end{proposition}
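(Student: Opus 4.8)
The plan is to treat both implications in a single set of straightened coordinates and to use the observation that the manifold~$\mathcal Q_{\ssl r\ssr}$ is exactly the set of $r$-jets of $Q$-invariant functions. All three properties in play --- ultra-singularity ($L|_{\mathcal Q_{\ssl r\ssr}}\equiv0$), reduction of~$\mathcal L$ to an identity, and the existence of an $(n{-}p)$-arbitrary-function family of invariant solutions --- are invariant under point transformations, since a point transformation and its prolongation map~$\mathcal Q_{\ssl r\ssr}$ diffeomorphically onto the corresponding manifold for the push-forward module (cf.\ Lemma~\ref{LemmaOnInducedMapping}). I would therefore first apply a point transformation reducing a commuting basis of~$Q$ to the shift operators $Q_s=\p_s$, $s=1,\dots,p$. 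In these coordinates the $Q$-invariant functions are precisely $u=\varphi(x_{p+1},\dots,x_n)$ with~$\varphi$ arbitrary, the invariant variables are $\omega=(x_{p+1},\dots,x_n)$, and $\mathcal Q_{\ssl r\ssr}$ is the subset on which $u_\alpha=0$ for every~$\alpha$ with $\check\alpha\ne\check0$.

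For part~1, I would substitute the ansatz $u=\varphi(\omega)$ into~$L$. Every derivative $u_\alpha$ with $\check\alpha\ne\check0$ vanishes, whereas $u_{(\check0,\hat\alpha)}=\p^{\hat\alpha}\varphi$, so the substituted expression coincides with the associated differential function~$\hat L$ evaluated on the jet of~$\varphi$. Ultra-singularity gives $\hat L\equiv0$, hence the reduced equation is the identity $0=0$; in particular $Q$ is a reduction module. Since the reduced equation constrains~$\varphi$ in no way, every~$\varphi$ furnishes a $Q$-invariant solution, and the family of $Q$-invariant solutions is parameterized by an arbitrary function of the $n-p$ invariant variables~$\omega$.

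For part~2, I would run the converse in the same coordinates. The hypothesis says that $u=\varphi(\omega)$ solves~$\mathcal L$ for every smooth~$\varphi$, i.e.\ $\hat L$ evaluated on the jet of~$\varphi$ vanishes identically in~$x$ for each~$\varphi$; equivalently, the reduced equation admits every~$\varphi$ as a solution, which forces it to be trivial. The decisive point is that, as~$\varphi$ ranges over all smooth functions, the numbers $(\p^{\hat\alpha}\varphi)(\omega)$ with $|\hat\alpha|\leqslant r$ may be prescribed arbitrarily at any fixed point, so the $r$-jets of the $Q$-invariant functions sweep out all of~$\mathcal Q_{\ssl r\ssr}$. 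Consequently $\hat L$ vanishes at every point of~$\mathcal Q_{\ssl r\ssr}$, that is $L|_{\mathcal Q_{\ssl r\ssr}}\equiv0$, which is ultra-singularity. Undoing the point transformation finishes the proof.

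The only delicate point --- and the reason the statement is in essence immediate --- is the rigorous identification of~$\mathcal Q_{\ssl r\ssr}$ with the set of $r$-jets of $Q$-invariant functions, rather than merely a superset. This is transparent in the straightened coordinates, where $\mathcal Q_{\ssl r\ssr}=\{u_\alpha=0:\check\alpha\ne\check0\}$ and the remaining free jet coordinates $u_{(\check0,\hat\alpha)}$ are in bijection with the Taylor data of~$\varphi$; but it rests on the involutivity and rank condition of~$Q$, which both guarantee the existence of the straightening transformation and ensure that no constraints beyond $\check\alpha\ne\check0$ cut~$\mathcal Q_{\ssl r\ssr}$ down further. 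I expect this --- together with the remark that $L|_{\mathcal Q_{\ssl r\ssr}}\equiv0$ is preserved by the prolonged point transformation --- to be the whole of the work.
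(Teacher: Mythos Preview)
Your argument is correct and is precisely the ``obvious'' one the paper has in mind: the statement is given in the paper without proof, preceded only by the remark that the properties of ultra-singular modules as reduction modules are obvious. Your straightening to $Q_s=\p_s$, the identification of $\mathcal Q_{\ssl r\ssr}$ with $\{u_\alpha=0:\check\alpha\ne\check0\}$, and the observation that the free jet coordinates $u_{(\check0,\hat\alpha)}$ can be realized by a suitable~$\varphi$ are exactly the unwinding of definitions the paper uses throughout (cf.\ the proof of Proposition~\ref{PropositionOn1coDRedModulesAndReducedEqs}), so there is nothing to add.
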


Given a partial differential equation~$\mathcal L$, consider a reduction module~$Q$ for~$\mathcal L$ of dimension~$n-1$, i.e., $p=n-1$.
Such a module with $\wsco_{\mathcal L}Q>0$ reduces the equation~$\mathcal L$ to an ordinary differential equation
whose essential order, in view of Theorem~\ref{TheoremOnReductionModulesAndReducedEqs},
is equal to $\wsco_{\mathcal L}Q$.
If $\wsco_{\mathcal L}Q=0$, then the reduced equation is algebraic.
This allows us to relate nonnegative $\wsco_{\mathcal L}Q$ with the maximal number of parameters
in families of $Q$-invariant solutions of~$\mathcal L$.
An exception occurs for $\wsco_{\mathcal L}Q=-\infty$, when
$Q$ reduces~$\mathcal L$ to an identity if $Q$ is an ultra-singular module for~$L$,
while otherwise the corresponding reduced equation is inconsistent.

\begin{proposition}\label{PropositionOnSingularCodimOneRedModulesAndFamiliesOfInvSolutions}
Given an $(n{-}1)$-dimensional involutive module~$Q$ of vector fields and a differential function~$L[u]$,
suppose that the equation $\hat L=0$ with a lowest-order differential function~$\hat L=\hat L[u]$,
associated with~$L[u]$ on the manifold~$\mathcal Q_{\ssl r\ssr}$ (with $r=\ord L$) up to a nonvanishing multiplier,
can be solved with respect to the highest-order derivative of~$u$ appearing in this equation.
Then any two of the following properties imply the third one.

1) $Q$ is a reduction module of the equation~$\mathcal L$: $L=0$.

2) The weak singularity co-order of~$Q$ for~$\mathcal L$ equals~$l$, where $0\leqslant l\leqslant r$.

3) The equation~$\mathcal L$ possesses an $l$-parameter family of $Q$-invariant solutions,
and any $Q$-invariant solution of~$\mathcal L$ belongs to this family.
\end{proposition}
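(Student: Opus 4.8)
The plan is to normalize the geometry first and then establish the three implications $(1)\wedge(2)\Rightarrow(3)$, $(1)\wedge(3)\Rightarrow(2)$ and $(2)\wedge(3)\Rightarrow(1)$ separately, the last being the crux. First I would apply a point transformation so that $Q=\langle\p_1,\dots,\p_{n-1}\rangle$; the corresponding ansatz is $u=\varphi(x_n)$ and the manifold $\mathcal Q_{\ssl r\ssr}$ is cut out by the vanishing of all derivatives $u_\alpha$ with $\alpha_1+\dots+\alpha_{n-1}>0$. Writing $\hat L$ for the function associated with $L$ on $\mathcal Q_{\ssl r\ssr}$ (obtained by discarding these derivatives), the maximal-rank hypothesis lets me factor $\hat L=\lambda\tilde L$ with $\lambda\ne0$ and $\tilde L=\tilde L(x,w_0,\dots,w_k)$ of order exactly $k=\wsco_{\mathcal L}Q$, where $w_j:=\p_n^ju$ and $\tilde L_{w_k}\ne0$ on $\{\tilde L=0\}$; hence $\tilde L=0$ can be solved locally for the top derivative, $w_k=G(x,w_0,\dots,w_{k-1})$. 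Two facts then drive everything. First, since substituting the ansatz into $L$ merely amounts to evaluating $\hat L$, a $Q$-invariant function $u=\varphi(x_n)$ solves $\mathcal L$ if and only if $\tilde L(x,\varphi,\dots,\varphi^{(k)})=0$ for every value of the parameters $x_1,\dots,x_{n-1}$. Second, by the reduction criterion of \cite{Zhdanov&Tsyfra&Popovych1999} together with $\lambda\ne0$, the module $Q$ is a reduction module if and only if $\tilde L=\mu\check L$ for some nonvanishing $\mu$ and some $\check L=\check L(x_n,w_0,\dots,w_k)$ independent of $x_1,\dots,x_{n-1}$, i.e.\ if and only if $G$ does not depend on $x_1,\dots,x_{n-1}$.

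For the two easy implications I would lean on Proposition~\ref{PropositionOn1coDRedModulesAndReducedEqs}. Assuming $(1)$, this proposition identifies the essential order of the reduced ODE $\check L=0$ with $\wsco_{\mathcal L}Q=k$; the maximal-rank property puts it in normal form $w_k=G(x_n,w_0,\dots,w_{k-1})$, whose general solution carries exactly $k$ essential constants, and by the first fact (with $\mu\ne0$) these are precisely the $Q$-invariant solutions of~$\mathcal L$. Thus under $(1)$ the family of $Q$-invariant solutions is always $k$-parametric and exhausts all $Q$-invariant solutions. Consequently, if moreover $(2)$ holds then $k=l$ and this is exactly the family required in~$(3)$; and if instead $(3)$ holds, then the same family is simultaneously $k$-parametric and $l$-parametric, forcing $k=l$, which is~$(2)$. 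The bound $l\geqslant0$ excludes the ultra-singular case, so Proposition~\ref{PropositionOn1coDRedModulesAndReducedEqs} indeed applies (cf.\ Proposition~\ref{PropositionOnUltraSingularRedOpsAndFamiliesOfInvSolutions}).

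The main work is $(2)\wedge(3)\Rightarrow(1)$, where reduction must be produced out of solution-counting. Fix the parameters at a value $\bar x^0=(x_1^0,\dots,x_{n-1}^0)$; by $(2)$ and the maximal-rank hypothesis the equation $\tilde L(\bar x^0,x_n,w)=0$ is a normal-form ODE of order $l$, whose solution set $\mathcal S_{\bar x^0}$ is freely parameterized by the initial data $(\varphi(x_n^0),\dots,\varphi^{(l-1)}(x_n^0))$ ranging over an open subset of $\mathbb R^l$. By the first fact the $Q$-invariant solutions form the intersection $\bigcap_{\bar x^0}\mathcal S_{\bar x^0}$, and by $(3)$ this intersection is itself $l$-parametric. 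The key step is to argue that an $l$-parametric subfamily of the $l$-parametric family $\mathcal S_{\bar x^0}$ must fill it out: the initial data realized by the intersection form an open subset of $\mathbb R^l$, through each such datum the solution is common to $\mathcal S_{\bar x^0}$ and to every $\mathcal S_{\bar x^1}$, and uniqueness for normal-form ODEs then forces $G(\bar x^0,x_n,w_0,\dots,w_{l-1})=G(\bar x^1,x_n,w_0,\dots,w_{l-1})$ on an open set. Hence $G$ is independent of $x_1,\dots,x_{n-1}$; since $\tilde L$ and $w_l-G$ share the same zero set and the same nonvanishing derivative in $w_l$, a Hadamard-type division yields $\tilde L=\mu\,(w_l-G)$ with $\mu\ne0$ and $w_l-G$ independent of $x_1,\dots,x_{n-1}$. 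By the second fact this is precisely reduction, i.e.~$(1)$.

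I expect this counting step to be the delicate point: it is exactly where both clauses of $(3)$ — that the invariant family is $l$-parametric \emph{and} that it contains every $Q$-invariant solution — are genuinely used, and where the maximal-rank hypothesis is indispensable in order to pass from the implicit equation $\tilde L=0$ to a normal-form ODE for which uniqueness of the initial-value problem is available. Finally I would note that the entire argument is local and that reversing the initial point transformation preserves each of the three properties, so the equivalence holds in the original variables as well.
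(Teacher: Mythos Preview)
Your proof is correct and follows essentially the same architecture as the paper's: the same normalization $Q=\langle\p_1,\dots,\p_{n-1}\rangle$, the same appeal to Proposition~\ref{PropositionOn1coDRedModulesAndReducedEqs} for the two implications under~$(1)$, and the same reduction of $(2)\wedge(3)\Rightarrow(1)$ to showing that the resolved right-hand side $G$ (the paper's $R$) does not depend on $x_1,\dots,x_{n-1}$.

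The only substantive difference is in how that last step is executed. The paper argues by contradiction via \emph{splitting}: if $R_{x_s}\ne0$ for some~$s$, then differentiating the parametric equation $\varphi^{(k)}=R$ in $x_s$ yields a nontrivial constraint $\tilde R[\varphi]=0$ of order strictly less than~$k$ that every $Q$-invariant solution must satisfy, contradicting the assumed $l$-parametric family. You instead run a direct argument through the initial-value problem: the $l$-parametric family realizes an open set of initial data, and uniqueness for the normal-form ODE at each parameter value $\bar x^0$ forces all the $G(\bar x^0,\cdot)$ to agree on that open set. Both arguments are short and standard; the paper's splitting is marginally quicker, while your uniqueness argument makes more transparent exactly where the maximal-rank hypothesis and the ``every $Q$-invariant solution belongs to this family'' clause of~$(3)$ are consumed.
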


\begin{proof}
If $Q$ is an $(n{-}1)$-dimensional reduction module of an equation~$\mathcal L$ with $\wsco_{\mathcal L}Q\geqslant0$,
then the weak singularity co-order of~$Q$ for~$\mathcal L$ equals
the maximal number $N_{\mathcal L,Q}$ of essential continuous parameters in families of $Q$-invariant solutions of~$\mathcal L$.
Indeed, the weak singularity co-order of~$Q$ for~$\mathcal L$ coincides with
the essential order~$\check r$ of the reduced ordinary differential equation~$\check{\mathcal L}$ associated with~$Q$,
$\check r=\wsco_{\mathcal L} Q$.
The maximal number of essential continuous parameters in solutions of~$\check{\mathcal L}$ also equals~$\check r$.
The substitution of these solutions into the corresponding ansatz leads to parametric families of $Q$-invariant solutions of~$\mathcal L$,
and all $Q$-invariant solutions of~$\mathcal L$ are obtained in this way.
Therefore, $N_{\mathcal L,Q}=\check r$.

In view of the proposition's supposition for~$\hat L$, the equation~$\check{\mathcal L}$ can be written in normal form and hence
has an $\check r$-parameter general solution that contains all solutions of~$\check{\mathcal L}$.
Substituting it into the corresponding ansatz, this solution gives an $\check r$-parameter family of $Q$-invariant solutions of~$\mathcal L$.
There are no other $Q$-invariant solutions of~$\mathcal L$.
Therefore, conditions~2 and~3 are equivalent if condition~1 holds.

For any $(n{-}1)$-dimensional involutive module~$Q$ of vector fields we have $N_{\mathcal L,Q}\leqslant k$, where by~$k$ we denote $\wsco_{\mathcal L} Q$.
Let us prove that $Q$ is a reduction module of~$\mathcal L$ if $N_{\mathcal L,Q}=k$.

\looseness=-1
Point transformations of the variables do not change the claimed property for $k>0$,
and thus we can use the variables and notations
from the same case of the proof of Theorem~\ref{TheoremOnReductionModulesAndReducedEqs}.
For the ansatz~$\mathcal A$: $u=\varphi(\omega)$ constructed with~$Q$ in the new variables,
consider the differential function $\breve L[\varphi]=\hat L|_{\mathcal A}$.
It depends on the variables~$x_s$ as parameters, and $\ord\breve L=k$.
Due to our assumption on~$\hat L$,
we can resolve the equation $\breve L=0$ with respect to the highest-order derivative~$\varphi^{(k)}$:
$\varphi^{(k)}=R[\varphi]$, where $\ord R<k$.
If $R_{x_s}\ne0$ for some~$s$, splitting with respect to~$x_s$
in the resolved equation results in
an ordinary differential equation $\tilde R[\varphi]=0$ of an order lower than~$k$.
Any $Q$-invariant solution of~$\mathcal L$ has the form $u=\varphi(\omega)$,
where the function~$\varphi$ satisfies, in particular, the equation $\tilde R[\varphi]=0$.
This contradicts the condition that $N_{\mathcal L,Q}=k$.
Therefore, $R_{x_s}=0$, i.e.,
the equation $\varphi^{(k)}=R[\varphi]$ is a reduced equation
obtained from the equation~$\mathcal L$ by the substitution
of the ansatz~$\mathcal A$: $u=\varphi(\omega)$ constructed with the module~$Q$,
i.e., $Q$ is a reduction module of~$\mathcal L$.

The condition $k=0$ means that $\hat L$ is a function of~$(x,u)$.
In view of our assumption on~$\hat L$,
the equation~$\hat{\mathcal L}$: $\hat L=0$ can be solved with respect to~$u$.
This gives a function $u=f(x)$, which is the only possible solution
of the joint system~$\hat{\mathcal L}$ and~$\mathcal Q$.
At the same time, by condition~3 the equation~$\mathcal L$,
or equivalently~$\hat{\mathcal L}$, possesses a $Q$-invariant solutions.
Therefore, the function $u=f(x)$ is the only solution of the equation $\hat L=0$,
and this solution is $Q$-invariant.
This is why the equation $\hat L=0$ is equivalent,
up to a nonvanishing multiplier depending at most on $(x,u)$,
to an equation involving only $Q$-invariants,
which means by definition that $Q$ is a reduction module of~$\hat{\mathcal L}$
and, by Lemma~\ref{LemmaOnReformulationOfCondInvCriterion}, of~$\mathcal L$.
\end{proof}

The precise relation between the reducibility of the equation~$\mathcal L$ by the module~$Q$
and the formal compatibility of the joint system of~$\mathcal L$ and~$\mathcal Q$
is in fact not as simple as one might believe.
See \cite[footnote~1]{Kunzinger&Popovych2009}
for the case of one-dimensional reduction modules.
The choice of the system whose formal compatibility
should be checked depends on~$\wsco_{\mathcal L}Q$
and on what definition of formal compatibility is used.
Consider, e.g., the definition given in \cite{Seiler2010}.
Let $\mathcal E_\rho$ be a system of $\kappa$~differential equations $\smash{E^1[u]=0}$, \dots, $\smash{E^\kappa[u]=0}$,
which involves derivatives of~$u$ up to order~$\rho$.
The system~$\mathcal E_\rho$ is interpreted as a system of algebraic equations in the jet space~$\mathrm J^\rho$
and defines a manifold in~$\mathrm J^\rho$, which is also denoted by~$\mathcal E_\rho$.
The $\varsigma$th order prolongation $\mathcal E_{\rho+\varsigma}$
of the system~$\mathcal E_\rho$, $\varsigma\in\mathbb N$, is the system in~$\mathrm J^{\rho+\varsigma}$
consisting of the equations $\smash{D^{\alpha}E^\nu[u]=0}$, $\nu=1,\dots,\kappa$, $|\alpha|\leqslant\varsigma$.
The projection of the corresponding manifold on $\mathrm J^{\rho+\varsigma-\varsigma'}$,
where $\varsigma'\in\mathbb N$ and $\varsigma'\leqslant\varsigma$,
is denoted by $\smash{\mathcal E_{\rho+\varsigma-\varsigma'}^{\ssl \varsigma'\ssr}}$.
The system~$\mathcal E_\rho$ is called \emph{formally integrable} (or \emph{formally compatible}) if
$\smash{\mathcal E_{\rho+\varsigma}^{(1)}=\mathcal E_{\rho+\varsigma}}$ for any $\varsigma\in\mathbb N$.

There are two obstacles for the harmonization
of the above definition of formal compatibility
and the definition of reduction module.
The first obstacle is created by (in general) different orders
of the equation~$\mathcal L$: $L[u]=0$ and the characteristic system~$\mathcal Q$: $Q_s[u]=0$.
This is overcome by attaching differential consequences the characteristic system~$\mathcal Q$
to the joint system of~$\mathcal L$ and~$\mathcal Q$ before testing its compatibility.
The second obstacle is the lowering of the order of $L[u]$ on the manifold~$\mathcal Q_{\ssl r\ssr}$
if $Q$ is a singular module for the equation $L[u]=0$.
Hence instead of the equation~$\mathcal L$ we should use
the equation $\tilde{\mathcal L}$: $\tilde L[u]=0$,
where $\tilde L$ is a differential function such that $L=\tilde\lambda\tilde L$ on $\mathcal Q_{\ssl r\ssr}$
for some nonvanishing differential function~$\lambda$ of~$u$, and $\ord\tilde L=\wsco_{\mathcal L}Q$.
The module~$Q$ reduces the differential equation~$\mathcal L$ if and only if
\begin{itemize}\itemsep=0ex
\item
the system $\tilde L[u]=0$, $D^\alpha Q_s[u]=0$, $|\alpha|<\wsco_{\mathcal L}Q$, in the case $\wsco_{\mathcal L}Q>0$ or
\item
the system $\tilde L[u]=0$, $D_\iota\tilde L[u]=0$, $Q_s[u]=0$ in the case $\wsco_{\mathcal L}Q=0$
\end{itemize}
is formally compatible.
Here the index~$\iota$ runs through a subset $\{\iota_1,\dots,\iota_{n-p}\}$
of the range $\{1,\dots,n\}$ such that the module $\langle Q_1,\dots,Q_p,\p_{\iota_1},\dots,\p_{\iota_{n-p}}\rangle$
satisfies the rank condition.

The case $\wsco_{\mathcal L}Q=-\infty$, where $\tilde L$ does not identically vanish,
is irrelevant for the framework of formal compatibility since
then the equation $\tilde{\mathcal L}$ itself is inconsistent as an equation with respect to~$u$.
If~$Q$ is an ultra-singular module for~$\mathcal L$, i.e.\ $\tilde L\equiv0$,
then the ``formal compatibility'' of the joint system of~$\mathcal L$ and~$\mathcal Q$
is trivially equivalent to the involutivity of the module~$Q$.

\section{Reduction modules of maximal dimension}
\label{SectionOnReductionModulesOfMaxD}

As remarked in Section~\ref{SectionOnSingularModulesOfVectorFieldsForDiffFunctions},
the case when the dimension of modules of vector fields coincides with the number of independent variables,
$p=n$, is special for the singularity of modules for differential functions.
This case is also special for reduction of differential equations.
In contrast to reduction modules of lower dimensions,
$n$-dimensional reduction modules of any differential equation~$\mathcal L$ with $n$ independent variables
reduce this equation to mere algebraic equations instead of differential equations
with a smaller number of independent variables.
Moreover, this is the only case when both regular and singular reduction modules can be studied
in a uniform way.

Let $Q$ be an involutive module of dimension $p=n$ that satisfies the rank condition.
Then we can choose the basis of $Q$ consisting of the vector fields $Q_s=\p_s+\eta^s(x,u)\p_u$.
As the module~$Q$ is involutive, the basis elements~$Q_s$ should commute,
and hence the coefficients $\eta^s=\eta^s(x,u)$ satisfy the system
\begin{equation}\label{EqInvCondForModulesReducingToAlgEqs}
\eta^s_{s'}+\eta^{s'}\eta^s_u=\eta^{s'}_s+\eta^s\eta^{s'}_u.
\end{equation}
The Frobenius theorem implies that the system $Q_s\Phi=\Phi_s+\eta^s\Phi_u=0$
with respect to the function $\Phi=\Phi(x,u)$ has a single functionally independent solution,
which is not constant.
In~other words, the coefficients~$\eta^s$ can be represented in the form $\eta^s=-\Phi_s/\Phi_u$
for some smooth function $\Phi=\Phi(x,u)$ with $\Phi_u\ne0$.
An implicit ansatz constructed for~$u$ with the module~$Q$ is $\Phi(x,u)=\varphi$,
where $\varphi$ is the new unknown function.
It is nullary since the module dimension~$p$ of~$Q$ equals the number~$n$ of independent variables~$x$.

Suppose that $Q$ is a reduction module of an $r$th order differential equation~$\mathcal L$: $L[u]=0$.
All the derivatives of $u$ with respect to~$x$ from order~1 up to order~$r$
are expressed, on the manifold~$\mathcal Q_{\ssl r\ssr}$, via the variables $x$ and~$u$:
\[
u_\alpha=h^\alpha(x,u):=(\p_1+\eta^1\p_u)^{\alpha_1}\cdots(\p_n+\eta^n\p_u)^{\alpha_n}u,\quad
1\leqslant|\alpha|\leqslant r.
\]
As the vector fields~$Q_s$ commute,
the above representation for derivatives of~$u$ is well defined since it does not depend on
the ordering of differential operators in the expressions on the right hand side.
Using this representation for excluding the appropriate derivatives $u_\alpha$ from~$L$,
we get a differential function $\hat L=\hat L[u]$ of nonpositive order, i.e., a function of~$(x,u)$.
Varying~$\eta^s$, it can also be interpreted as a differential function
with the independent variables~$x$ and~$u$ and the dependent variables~$\eta^s$.
Therefore, in view of Lemma~\ref{LemmaOnReformulationOfCondInvCriterion}
the fact that $Q$ is a reduction module of~$\mathcal L$ is equivalent to the condition
$Q_s\hat L=\Lambda^s\hat L$ for some smooth functions $\Lambda^s=\Lambda^s(x,u)$,
which holds on the corresponding open subset of the zero-order jet space.
This condition jointly with equations~\eqref{EqInvCondForModulesReducingToAlgEqs} gives
the complete system of determining equations for the coefficients~$\eta^s$.
Under the representation $\eta^s=-\Phi_s/\Phi_u$, it is interpreted as a condition for the function~$\Phi$,
where $L^\Phi=\hat L|_{-\Phi_s/\Phi_u\rightsquigarrow\eta^s}$ is considered as a differential function
with the independent variables~$x$ and~$u$ and the dependent variable~$\Phi$.

In order to handle the condition $Q_s\hat L=\Lambda^s\hat L$
if the equation $\hat L=0$ is of maximal rank,
we can solve the equation $\hat L=0$ with respect to a variable (either one of the~$x$'s or~$u$) and
exclude this variable from the equations $Q_s\hat L=0$ using the obtained expression.
As the functions~$\hat L$ and~$\eta^s$ (resp.\ $\Phi$) depend on the same arguments,
the above procedure does not result in a usual differential equation.

\looseness=-1
This is why we use another, more convenient, way.
By Lemma~\ref{LemmaOnNonvanishingMultiplier},
the condition $Q_s\hat L=\Lambda^s\hat L$ implies
that there exist a nonvanishing smooth function $\lambda=\lambda(x,u)$
and a smooth function $\check L=\check L(x,u)$
such that $\hat L=\lambda\check L$ and $Q_s\check L=0$.
The last condition means that $\check L=\zeta(\Phi)$
for some smooth function~$\zeta$ of one argument.
Then the equation $\hat L=\lambda\check L$ in terms of~$\Phi$
takes the form $L^\Phi=\tilde\lambda\zeta(\Phi)$.

In fact, this equation is precisely the condition of reducing the equation~$\mathcal L$
by the ansatz $\Phi(x,u)=\varphi$ associated with the module~$Q$.
Indeed, if the function~$u=u(x)$ is implicitly defined by this ansatz,
then the derivatives of~$u$ (of nonzero orders) are found from differential consequences
of the equations $u_s=-\Phi_s/\Phi_u$.
Therefore, the substitution of the ansatz to~$\mathcal L$
leads to the equation $\hat L|_{\Phi(x,u)=\varphi}=0$,
which is equivalent, in view of the reduction condition,
to the algebraic equation $\zeta(\varphi)=0$ with respect to the constant~$\varphi$.

It is obvious that the essential order of the reduced equation $\zeta(\varphi)=0$
coincides with $\wsco_{\mathcal L}Q\in\{-\infty,0\}$.
The module~$Q$ is ultra-singular for~$\mathcal L$ if and only if
the function~$\zeta$ identically vanishes.
If $\wsco_{\mathcal L}Q=-\infty$ and the module~$Q$ is not ultra-singular,
then the corresponding reduced algebraic equation is not consistent with respect to~$\varphi$.

Summing up the above considerations results in the following assertion:

\begin{proposition}\label{PropositionOnReductionModulesOfMaxD}
An $n$-dimensional module~$Q$ is a reduction module of a differential equation~$\mathcal L$: $L[u]=0$
with $n$ independent variables~$x$
if and only if this module is spanned by the vector fields $\p_s-(\Phi_s/\Phi_u)\p_u$, $s=1,\dots,n$,
where the function $\Phi=\Phi(x,u)$ satisfies the equation $L^\Phi=\tilde\lambda\zeta(\Phi)$
for some smooth function~$\zeta$ of one argument
and some nonvanishing function $\tilde\lambda$ of $(x,u)$,
and the differential function~$L^\Phi=L^\Phi[\Phi]$ is obtained from $L[u]$ by the exclusion of
the derivatives of~$u$ using differential consequences of the equations $u_s=-\Phi_s/\Phi_u$.
The ansatz $\Phi(x,u)=\varphi$ reduces the equation~$\mathcal L$
to the algebraic equation $\zeta(\varphi)=0$ with respect to the constant~$\varphi$,
and the essential order of the reduced equation
coincides with $\wsco_{\mathcal L}Q\in\{-\infty,0\}$.
\end{proposition}

\begin{theorem}\label{TheoremOnOneParFamiliesOfSolutionsAndnDReductionModules}
Up to the equivalence of solution families, for any differential equation~$\mathcal L$
with respect to a single unknown function of $n$ independent variables
there exists a one-to-one correspondence between one-parameter families of its solutions
and its $n$-dimensional ultra-singular reduction modules.
Namely, each module of the above kind corresponds to
the family of solutions which are invariant with respect to this module.
The problems of the construction of all one-parameter solution families of the equation~$\mathcal L$
and the exhaustive description of its $n$-dimensional ultra-singular reduction modules
are completely equivalent.
\end{theorem}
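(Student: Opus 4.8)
The goal is to establish a bijection between one-parametric solution families of $\mathcal L$ and its $n$-dimensional ultra-singular reduction modules. The natural strategy is to construct the correspondence in both directions and verify that these constructions are mutually inverse, invoking Proposition~\ref{PropositionOnReductionToAlgEqs} and part~1 of Proposition~\ref{PropositionOnUltraSingularRedOpsAndFamiliesOfInvSolutions} as the main tools. First I would recall that, by Proposition~\ref{PropositionOnReductionToAlgEqs}, any $n$-dimensional reduction module~$Q$ is spanned by $\p_s-(\Phi_s/\Phi_u)\p_u$ for a function $\Phi=\Phi(x,u)$ with $\Phi_u\ne0$, and the associated ansatz $\Phi(x,u)=\varphi$ reduces $\mathcal L$ to the algebraic equation $\zeta(\varphi)=0$. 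The observation at the end of the preceding section — that $Q$ is ultra-singular if and only if $\zeta\equiv0$ — is the key link: when $\zeta\equiv0$, the reduced algebraic equation is satisfied identically, so \emph{every} constant value of~$\varphi$ yields a solution. This is precisely what produces a one-parametric family.

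\textbf{Module to family.}
Given an $n$-dimensional ultra-singular reduction module~$Q$, I would use part~1 of Proposition~\ref{PropositionOnUltraSingularRedOpsAndFamiliesOfInvSolutions} with $p=n$: the ansatz reduces $\mathcal L$ to an identity, and the family of $Q$-invariant solutions is parameterized by an arbitrary function of $n-p=0$ invariant variables, i.e.\ by a single constant. Concretely, writing $Q$ via its defining function~$\Phi$, the $Q$-invariant solutions are exactly those $u=u(x)$ implicitly determined by $\Phi(x,u)=c$ as the constant~$c$ ranges over an interval. Since $\Phi_u\ne0$, the implicit function theorem guarantees that for each admissible~$c$ this defines a genuine solution, and distinct values of~$c$ give distinct functions; thus one obtains a one-parametric family, and by part~1 of Proposition~\ref{PropositionOnUltraSingularRedOpsAndFamiliesOfInvSolutions} this family exhausts all $Q$-invariant solutions.

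\textbf{Family to module.}
Conversely, starting from a one-parametric family $\{u=f(x,c)\}$ of solutions of~$\mathcal L$, I would locally invert to obtain $\Phi(x,u)=c$ with $\Phi_u\ne0$ (using that the family is nondegenerate in the parameter, so the graphs foliate a region of $(x,u)$-space). This~$\Phi$ defines the candidate module $Q=\langle\p_s-(\Phi_s/\Phi_u)\p_u\rangle$, which is involutive and satisfies the rank condition by the $p=n$ analysis at the start of Section~\ref{SectionOnReductionToAlgebraicEquations}. Because every member of the family solves~$\mathcal L$, the differential function $L^\Phi$ obtained by excluding the derivatives via $u_s=-\Phi_s/\Phi_u$ vanishes for all constant~$\varphi$, forcing $\zeta\equiv0$; by part~2 of Proposition~\ref{PropositionOnUltraSingularRedOpsAndFamiliesOfInvSolutions}, $Q$ is ultra-singular, and by Proposition~\ref{PropositionOnReductionToAlgEqs} it is a reduction module. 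That these two constructions invert each other follows from the one-to-one correspondence (established in Section~\ref{SectionOnDefOfRedModules}) between involutive $n$-dimensional modules and functions~$\Phi$ modulo reparameterization, which is exactly the stipulated equivalence of solution families.

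\textbf{Main obstacle.}
The routine directions above are straightforward; the subtle point is the precise bookkeeping of the equivalences on both sides. I expect the hard part to be pinning down the phrase ``up to the equivalence of solution families'' so that it matches the freedom in~$\Phi$ — namely that $\Phi$ and any $\bar\Phi=F(\Phi)$ (with $F'\ne0$) determine the same module and the same family, while reparameterizing the constant $c\mapsto F(c)$ leaves the family of graphs unchanged. Verifying that these two redundancies coincide, so that the correspondence is genuinely well-defined and bijective rather than merely surjective, is where the care is needed; the differential-geometric content is otherwise fully supplied by the two cited propositions.
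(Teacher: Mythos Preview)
Your proposal is correct and follows essentially the same route as the paper: both directions are built around the function~$\Phi$ with $\Phi_u\ne0$ that parameterizes the module, with ultra-singularity amounting to $L^\Phi\equiv0$ (equivalently $\zeta\equiv0$), and the equivalence bookkeeping is exactly the reparameterization $\Phi\mapsto F(\Phi)$. The paper carries out the family-to-module direction slightly more concretely---it writes the ansatz as $u=f(x,\varphi)$ and observes directly that it reduces~$\mathcal L$ to the identity $\varphi_\omega$-free equation---whereas you argue via $L^\Phi(x,f(x,c))=0$ for all~$c$ forcing $L^\Phi\equiv0$; these are the same computation viewed from two sides, and your appeal to part~2 of Proposition~\ref{PropositionOnUltraSingularRedOpsAndFamiliesOfInvSolutions} is a legitimate shortcut (note that once $Q$ is ultra-singular it is automatically a reduction module, so the separate citation of Proposition~\ref{PropositionOnReductionToAlgEqs} for that purpose is redundant).
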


\begin{proof}
Let $Q$ be an $n$-dimensional ultra-singular reduction module of the equation~$\mathcal L$.
It follows from Proposition~\ref{PropositionOnReductionModulesOfMaxD} that
the ansatz $\Phi(x,u)=\varphi$ constructed with the module~$Q$ reduces the equation~$\mathcal L$ to the identity.
In other words, for each value of the constant~$\varphi$ this ansatz implicitly defines a solution of~$\mathcal L$.

\looseness=-1
Conversely,
suppose that $\mathcal F=\{u=f(x,\varkappa)\}$ is a family of solutions of~$\mathcal L$
parameterized by the single parameter~$\varkappa$.
As this parameter is essential and, therefore, the derivative $f_\varkappa$ is nonzero,
we can express $\varkappa$ from the equality $u=f(x,\varkappa)$.
As a result, we obtain that $\varkappa=\Phi(x,u)$ for some function $\Phi=\Phi(x,u)$ with $\Phi_u\ne0$.
Consider the module $Q=\langle Q_1,\dots,Q_n\rangle$,
where $Q_s=\p_s+\eta^s\p_u$ and the coefficients $\eta^s=\eta^s(x,u)$ are defined by $\eta^s=-\Phi_s/\Phi_u$.
It is an $n$-dimensional involutive module and
$Q_s[u]=0$ for any element of~$\mathcal F$.
The ansatz $u=f(x,\varphi)$, where $\varphi$ is the new unknown (nullary) function,
is associated with~$Q$ and reduces~$\mathcal L$ to the identity.
This means that $Q$ is an ultra-singular reduction module for~$\mathcal L$.

One-parameter families $\mathcal F=\{u=f(x,\varkappa)\}$ and $\tilde{\mathcal F}=\{u=\tilde f(x,\tilde\varkappa)\}$
are defined to be equivalent if they consist of the same functions and differ only by parameterizations, i.e.,
if there exists a function $\zeta=\zeta(\varkappa)$ such that
$\zeta_\varkappa\ne0$ and $\tilde f(x,\zeta(\varkappa))=f(x,\varkappa)$.
This is true if and only if the functions~$\Phi=\Phi(x,u)$ and $\tilde\Phi=\tilde\Phi(x,u)$
associated with the families $\mathcal F$ and $\tilde{\mathcal F}$, respectively,
are functionally dependent.
More precisely, $\tilde\Phi=\zeta(\Phi)$.
As $\Phi_u\tilde\Phi_u\ne0$, the functional dependence of $\Phi$ and~$\tilde\Phi$
is equivalent to the equality $\Phi_s/\Phi_u=\tilde\Phi_s/\tilde\Phi_u$.
Therefore, equivalent one-parameter families of solutions correspond to
the same ultra-singular reduction module~$Q$ of~$\mathcal L$
and, conversely, any two one-parameter families of $Q$-invariant solutions are equivalent.
\end{proof}

\begin{corollary}\label{CorollaryOnReductionOfDetEqsForModulesReducingToAlgEqs}
The system of the determining equations for coefficients of $n$-dimensional ultra-singular reduction modules of~$\mathcal L$,
which consists of equations~\eqref{EqInvCondForModulesReducingToAlgEqs} and the equation $\hat L=0$,
is reduced by the composition of the nonlocal substitution
$\eta^s=-\Phi_s/\Phi_u$, where $\Phi$ is a function of $(x,u)$, and the hodograph transformation
\begin{gather*}
\mbox{the new independent variables:}\qquad \tilde x_i=x_i, \quad \varkappa=\Phi,
\\
\lefteqn{\mbox{the new dependent variable:}}\phantom{\mbox{the new independent variables:}\qquad }\tilde u=u
\end{gather*}
to the initial equation~$\mathcal L$ in the function $\tilde u=\tilde u(\tilde t,\tilde x,\varkappa)$
with $\varkappa$ playing the role of a parameter.
\end{corollary}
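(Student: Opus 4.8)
The plan is to dispose of the two constituents of the determining system one at a time, exploiting the two transformations in the order in which they are composed. First I would apply the nonlocal substitution $\eta^s=-\Phi_s/\Phi_u$. As already noted in the derivation preceding Proposition~\ref{PropositionOnReductionToAlgEqs}, this representation identically satisfies the involutivity equations~\eqref{EqInvCondForModulesReducingToAlgEqs} for an arbitrary smooth function $\Phi=\Phi(x,u)$ with $\Phi_u\ne0$; indeed, these equations are precisely the integrability conditions guaranteeing the existence of such a~$\Phi$, since $\eta^s=-\Phi_s/\Phi_u$ gives $Q_s\Phi=0$ and hence $[Q_s,Q_{s'}]\Phi=0$, forcing $[Q_s,Q_{s'}]=0$ because $\Phi_u\ne0$. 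After this substitution the entire determining system therefore collapses to the single equation $\tilde L=0$, i.e.\ $L^\Phi=0$, where $L^\Phi$ is obtained from $L[u]$ by replacing each derivative $u_\alpha$ with $h^\alpha(x,u)=(\p_1+\eta^1\p_u)^{\alpha_1}\cdots(\p_n+\eta^n\p_u)^{\alpha_n}u$. It then remains to show that the hodograph transformation converts $L^\Phi=0$ into the original equation $L[\tilde u]=0$ for $\tilde u=\tilde u(\tilde x,\varkappa)$.

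The core of the argument is a computation of how total differentiation transforms. Under $\tilde x_i=x_i$, $\varkappa=\Phi(x,u)$, $\tilde u=u$, the dependent variable $\tilde u$ is defined implicitly by $\Phi(x,\tilde u)=\varkappa$. Differentiating this relation with respect to $\tilde x_s$ at fixed~$\varkappa$ and using $\Phi_u\ne0$ yields $\tilde u_{\tilde x_s}=-\Phi_s/\Phi_u=\eta^s$. More generally, for any function $f=f(x,u)$ re-expressed through $(x,\varkappa)$ one finds that differentiation with respect to $\tilde x_s$ at fixed~$\varkappa$ acts exactly as the operator $Q_s=\p_s+\eta^s\p_u$. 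Consequently the derivatives of $\tilde u$ in the new independent variables coincide with the expressions $h^\alpha$, namely $\tilde u_{\tilde\alpha}=Q_1^{\alpha_1}\cdots Q_n^{\alpha_n}u=h^\alpha(x,u)$, the ordering of the factors being immaterial because the $Q_s$ commute. Substituting these identities into $L$ shows that $L^\Phi=\tilde L(x,u)$ is, in the hodograph variables, nothing but $L(\tilde x,\tilde u_{\ssl r\ssr})$, with $\varkappa$ entering only as a passive parameter. Thus $\tilde L=0$ is literally the equation $\mathcal L$: $L[\tilde u]=0$ for the function $\tilde u(\tilde x,\varkappa)$, which is the assertion.

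The main obstacle I anticipate is making the operator identity $\p_{\tilde x_s}|_\varkappa=Q_s$ fully rigorous while keeping careful track of the two distinct roles played here: the substitution $\eta^s=-\Phi_s/\Phi_u$ enters the coefficients of the vector fields and is nonlocal rather than a point transformation, whereas the hodograph change is a genuine change of variables that promotes~$\Phi$ to an independent coordinate. One must verify that the well-definedness of $h^\alpha$---independence of the order in which the $Q_s$ are applied---survives the substitution, which is exactly where commutativity of the $Q_s$, equivalently~\eqref{EqInvCondForModulesReducingToAlgEqs}, is used; this closes the loop with the first step. Once the operator identity is established the remaining manipulations are purely formal, and the converse reading of the reduction---that a function $\Phi$ solving $L[\tilde u]=0$ in the hodograph variables yields an ultra-singular reduction module---follows by reversing the two transformations, in parallel with the correspondence already recorded in Theorem~\ref{TheoremOnOneParFamiliesOfSolutionsAndnDReductionModules}.
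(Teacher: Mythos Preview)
Your proposal is correct and follows essentially the same approach the paper takes (implicitly here, and explicitly in the proofs of the analogous Propositions~\ref{PropositionOnReductionOfDetEqs0ForRedModulesOfEvolEqs} and~\ref{PropositionOnReductionOfDetEqsForFirstCoOrderSingularRedModulesForGenPDEs}): the nonlocal substitution disposes of the involutivity conditions~\eqref{EqInvCondForModulesReducingToAlgEqs}, and the hodograph transformation is handled via the identity $\tilde u_{\tilde x_s}=-\Phi_s/\Phi_u$, which is exactly your operator identity $\p_{\tilde x_s}|_\varkappa=Q_s$ applied to~$u$. Your write-up is in fact more detailed than what the paper provides, since the corollary is stated there without proof as an immediate consequence of the preceding discussion.
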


Note that the reduction of differential equations to algebraic ones using nonclassical symmetries
was considered in~\cite{Grundland&Tafel1995}.
An assertion similar to Theorem~\ref{TheoremOnOneParFamiliesOfSolutionsAndnDReductionModules}
was obtained therein.
Using the notion of singular reduction modules makes
Theorem~\ref{TheoremOnOneParFamiliesOfSolutionsAndnDReductionModules} more precise.

\section{Motivating example: evolution equations}\label{SectionOnExampleOfEvolEqs}

We investigate $n$-dimensional singular reduction modules
of $(1{+}n)$-dimensional evolution equations of the general form
\begin{equation}\label{EqGenEvolEq}
u_t=H(t,x,u_{\ssl r,x\ssr})
\end{equation}
for the unknown function $u$ depending on the variables~$t=x_0$ and $x=(x_1,\dots,x_n)$.
(For convenience, in this section we set the number of independent variables to $n+1$ instead of~$n$
and additionally single out the variable~$x_0$.)
Here $u_{\ssl r,x\ssr}$ denotes the set of all derivatives
of the functions~$u$ with respect to the space variables~$x$
of order not greater than~$r$, including $u$ as derivative of order zero. Also,
$u_t=\p u/\p t$ and
$r=\max\{|\alpha|\mid H_{u_\alpha}\ne0\}\geqslant2$,
i.e., we assume the order of the equations under consideration to be not less than two.
We fix an arbitrary equation~$\mathcal L$ of the form~\eqref{EqGenEvolEq}.
An $n$-dimensional reduction module of~$\mathcal L$ reduces~$\mathcal L$ to an ordinary differential equation.
In general, such a module~$Q$ is spanned by $n$ vector fields of the form
\[
Q_s=\tau^s(t,x,u)\p_t+\xi^{si}(t,x,u)\p_i+\eta^s(t,x,u)\p_u,
\]
with $\rank(\tau^s,\xi^{si})=n$.
(We use the same convention for the ranges of indices as in the whole paper, i.e.,
the indices $i$ and $j$ run from 1 to $n$ and the index $s$ runs from 1 to $p$
noting that here $p$ coincides with~$n$.)

A systematic study of singular reduction modules of evolution equations
was initiated in~\cite{Fushchych&Shtelen&Serov&Popovych1992}.
It was shown that the system of determining equations
for singular one-dimensional reduction modules
of the (1+1)-dimensional linear heat equation
consists of a single (1+2)-dimensional nonlinear evolution equation,
which is reduced by a nonlocal transformation to the initial equation with
an additional implicit independent variable playing the role of a parameter.
Later this no-go assertion was extended to more general (1+1)-dimensional
linear evolutions equations~\cite{Fushchych&Popowych1994-1,Popovych1995},
general (1+1)-dimensional evolution equations~\cite{Zhdanov&Lahno1998} and
multi-dimensional evolution equations~\cite{Popovych1998}, cf.\ the introduction.
At the same time, the calculation of all one-dimensional reduction modules
of the $(1{+}n)$-dimensional linear hear equation with $n>2$ in explicit form
in~\cite{Popovych&Korneva1998} showed
that the existence of the above no-go cases is not explained
just by vanishing $t$-components of vector fields from related reduction modules.
Following \cite{Kunzinger&Popovych2010a}, we set up these results
within the framework of singular reduction modules.

In contrast to the specific case $n=1$ considered in~\cite{Kunzinger&Popovych2010a},
for general values of~$n$ we can completely describe
singular $n$-dimensional modules of the equation~$\mathcal L$ only after more precisely knowing the form of~$H$.
At the same time, by direct generalization of the case $n=1$ we easily find a family of such modules
for any equation of the form~\eqref{EqGenEvolEq}.

\begin{proposition}\label{PropositionOnSingularModulesOfEvolEqs}
If the coefficient of~$\p_t$ in any vector field from an $n$-dimensional involutive module~$Q$ is zero, 
then $Q$ is a singular module for the differential function $L=u_t-H(t,x,u_{\ssl r,x\ssr})$.
The co-order of singularity of~$Q$ equals one, $\sco_{\mathcal L}Q=1$.
\end{proposition}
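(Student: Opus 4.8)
The plan is to apply the algorithmic construction of the associated differential function $\hat L$ described before Proposition~\ref{PropositionOnModulesWithSingularSubmodules} and to show directly that $\ord\hat L=1$. Since the total number of independent variables here is $n+1$ (namely $t=x_0,x_1,\dots,x_n$) while the module has dimension $p=n$, exactly one independent direction remains ``free'', and the hypothesis that the $\p_t$-coefficient vanishes in every element of~$Q$ will force this direction to be~$t$.

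First I would exploit the rank condition together with the assumption $\tau^s=0$. The basis vector fields have the form $Q_s=\xi^{si}(t,x,u)\p_i+\eta^s(t,x,u)\p_u$ with $\rank(\xi^{si})=n$, so the $n\times n$ spatial matrix $(\xi^{si})$ is invertible. Multiplying the basis by $(\xi^{si})^{-1}$ yields the reduced basis $\hat Q_s=\p_s+\hat\eta^s(t,x,u)\p_u$, which still contains no $\p_t$. Involutivity then forces the $\hat Q_s$ to commute: their pairwise commutators lie in $\langle\p_u\rangle$, and the only element of~$Q$ with no $\p_s$-components is zero. Consequently the manifold $\mathcal Q_{\ssl r\ssr}$ is described by $u_s=\hat\eta^s$ together with all its spatial total-differential consequences, so every pure spatial derivative $u_\alpha$ (with $\alpha=(\alpha_1,\dots,\alpha_n)$, $|\alpha|\geqslant1$) is expressed on $\mathcal Q_{\ssl r\ssr}$ as a well-defined function $h^\alpha(t,x,u)=(\p_1+\hat\eta^1\p_u)^{\alpha_1}\cdots(\p_n+\hat\eta^n\p_u)^{\alpha_n}u$ of $(t,x,u)$ only, exactly as in the formula for $h^\alpha$ used in Section~\ref{SectionOnReductionToAlgebraicEquations}; well-definedness is guaranteed by the commutativity just established.

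Next I would substitute these expressions into $L=u_t-H(t,x,u_{\ssl r,x\ssr})$. The term $u_t$ is a pure $t$-derivative and is therefore unconstrained on $\mathcal Q_{\ssl r\ssr}$, so it remains untouched. Since $H$ depends only on the spatial derivatives $u_{\ssl r,x\ssr}$, its restriction becomes $H|_{\mathcal Q_{\ssl r\ssr}}=\hat H(t,x,u)$, a function of order zero. Hence the associated differential function is $\hat L=u_t-\hat H(t,x,u)$, whose order equals exactly~$1$, the coefficient of $u_t$ being the nonzero constant~$1$ while $\hat H$ carries no $t$-derivative. As $\ord L=r\geqslant2>1=\ord\hat L$, the criterion recalled right after the algorithmic construction shows that $Q$ is singular for~$L$ with $\sco_{\mathcal L}Q=\ord\hat L=1$.

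I expect the only genuine point to verify carefully to be that $\hat H$ really is of order zero, i.e.\ that no derivative of~$u$ with respect to~$t$ reappears when $H$ is restricted to $\mathcal Q_{\ssl r\ssr}$. This is where the hypothesis $\tau^s=0$ is decisive: because the characteristic equations $u_s=\hat\eta^s$ express spatial first derivatives purely in terms of $(t,x,u)$, and because $H$ involves only spatial derivatives, the reduction uses exclusively the spatial operators $\p_s+\hat\eta^s\p_u$ and can never introduce a $t$-differentiation. Were a $\p_t$-coefficient nonzero, the free direction would mix with~$t$ and this clean conclusion would fail.
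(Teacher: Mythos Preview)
Your proof is correct and follows essentially the same approach as the paper's: reduce the basis to $\hat Q_s=\p_s+\hat\eta^s\p_u$, use involutivity to get commutativity, express all spatial derivatives $u_\alpha$ as functions $h^\alpha(t,x,u)$, and observe that the associated function $\hat L=u_t-\hat H(t,x,u)$ has order exactly~$1$. Your explicit justification of commutativity (commutators lie in $\langle\p_u\rangle$ while $Q$ contains no nonzero pure $\p_u$-element) is a bit more detailed than the paper's one-line appeal to involutivity, but the arguments are otherwise identical.
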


\begin{proof}
Suppose that $\tau^s=0$. Then $\rank(\xi^{si})=n$ and up to changing basis of~$Q$ we can set
\begin{equation}\label{EqReducedForOfSingularModulesForEvolEqs}
Q_s=\p_s+\eta^s(t,x,u)\p_u.
\end{equation}
All the derivatives of $u$ with respect to~$x$ from order~1 up to order~$r$ can be expressed,
on the manifold~$\mathcal Q_{\ssl r\ssr}$, via $t$, $x$ and~$u$:
\[
u_\alpha=h^\alpha(t,x,u):=(\p_1+\eta^1\p_u)^{\alpha_1}\cdots(\p_n+\eta^n\p_u)^{\alpha_n}u.
\]
Here and in what follows $\alpha=(\alpha_0,\dots,\alpha_n)$,
$1\leqslant|\alpha|\leqslant r$ and $\alpha_0=0$.
In view of the module~$Q$ being involutive, the vector fields~$Q_s$ commute.
(Moreover, we have $\eta^s=-\Phi_s/\Phi_u$ for some smooth function $\Phi=\Phi(x,u)$ with $\Phi_u\ne0$.)
Therefore, the above representation for~$u_\alpha$ is well defined since it does not depend on
the ordering of differential operators in the expression on right hand side.
Using this representation for excluding the appropriate derivatives $u_\alpha$ from~$L$,
we get the differential function $\tilde L=u_t-\tilde H(t,x,u)$, where
$\tilde H=\tilde H(t,x,u)$ is the function obtained from~$H$ by the substitution of~$h^\alpha$ for~$u_\alpha$.
The order of~$\tilde L$ equals~1.
Hence the module~$Q$ is singular for the differential function~$L$, and its singularity co-order equals~1.
\end{proof}

\begin{corollary}\label{CorollaryOSingularVectorFieldsOfEvolEqs}
The module~$M=\langle\p_1,\dots,\p_n,\p_u\rangle$ is a co-order one strictly meta-singular $(n{+}1)$-dimensional module
for any $(1{+}n)$-dimensional evolution equation.
\end{corollary}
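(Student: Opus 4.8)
The plan is to derive the statement directly from Proposition~\ref{PropositionOnSingularModulesOfEvolEqs} together with a description of which submodules of~$M$ are admissible. Recall that in this section the number of independent variables is $n+1$, namely $t=x_0$ and $x=(x_1,\dots,x_n)$, so the $(n{+}1)$-dimensional module $M=\langle\p_1,\dots,\p_n,\p_u\rangle$ plays the role of a $(p{+}1)$-dimensional module with $p=n$. According to Definition~\ref{DefinitionOfMetaSingularModuleForDiffFuction}, I must verify that $M$ contains a family of $n$-dimensional involutive submodules satisfying the rank condition and parameterized by an arbitrary function of $(t,x,u)$, and that every such admissible submodule is singular for $L=u_t-H(t,x,u_{(r,x)})$.

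First I would pin down the admissible submodules. Applying the $S_0$, $S_s$ decomposition from the proof of Proposition~\ref{PropositionOnMultiDimModulesWithFamilyOfInvModules} to the commuting basis $\{\p_u,\p_1,\dots,\p_n\}$ of~$M$, every $n$-dimensional submodule either lies in~$S_0$, i.e., has the form $\langle\p_s+\theta^s\p_u\rangle$, or contains $\p_u$ as a basis element. The key observation is that any submodule of the latter kind violates the rank condition: the vector field~$\p_u$ projects to zero in the space of the independent variables $(t,x)$, so the projection of such a submodule is spanned by only $n-1$ vectors and cannot be $n$-dimensional. Hence the $n$-dimensional involutive submodules of~$M$ satisfying the rank condition are exactly the members of~$S_0$, and, by involutivity (cf.\ the proof of Proposition~\ref{PropositionOnReducedFormOfFamiliesOfSingularModules} for $p\geqslant2$, or Proposition~\ref{PropositionOnReducedFormOfFamiliesOfOneDimSingularModules} for $p=1$), they constitute precisely the family $\mathfrak M=\{Q^\Phi=\langle\p_s-(\Phi_s/\Phi_u)\p_u,\,s=1,\dots,n\rangle\}$ parameterized by an arbitrary smooth function $\Phi=\Phi(t,x,u)$ with $\Phi_u\ne0$.

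Next I would invoke Proposition~\ref{PropositionOnSingularModulesOfEvolEqs}. Each basis element of~$Q^\Phi$, and therefore every vector field in~$Q^\Phi$, has vanishing coefficient of~$\p_t$. Consequently Proposition~\ref{PropositionOnSingularModulesOfEvolEqs} applies and shows that every $Q^\Phi$ is singular for the differential function $L=u_t-H$ with singularity co-order \emph{exactly} one, $\sco_{\mathcal L}Q^\Phi=1$. This simultaneously establishes that all admissible submodules are singular and that the maximum of their singularity co-orders equals one, with no admissible submodule degenerating to a lower co-order.

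Combining these observations yields the claim: $M$ contains the required parameterized family~$\mathfrak M$ of singular $n$-dimensional involutive submodules, every $n$-dimensional involutive submodule of~$M$ satisfying the rank condition is singular for~$L$, and the common (hence maximal) singularity co-order is one. Thus $M$ is a meta-singular module of singularity co-order one in the sense of Definition~\ref{DefinitionOfMetaSingularModuleForDiffFuction}, and since every admissible submodule attains this co-order it is strictly of first co-order. I expect the only point requiring care to be the rank-condition argument that eliminates all submodules containing~$\p_u$, as this is what reduces the verification to the already-treated family~$\mathfrak M$; the remaining steps are immediate consequences of the cited results.
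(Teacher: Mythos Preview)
Your argument is correct and follows essentially the same route as the paper's proof: both identify the $n$-dimensional involutive submodules of~$M$ satisfying the rank condition as exactly the family $\{Q^\Phi=\langle\p_s-(\Phi_s/\Phi_u)\p_u\rangle\}$ and then invoke Proposition~\ref{PropositionOnSingularModulesOfEvolEqs}. You have simply made explicit, via the $S_0/S_s$ partition and the observation that $\p_u$ projects to zero, the characterization that the paper states in one sentence.
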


\begin{proof}
An $n$-dimensional involutive submodule~$Q$ of~$M$ satisfies the rank condition if and only if
it is spanned by the vector fields $\p_s-(\Phi_s/\Phi_u)\p_u$ for some smooth function $\Phi=\Phi(x,u)$.
It follows from Proposition~\ref{PropositionOnSingularModulesOfEvolEqs} that
the submodule~$Q$ is a co-order one strictly singular module for any evolution equation of the form~\eqref{EqGenEvolEq}.
Varying the parameter-function~$\Phi$, we obtain a family of such submodules parameterized
by an arbitrary function of all independent and dependent variables.
\end{proof}

The vector fields~$\p_s$ and~$\p_u$ generating the meta-singular module~$M$ commute
and the differential function~$L$ contains only first-order differentiation with respect to~$t$
(namely, in the form of the derivative~$u_t$).
This perfectly agrees with Theorem~\ref{TheoremOnDiffFunctionsWithMetaSingularModule}.

We denote by $\mathcal R^n_0(\mathcal L)$ the set of $n$-dimensional reduction modules of~$\mathcal L$ that are contained in~$M$.
Consider a module~$Q$ from $\mathcal R^n_0(\mathcal L)$ and choose a basis for it consisting of vector fields~$Q^s$
of the form~\eqref{EqReducedForOfSingularModulesForEvolEqs}.
The system ${\rm DE}_0(\mathcal L)$ of determining equations for the coefficients~$\eta^s$ is naturally partitioned into two subsystems.
The first subsystem
\begin{equation}\label{EqInvCondForSingularReductionModulesOfEvolEqs}
\eta^s_{s'}+\eta^{s'}\eta^s_u=\eta^{s'}_s+\eta^s\eta^{s'}_u
\end{equation}
follows from the condition that the module~$Q$ is involutive and hence
the basis elements~$Q^s$ commute. 
The second subsystem
\begin{equation}\label{EqsDetForSingularReductionModulesOfEvolEqs}
\eta^s_t+\tilde H\eta^s_u=\tilde H_s+\eta^s\tilde H_u
\end{equation}
is a consequence of the conditional invariance criterion
applied to the equation~$\mathcal L$ and the module~$Q$.
The function $\tilde H=\tilde H(t,x,u)$ is defined in the proof of Proposition~\ref{PropositionOnSingularModulesOfEvolEqs}.
It coincides with~$H$ on the manifold $\mathcal Q_{\ssl r\ssr}$.
The total number of equations in the joint system ${\rm DE}_0(\mathcal L)$
of~\eqref{EqInvCondForSingularReductionModulesOfEvolEqs} and~\eqref{EqsDetForSingularReductionModulesOfEvolEqs}
is $n(n+1)/2$ and thereby greater than the number of the unknown functions~$\eta^s$ if $n>1$.
Hence the system ${\rm DE}_0(\mathcal L)$ looks strongly overdetermined in the multidimensional case.
In fact, the equations of the subsystems agree well with each other.
The subsystem~\eqref{EqInvCondForSingularReductionModulesOfEvolEqs} implies
the representation $\eta^s=-\Phi_s/\Phi_u$ of~$\eta^s$ via a single arbitrary function~$\Phi=\Phi(t,x,u)$.
Substituting this representation into the subsystem~\eqref{EqsDetForSingularReductionModulesOfEvolEqs},
we obtain a system of $n$ partial differential equations in the single function~$\Phi$,
that is equivalent to a single equation in~$\Phi$.
Indeed, we have
\begin{gather*}
\tilde H_s+\eta^s\tilde H_u-\eta^s_n-\eta^s_u\tilde H
=H^\Phi_s-\frac{\Phi_s}{\Phi_u}H^\Phi_u+\left(\frac{\Phi_s}{\Phi_u}\right)_n+\left(\frac{\Phi_s}{\Phi_u}\right)_uH^\Phi
\\\qquad
=\frac1{\Phi_u}\left(\p_s-\frac{\Phi_s}{\Phi_u}\p_u\right)\left(\Phi_n +\Phi_uH^\Phi\right)=0,
\end{gather*}
i.e., this system is equivalent to the equation $\Phi_t+\Phi_uH^\Phi=\chi(t,\Phi)$.
Here $\chi$ is an arbitrary smooth function of its arguments
and $H^\Phi$ coincides with~$\tilde H$ under the substitution $\eta^s=-\Phi_s/\Phi_u$.
The expression for~$H^\Phi$ involves derivatives of~$\Phi$ up to order $r+1$.
The function~$\Phi$ associated with a fixed module~$Q$ is defined up to the transformation $\tilde\Phi=\theta(t,\Phi)$ with $\theta_\Phi\ne0$.
Since $\tilde\eta^s=-\tilde\Phi_s/\tilde\Phi_u=-\Phi_s/\Phi_u=\eta^s$, the functions~$H^\Phi$ and~$\smash{H^{\tilde\Phi}}$ coincide.
At the same time, if we choose~$\theta$ satisfying the equation $\theta_t+\chi\theta_\Phi=0$, 
then $\tilde\Phi_t+\tilde\Phi_uH^{\tilde\Phi}=0$.
Therefore, up to the equivalence on the set of functions parameterizing singular modules we can assume that
the function~$\Phi$ is a solution of the equation $\Phi_t+\Phi_uH^\Phi=0$.

Collecting the above arguments, we derive the following result.

\begin{proposition}\label{PropositionOnReductionOfDetEqs0ForRedModulesOfEvolEqs}
The system ${\rm DE}_0(\mathcal L)$ of
the determining equations~\eqref{EqInvCondForSingularReductionModulesOfEvolEqs}
and~\eqref{EqsDetForSingularReductionModulesOfEvolEqs}
 is reduced by the composition of the nonlocal substitution
$\eta^s=-\Phi_s/\Phi_u$, where $\Phi$ is a function of $(t,x,u)$, and the hodograph transformation
\begin{gather*}
\mbox{the new independent variables:}\qquad\tilde t=t, \quad \tilde x_i=x_i, \quad \varkappa=\Phi,
\\
\lefteqn{\mbox{the new dependent variable:}}\phantom{\mbox{the new independent variables:}\qquad }\tilde u=u
\end{gather*}
to the initial equation~$\mathcal L$ in the function $\tilde u=\tilde u(\tilde t,\tilde x,\varkappa)$
with $\varkappa$ playing the role of a parameter.
\end{proposition}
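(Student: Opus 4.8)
The plan is to organize the argument around the two subsystems of ${\rm DE}_0(\mathcal L)$ and the two stages of the announced transformation, collecting and making precise the computations already sketched in the discussion preceding the statement. First I would invoke the Frobenius theorem to replace the involutivity subsystem~\eqref{EqInvCondForSingularReductionModulesOfEvolEqs} by the equivalent parameterization $\eta^s=-\Phi_s/\Phi_u$ with $\Phi=\Phi(t,x,u)$, $\Phi_u\ne0$. Substituting this into the second subsystem~\eqref{EqsDetForSingularReductionModulesOfEvolEqs} and using the identity
\[
\tilde H_s+\eta^s\tilde H_u-\eta^s_t-\eta^s_u\tilde H=\frac1{\Phi_u}\left(\p_s-\frac{\Phi_s}{\Phi_u}\p_u\right)\bigl(\Phi_t+\Phi_uH^\Phi\bigr),
\]
I would conclude that this subsystem is equivalent to $Q_s(\Phi_t+\Phi_uH^\Phi)=0$ for all~$s$, where $Q_s=\p_s-(\Phi_s/\Phi_u)\p_u$. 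Since the operators~$Q_s$ annihilate~$\Phi$, the quantity $\Phi_t+\Phi_uH^\Phi$ must depend only on~$t$ and~$\Phi$, so $\Phi_t+\Phi_uH^\Phi=\chi(t,\Phi)$. The freedom $\tilde\Phi=\theta(t,\Phi)$ leaves the module, and hence all the~$\eta^s$, unchanged; choosing~$\theta$ to solve $\theta_t+\chi\theta_\Phi=0$ normalizes~$\chi$ to zero and reduces ${\rm DE}_0(\mathcal L)$ to the single equation $\Phi_t+\Phi_uH^\Phi=0$.

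The second stage is to identify this equation, after the hodograph transformation, with the original evolution equation. I would differentiate the defining relation $\Phi(t,x,u)=\varkappa$ implicitly, treating $\tilde u=u$ as a function of~$(\tilde t,\tilde x,\varkappa)$, to obtain $u_{\tilde x_s}=-\Phi_s/\Phi_u=\eta^s$ and $u_{\tilde t}=-\Phi_t/\Phi_u$. Thus the operator $\p_s+\eta^s\p_u=Q_s$, acting on functions of~$(t,x,u)$, coincides with the spatial partial derivative~$\p_{\tilde x_s}$ at fixed~$\varkappa$. Since the~$Q_s$ commute by~\eqref{EqInvCondForSingularReductionModulesOfEvolEqs}, an induction on~$|\alpha|$ shows that $h^\alpha=Q_1^{\alpha_1}\cdots Q_n^{\alpha_n}u$ equals the genuine mixed spatial derivative~$\p_{\tilde x}^{\alpha}\tilde u$. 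Consequently $H^\Phi=H(\tilde t,\tilde x,\tilde u_{\ssl r,\tilde x\ssr})$, and substituting $\Phi_t=-\Phi_uu_{\tilde t}$ gives
\[
\Phi_t+\Phi_uH^\Phi=-\Phi_u\bigl(\tilde u_{\tilde t}-H(\tilde t,\tilde x,\tilde u_{\ssl r,\tilde x\ssr})\bigr).
\]
As $\Phi_u\ne0$, the reduced determining equation is therefore equivalent to $\tilde u_{\tilde t}=H(\tilde t,\tilde x,\tilde u_{\ssl r,\tilde x\ssr})$, which is precisely~$\mathcal L$ written in the variables~$(\tilde t,\tilde x,\varkappa)$ with~$\varkappa$ entering as a parameter.

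I expect the main obstacle to be the careful bookkeeping behind the claim that the iterated characteristic operators reproduce the true spatial derivatives in the hodograph variables, that is, the identification $h^\alpha=\p_{\tilde x}^{\alpha}\tilde u$. While the first-order case is a one-line implicit-differentiation computation, the higher-order case requires knowing that~$\p_{\tilde x_s}$, at fixed~$\varkappa$, acts on any already-transformed differential function exactly as~$Q_s$ does; this in turn relies on the commutativity of the~$Q_s$ to guarantee that~$h^\alpha$ is independent of the order of differentiation and hence genuinely a partial derivative. Everything else is routine chain-rule manipulation, and the normalization of~$\chi$ through the gauge $\tilde\Phi=\theta(t,\Phi)$ has already been spelled out in the text preceding the statement.
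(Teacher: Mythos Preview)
Your proposal is correct and follows essentially the same approach as the paper's proof, which is quite terse: it merely asserts that the involutivity subsystem yields the representation $\eta^s=-\Phi_s/\Phi_u$, that the full system then reduces (up to the gauge freedom in~$\Phi$) to the single equation $\Phi_t+\Phi_uH^\Phi=0$, and that the hodograph transformation carries this to~$\mathcal L$ ``directly from the definition of~$H^\Phi$ and the rule for calculating derivatives under the hodograph transformation, $\tilde u_{\tilde x_i}=-\Phi_i/\Phi_u$, etc.'' Your write-up supplies the content hidden in that ``etc.'', namely the identification $\p_{\tilde x_s}|_{\varkappa}=Q_s$ on functions of $(t,x,u)$ and the inductive consequence $h^\alpha=\p_{\tilde x}^{\,\alpha}\tilde u$, which is exactly the point the paper leaves implicit.
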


\begin{proof}
The subsystem~\eqref{EqInvCondForSingularReductionModulesOfEvolEqs} implies
the existence of a function $\Phi=\Phi(t,x,u)$ with $\Phi_u\ne0$ such that $\eta^s=-\Phi_s/\Phi_u$.
The entire system ${\rm DE}_0(\mathcal L)$ is reduced,
up to equivalence on the set traversed by the parameter-function~$\Phi$
from the above representation for~$\eta^s$,
to the single equation $\Phi_t+\Phi_uH^\Phi=0$.
The last equation is mapped by the hodograph transformation
to the initial equation~$\mathcal L$ for the function $\tilde u=\tilde u(\tilde t,\tilde x,\varkappa)$.
This directly follows from the definition of the function~$H^\Phi$
and the rule for calculating derivatives under the hodograph transformation,
$\tilde u_{\tilde x_i}=-\Phi_i/\Phi_u$, etc.
\end{proof}

\begin{proposition}\label{PropositionOnSingularAndUltraSingularModulesOfEvolEqs}
For any equation~$\mathcal L$ of the form~\eqref{EqGenEvolEq},
the following statements are equivalent:

1) The span~$Q=\langle Q_s=\p_s+\eta^s(t,x,u)\p_u, s=1,\dots,n\rangle$ is a reduction module of~$\mathcal L$.

2) The module~$\tilde Q=\langle\tilde Q_0,Q_1,\ldots,Q_n\rangle$, where $\tilde Q_0=\p_t+\tilde H\p_u$ and
the function $\tilde H=\tilde H(t,x,u)$ coincides with $H$ on the manifold~$\mathcal Q_{\ssl r\ssr}$,
is involutive.

3) There exists a vector field $\hat Q_0=\p_t+\eta^0(t,x,u)\p_u$ such that the module
$\hat Q=\langle\hat Q_0,Q_1,\ldots,Q_n\rangle$ is ultra-singular for~$\mathcal L$.

Moreover, under these equivalent conditions the coefficient~$\eta^0$ is uniquely determined as
$\eta^0=\tilde H$, i.e., we necessarily have $\hat Q_0=\tilde Q_0$.
\end{proposition}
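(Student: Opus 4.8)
The plan is to show that each of the three statements is equivalent to the simultaneous validity of the two determining subsystems~\eqref{EqInvCondForSingularReductionModulesOfEvolEqs} and~\eqref{EqsDetForSingularReductionModulesOfEvolEqs}, which, as established in the discussion preceding the proposition, characterize exactly the membership $Q\in\mathcal R^n_0(\mathcal L)$, i.e.\ statement~1. Note that since the $Q_s=\p_s+\eta^s\p_u$ span an $n$-dimensional submodule of $M=\langle\p_1,\dots,\p_n,\p_u\rangle$ satisfying the rank condition, the multiplier $\tilde H$ is well defined (via Proposition~\ref{PropositionOnSingularModulesOfEvolEqs}) precisely when $Q$ is involutive, a property that will turn out to be forced by every one of the three statements.

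First I would compute the three families of commutators. Since every vector field occurring here is a coordinate translation plus a $\p_u$-term, each commutator is \emph{purely} proportional to~$\p_u$:
\[
[Q_s,Q_{s'}]=\bigl(\eta^{s'}_s+\eta^s\eta^{s'}_u-\eta^s_{s'}-\eta^{s'}\eta^s_u\bigr)\p_u,\qquad
[\tilde Q_0,Q_s]=\bigl(\eta^s_t+\tilde H\eta^s_u-\tilde H_s-\eta^s\tilde H_u\bigr)\p_u,
\]
and $[\hat Q_0,Q_s]$ has the same form as $[\tilde Q_0,Q_s]$ with $\tilde H$ replaced by~$\eta^0$. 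The decisive structural remark is that the translation parts $\p_t,\p_1,\dots,\p_n$ of the basis vectors of~$\tilde Q$ (respectively~$\hat Q$) are linearly independent and transversal to~$\p_u$; hence a nonzero multiple of~$\p_u$ can never lie in~$\tilde Q$ or~$\hat Q$. Consequently $\tilde Q$ (respectively~$\hat Q$) is involutive if and only if all these commutators vanish identically. Reading off the coefficients, the vanishing of $[Q_s,Q_{s'}]$ is precisely~\eqref{EqInvCondForSingularReductionModulesOfEvolEqs} and the vanishing of $[\tilde Q_0,Q_s]$ is precisely~\eqref{EqsDetForSingularReductionModulesOfEvolEqs}. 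This yields the equivalence $1\Leftrightarrow2$ at once.

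For statement~3 I would restrict $L=u_t-H$ to the manifold~$\hat{\mathcal Q}_{\ssl r\ssr}$. There the invariant surface conditions yield $u_t=\eta^0$, while every spatial derivative is replaced by the corresponding $h^\alpha$; as $H$ depends only on the spatial derivatives $u_{\ssl r,x\ssr}$, its restriction is exactly~$\tilde H$. Hence $L|_{\hat{\mathcal Q}_{\ssl r\ssr}}=\eta^0-\tilde H$. Because $\hat Q$ is $(n{+}1)$-dimensional in the $n{+}1$ independent variables --- the special case in which singularity means ultra-singularity --- the module $\hat Q$ is ultra-singular if and only if this restriction vanishes identically, i.e.\ if and only if $\eta^0=\tilde H$. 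This forces $\hat Q_0=\tilde Q_0$, hence $\hat Q=\tilde Q$, which settles the ``moreover'' clause. Since ultra-singularity is defined only for involutive modules satisfying the rank condition (the latter holding automatically, as the coefficient matrix of the translation parts of $\hat Q_0,Q_1,\dots,Q_n$ is the identity), statement~3 collapses to ``$\hat Q=\tilde Q$ is involutive'', which is statement~2.

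The only genuinely delicate point I expect is the identity $L|_{\hat{\mathcal Q}_{\ssl r\ssr}}=\eta^0-\tilde H$: one must check that enlarging the constraint manifold from $\mathcal Q_{\ssl r\ssr}$ to $\hat{\mathcal Q}_{\ssl r\ssr}$ (which additionally fixes all $t$-derivatives of~$u$) leaves the values assigned to the purely spatial derivatives entering~$H$ unchanged, so that $H|_{\hat{\mathcal Q}_{\ssl r\ssr}}$ still equals~$\tilde H$. This is immediate because those spatial derivatives are governed solely by the conditions $Q_s[u]=0$ and their spatial differential consequences, which are common to both manifolds. Everything else reduces to the commutator computation and the transversality observation above, both entirely routine.
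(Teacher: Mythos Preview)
Your proposal is correct and follows essentially the same route as the paper's (very terse) proof: both statements~1 and~2 are identified with the joint system~\eqref{EqInvCondForSingularReductionModulesOfEvolEqs}--\eqref{EqsDetForSingularReductionModulesOfEvolEqs}, and statement~3 is reduced to statement~2 by observing that $L|_{\hat{\mathcal Q}_{\ssl r\ssr}}=\eta^0-\tilde H$ forces $\eta^0=\tilde H$, after which ultra-singularity amounts to involutivity of~$\tilde Q$. Your explicit commutator computation together with the transversality remark (that a nonzero multiple of~$\p_u$ cannot lie in $\langle\p_t,\p_1,\dots,\p_n\rangle\oplus\{\text{their }\p_u\text{-perturbations}\}$) supplies exactly the justification the paper leaves implicit.
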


\begin{proof}
Both the first and second statement are equivalent to the fact that the coefficients~$\eta^s$ satisfy
the system ${\rm DE}_0(\mathcal L)$ consisting of the equations~\eqref{EqInvCondForSingularReductionModulesOfEvolEqs}
and~\eqref{EqsDetForSingularReductionModulesOfEvolEqs}.
The module~$\hat Q$ is ultra-singular for~$\mathcal L$ if and only if $\eta^0=\tilde H$,
i.e. the module~$\hat Q$ coincides with $\tilde Q$, and this module is involutive.
\end{proof}

\begin{theorem}\label{TheoremUnitedOnSetsOfSolutionsAndReductionOperatorsWithTau0OfSystemsOfEqlsp}
Up to the equivalence of solution families, for any equation of the form~\eqref{EqGenEvolEq}
there exists a one-to-one correspondence between one-parameter families of its solutions
and its $n$-dimensional reduction modules formed by vector fields with vanishing $t$-components.
Namely, each module of the above kind corresponds to
the family of solutions which are invariant with respect to this module.
The problems of the construction of all one-parameter solution families of equation~\eqref{EqGenEvolEq}
and the exhaustive description of its $n$-dimensional reduction modules formed by vector fields with vanishing $t$-components
are completely equivalent.
\end{theorem}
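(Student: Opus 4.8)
The plan is to deduce this statement from Theorem~\ref{TheoremOnOneParFamiliesOfSolutionsAndnDReductionModules}, applied to the evolution equation~$\mathcal L$ regarded as an equation in the $n{+}1$ independent variables $(t,x_1,\dots,x_n)$, by establishing a canonical bijection between the $n$-dimensional reduction modules of~$\mathcal L$ with vanishing coefficient of~$\p_t$ and the $(n{+}1)$-dimensional ultra-singular reduction modules of~$\mathcal L$.

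First I would construct this bijection from Proposition~\ref{PropositionOnSingularAndUltraSingularModulesOfEvolEqs}. Given an $n$-dimensional reduction module $Q=\langle Q_s=\p_s+\eta^s\p_u\rangle$ with $\tau^s=0$ (which may be assumed to be in the reduced form~\eqref{EqReducedForOfSingularModulesForEvolEqs}), the equivalence of its statements~1 and~3 produces the module $\hat Q=\langle\p_t+\tilde H\p_u,Q_1,\dots,Q_n\rangle$, which is $(n{+}1)$-dimensional and ultra-singular for~$\mathcal L$, hence a reduction module by Proposition~\ref{PropositionOnUltraSingularRedOpsAndFamiliesOfInvSolutions}; the additional assertion $\eta^0=\tilde H$ guarantees that $\hat Q$ is uniquely determined by~$Q$. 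In the opposite direction, given an $(n{+}1)$-dimensional ultra-singular reduction module $\hat Q$, I would extract the submodule $Q$ consisting of all vector fields in~$\hat Q$ whose coefficient of~$\p_t$ vanishes. Writing $\hat Q$ in a basis whose coefficient matrix of $(\p_t,\p_1,\dots,\p_n)$ is the identity (possible by the rank condition, exactly as in Theorem~\ref{TheoremOnOneParFamiliesOfSolutionsAndnDReductionModules}), one sees that this submodule equals $\langle Q_1,\dots,Q_n\rangle$, is therefore $n$-dimensional, canonical and satisfies the rank condition; the implication 3~$\Rightarrow$~1 then shows it is a reduction module with $\tau^s=0$. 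These two maps are mutually inverse, again owing to the uniqueness $\eta^0=\tilde H$.

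Next I would verify that the two modules paired by this bijection single out the same solutions, so that the correspondence genuinely sends a module to the family of its invariant solutions. Any $\hat Q$-invariant solution is trivially $Q$-invariant, since $Q\subset\hat Q$. Conversely, a $Q$-invariant \emph{solution}~$u$ of~$\mathcal L$ lies on $\mathcal Q_{\ssl r\ssr}$, where $H$ coincides with~$\tilde H$; thus $u_t=H=\tilde H$, which is precisely the characteristic equation of the extra generator $\p_t+\tilde H\p_u$, so $u$ is also $\hat Q$-invariant. Hence the families of $Q$- and $\hat Q$-invariant solutions of~$\mathcal L$ coincide.

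Finally I would compose this bijection with Theorem~\ref{TheoremOnOneParFamiliesOfSolutionsAndnDReductionModules} in $n{+}1$ variables, which furnishes, up to the equivalence of solution families, a one-to-one correspondence between one-parametric families of solutions of~$\mathcal L$ and its $(n{+}1)$-dimensional ultra-singular reduction modules, each module being matched with its invariant solutions. Transporting this correspondence along the module bijection yields the claimed statement, while the coincidence of invariant families established above ensures that the geometric description is preserved; the equivalence of the two construction problems then follows from the bijection itself and is, in fact, already made explicit by Proposition~\ref{PropositionOnReductionOfDetEqs0ForRedModulesOfEvolEqs}. The main obstacle I anticipate is the bijection of the second paragraph: checking that the $\tau^s=0$ submodule extracted from~$\hat Q$ is genuinely canonical and $n$-dimensional, and that the passage $Q\mapsto\hat Q\mapsto Q$ recovers the original module, both of which hinge on the uniqueness $\eta^0=\tilde H$ in Proposition~\ref{PropositionOnSingularAndUltraSingularModulesOfEvolEqs}.
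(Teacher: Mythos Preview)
Your argument is correct, and in fact the paper explicitly acknowledges this route in the remark immediately following its proof: the theorem is a consequence of Theorem~\ref{TheoremOnOneParFamiliesOfSolutionsAndnDReductionModules} together with the module correspondence of Proposition~\ref{PropositionOnSingularAndUltraSingularModulesOfEvolEqs}, once one observes that $Q$-invariant solutions of~$\mathcal L$ are automatically $\hat Q$-invariant. Your verification of the bijection $Q\leftrightarrow\hat Q$ and of the coincidence of invariant families is exactly what is needed to make that remark rigorous.

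The paper, however, gives a \emph{direct} proof rather than the one you propose. Starting from $Q\in\mathcal R^n_0(\mathcal L)$, it builds the ansatz $u=f(t,x,\varphi(\omega))$ with $\omega=t$, obtains a first-order ODE for~$\varphi$, writes its general solution $\varphi=\varphi(\omega,\varkappa)$, and substitutes back to get the one-parametric family; conversely, from a family $u=f(t,x,\varkappa)$ it inverts to $\varkappa=\Phi(t,x,u)$ and defines $\eta^s=-\Phi_s/\Phi_u$ directly. The advantage of the paper's route is that it exhibits concretely how the reduction to a first-order ODE in~$t$ produces the parameter~$\varkappa$, which the authors say ``leads to a deeper understanding of reduction of evolution equations to first-order ordinary differential equations.'' Your route, by contrast, is cleaner structurally: it isolates the theorem as a formal consequence of the general $(n{+}1)$-dimensional ultra-singular theory plus the purely algebraic Proposition~\ref{PropositionOnSingularAndUltraSingularModulesOfEvolEqs}, and thereby makes transparent why the no-go phenomenon for evolution equations is just an instance of the uniform mechanism of Section~\ref{SectionOnReductionToAlgebraicEquations}.
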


\begin{proof}
Let $\mathcal L$ be an equation from class~\eqref{EqGenEvolEq} and $Q=\langle Q_1,\dots,Q_n\rangle\in \mathcal R^n_0(\mathcal L)$,
i.e., $Q_s=\p_s+\eta^s\p_u$, where the coefficients $\eta^s=\eta^s(t,x,u)$ satisfy the system ${\rm DE}_0(\mathcal L)$.
An ansatz constructed with~$Q$ has the form $u=f(t,x,\varphi(\omega))$,
where $f=f(t,x,\varphi)$ is a given function, $f_\varphi\ne0$,
$\varphi=\varphi(\omega)$ is the new unknown function and $\omega=t$ is the invariant independent variable.
This ansatz reduces $\mathcal L$ to a first-order ordinary differential equation $\mathcal L'$ in~$\varphi$,
solvable with respect to~$\varphi_\omega$.
The general solution of the reduced equation~$\mathcal L'$ can be represented in the form
$\varphi=\varphi(\omega,\varkappa)$, where $\varphi_\varkappa\ne0$ and $\varkappa$ is an arbitrary constant.
The form of the general solution is defined up to a transformation $\tilde\varkappa=\zeta(\varkappa)$
of the parameter~$\varkappa$.
Substituting this solution into the ansatz results in
the one-parameter family~$\mathcal F$ of solutions $u=\tilde f(t,x,\varkappa)$ of~$\mathcal L$
with $\tilde f=f(t,x,\varphi(t,\varkappa))$,
and any $Q$-invariant solution of~$\mathcal L$ belongs to this family.
Expressing the parameter~$\varkappa$ from the equality $u=\tilde f(t,x,\varkappa)$,
we obtain that $\varkappa=\Phi(t,x,u)$, where $\Phi_u\ne0$.
Then $\eta^s=u_s=-\Phi_s/\Phi_u$ for any $u\in\mathcal F$, i.e., for any admissible value of $(t,x,\varkappa)$.
This implies that $\eta^s=-\Phi_s/\Phi_u$ for any admissible value of $(t,x,u)$.

The proof of the converse assertion is similar to that of
Theorem~\ref{TheoremOnOneParFamiliesOfSolutionsAndnDReductionModules}.
Consider a one-parameter family $\mathcal F=\{u=f(t,x,\varkappa)\}$ of solutions of~$\mathcal L$.
The derivative $f_\varkappa$ is nonzero since the parameter $\varkappa$ is essential.
We express $\varkappa$ from the equality $u=f(t,x,\varkappa)$: $\varkappa=\Phi(t,x,u)$
for some function $\Phi=\Phi(t,x,u)$ with $\Phi_u\ne0$.
The span $Q=\langle Q_1,\dots,Q_n\rangle$,
where $Q_s=\p_s+\eta^s\p_u$ and the coefficients $\eta^s=\eta^s(t,x,u)$ is defined by $\eta^s=-\Phi_s/\Phi_u$,
is an $n$-dimensional involutive module.
For any $u\in\mathcal F$ we have $Q_s[u]=0$.
The ansatz $u=f(t,x,\varphi(\omega))$, where $\omega=t$, associated with~$Q$,
reduces~$\mathcal L$ to the equation $\varphi_\omega=0$.
Therefore, $Q\in\mathcal R^n_0(\mathcal L)$.

One-parameter families
$\mathcal F=\{u=f(t,x,\varkappa)\}$ and $\tilde{\mathcal F}=\{u=\tilde f(t,x,\tilde\varkappa)\}$
of solutions of~$\mathcal L$ are equivalent if and only if
the associated functions~$\Phi=\Phi(t,x,u)$ and $\tilde\Phi=\tilde\Phi(t,x,u)$
satisfy the condition $\Phi_s/\Phi_u=\tilde\Phi_s/\tilde\Phi_u$.
Therefore, equivalent one-parameter families of solutions correspond to
the same module~$Q$ from $\mathcal R^n_0(\mathcal L)$
and, conversely, any two one-parameter families of $Q$-invariant solutions are equivalent.
\end{proof}

\begin{remark}
\looseness=-1
The triviality of the above ansatz and the reduced equation results from
the above special representation for the solutions of the determining equation.
In this approach difficulties in the construction of ansatzes
and the integration of the reduced equations are replaced
by difficulties in obtaining the representation for the components of basis elements of reduction modules.
\end{remark}

\begin{remark}
In fact, Theorem~\ref{TheoremUnitedOnSetsOfSolutionsAndReductionOperatorsWithTau0OfSystemsOfEqlsp}
is a consequence of Theorem~\ref{TheoremOnOneParFamiliesOfSolutionsAndnDReductionModules}
and Proposition~\ref{PropositionOnSingularAndUltraSingularModulesOfEvolEqs}.
This observation provides a uniform background for no-go results on reduction modules.
It suffices to note that any $Q$-invariant solution of~$\mathcal L$ is $\hat Q$-invariant,
where the modules~$Q$ and~$\hat Q$ are defined in
Proposition~\ref{PropositionOnReductionOfDetEqs0ForRedModulesOfEvolEqs}.
We have given a direct proof as it leads to a deeper understanding of reduction of evolution equations
to first-order ordinary differential equations.
\end{remark}

\section{Reduction modules of codimension one\\ and singularity co-order one}\label{SectionOnReductionModulesOfSingularityCoOrder1}

Taking the previous example on evolution equations as a model case, we now proceed to studying
$(n{-}1)$-dimensional singularity-co-order-one reduction modules of general partial differential equations in
one dependent and $n$~independent variables.
In the course of considering single modules of this kind it is only possible to derive a result similar to
Proposition~\ref{PropositionOnSingularAndUltraSingularModulesOfEvolEqs}.

\begin{proposition}\label{PropositionOnWSCO1ReductionModulesAndUltraSingularModules}
Let $\mathcal L$: $L[u]=0$ be a partial differential equation in one dependent and $n$~independent variables,
let $Q$ be an $(n{-}1)$-dimensional involutive module of vector fields defined in the space of $(x,u)$
and satisfying the rank condition, as well as $\wsco_{\mathcal L}Q=1$.
We also suppose that a first-order differential function~$\hat L$
associated with~$L$ on the manifold~$\mathcal Q_{\ssl r\ssr}$ up to nonvanishing multiplier,
where $r=\ord L$,
is of maximal rank with respect to the unique first-order derivative of~$u$ appearing in~$\hat L$.
Then $Q$ is a reduction module of~$\mathcal L$ if and only if
there exists a (unique) $n$-dimensional module~$\hat Q$
which is ultra-singular for~$\mathcal L$ and contains~$Q$.
\end{proposition}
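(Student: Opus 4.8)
The plan is to reduce to a convenient coordinate system and then show that both the reduction-module property of~$Q$ and the existence of the ultra-singular extension~$\hat Q$ are equivalent to one and the same scalar condition. Since, by Lemma~\ref{LemmaOnInducedMapping}, point transformations preserve the reduction-module property, the rank condition and (weak as well as ultra) singularity, I would first straighten the commuting basis of the involutive module~$Q$ and assume in the new variables that $Q=\langle\p_1,\dots,\p_{n-1}\rangle$, keeping the old notation for the transformed objects. Then $\mathcal Q_{\ssl r\ssr}$ is given by $u_\alpha=0$ for all~$\alpha$ with $\alpha_1+\dots+\alpha_{n-1}>0$, so the associated function~$\hat L$ depends only on~$x$, $u$ and the pure derivatives $u_n,\dots,u_{(0,\dots,0,r)}$. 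Because $\wsco_{\mathcal L}Q=1$, I may write $\hat L=\lambda\tilde L$ with a nonvanishing~$\lambda$ and a genuinely first-order $\tilde L=\tilde L(x,u,u_n)$; the maximal-rank hypothesis ($\tilde L_{u_n}\ne0$ on $\tilde L=0$) then allows me to solve $\tilde L=0$ in the form $u_n=G(x,u)$.

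Next I would pin down the candidate extension. Any $n$-dimensional module containing~$Q$ and satisfying the rank condition has, after subtracting multiples of $\p_1,\dots,\p_{n-1}$ and rescaling, a basis $\{\p_1,\dots,\p_{n-1},\hat Q_0\}$ with $\hat Q_0=\p_n+\eta^0(x,u)\p_u$; moreover $\eta^0$ is uniquely determined by the module, since $\p_u\notin\langle\p_1,\dots,\p_{n-1},\hat Q_0\rangle$. Involutivity of this module is equivalent to $[\p_s,\hat Q_0]=\eta^0_{x_s}\p_u=0$, i.e.\ to $\eta^0_{x_s}=0$ for $s=1,\dots,n-1$. On $\hat{\mathcal Q}_{\ssl r\ssr}$ all $x_s$-derivatives vanish while $u_n=\eta^0$ and the higher pure derivatives are the corresponding total-derivative consequences, so restricting $\hat L=\lambda\tilde L$ yields $L|_{\hat{\mathcal Q}_{\ssl r\ssr}}=\lambda'\,\tilde L(x,u,\eta^0)$ with $\lambda'$ nonvanishing. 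Hence $\hat Q$ is ultra-singular precisely when $\tilde L(x,u,\eta^0)\equiv0$, which by solvability means $\eta^0=G$. Combining this with the involutivity constraint $\eta^0_{x_s}=0$ shows that an $n$-dimensional ultra-singular module $\hat Q\supseteq Q$ exists if and only if $G_{x_s}=0$, and in that case the equality $\eta^0=G$ forces $\hat Q$ to be unique.

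It then remains to identify the reduction-module property with the same condition $G_{x_s}=0$. By Lemma~\ref{LemmaOnReformulationOfCondInvCriterion}, $Q$ reduces~$\mathcal L$ if and only if it reduces $\tilde{\mathcal L}$: $\tilde L=0$. Substituting the ansatz $u=\varphi(x_n)$ (with invariant variable $\omega=x_n$) turns $\tilde L=0$ into $\tilde L(x,\varphi,\varphi')=0$, i.e.\ $\varphi'=G(x,\varphi)$, and this is a genuine reduced ordinary differential equation in~$\omega$ exactly when its right-hand side is free of the group variables, that is $G_{x_s}=0$; this is precisely the maximal-rank argument used in Proposition~\ref{PropositionOnSingularCodimOneRedModulesAndFamiliesOfInvSolutions}. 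Thus the reduction-module property and the existence of~$\hat Q$ are both equivalent to $G_{x_s}=0$, so undoing the initial point transformation completes the proof; the resulting operator $\hat Q_0$ plays exactly the role of $\hat Q_0=\p_t+\tilde H\p_u$ in Proposition~\ref{PropositionOnSingularAndUltraSingularModulesOfEvolEqs}.

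The main obstacle I anticipate is the careful bookkeeping with the weak-singularity multiplier: one must verify that the factorization $\hat L=\lambda\tilde L$ is compatible with the further restriction to the smaller manifold $\hat{\mathcal Q}_{\ssl r\ssr}$, so that the surviving factor $\lambda'$ is still nonvanishing and the vanishing of $L|_{\hat{\mathcal Q}_{\ssl r\ssr}}$ is governed solely by $\tilde L(x,u,\eta^0)$, and that the function~$G$ extracted from $\tilde L$ is independent of the chosen representative and of the admissible normalizations. Everything else is routine once the single scalar condition $G_{x_s}=0$ has been isolated.
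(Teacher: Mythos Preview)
Your proof is correct and follows essentially the same route as the paper's: both arguments use Lemma~\ref{LemmaOnReformulationOfCondInvCriterion} to replace~$\mathcal L$ by the first-order equation $\hat L=0$, rewrite the latter as the characteristic equation of a single vector field~$Q_0$ (your $\p_n+G\p_u$), and then identify the reduction-module property of~$Q$ with the involutivity of $\hat Q=\langle Q,Q_0\rangle$, which in turn is exactly the ultra-singularity condition. The only difference is presentational: the paper argues abstractly with the characteristic~$Q_0[u]$ while you straighten~$Q$ first and extract the explicit scalar condition $G_{x_s}=0$; your concern about the nonvanishing multiplier~$\lambda$ is legitimate bookkeeping but causes no real difficulty.
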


\begin{proof}
Under these assumptions, there exists a vector field~$Q_0$ such that
the equation~$\hat L=0$ is equivalent to the equation $Q_0[u]=0$, where $Q_0[u]$ is the characteristic of~$Q_0$.
Consider the module~$\hat Q$ spanned by~$Q$ and~$Q_0$.
It is $n$-dimensional and satisfies the rank condition.
Lemma~\ref{LemmaOnReformulationOfCondInvCriterion} implies that
$Q$ is a reduction module of~$\mathcal L$ if and only if
it is a reduction module of the equation $\hat L=0$.
The last condition is equivalent to the involutivity of the module~$\hat Q$,
and then the equation~$\mathcal L$ is an identity
on the well-defined manifold~$\smash{\hat{\mathcal Q}_{\ssl r\ssr}}$,
i.e., the module~$\hat Q$ is ultra-singular for~$\mathcal L$.
The conditions that the $n$-dimensional module~$\hat Q$ is ultra-singular for~$\mathcal L$ and contains~$Q$
uniquely determine~$\hat Q$.
\end{proof}

\begin{remark}
The assumption that the differential function~$\hat L$
is of maximal rank with respect to the unique first-order derivative of~$u$ appearing in~$\hat L$
can be replaced by the weaker supposition that the equation $\hat L=0$ is equivalent to the equation $\check L=0$,
where the differential function~$\check L$ satisfies the above condition of maximal rank.
\end{remark}

In order to have a richer theory, we shall consider families of
$(n{-}1)$-dimensional reduction modules of singularity co-order one that are parameterized by arbitrary functions.
Thus let $\mathcal L$: $L[u]=0$ be a partial differential equation of essential order $r>1$.
We assume that the function~$L$ admits an $n$-dimensional co-order one meta-singular module~$M$ of vector fields,
cf.\ Remark~\ref{RemarkOnWeaklyAndStronglyMetaSingularModuleForDiffEq}.
As the special case $n=2$, when meta-singular modules may be non-involutive,
was considered in~\cite{Kunzinger&Popovych2010a},
in what follows we additionally assume that the module~$M$ is involutive. 

\begin{remark}
For first-order partial differential equations, involutive modules with singularity co-order one are regular. 
The results of this section are true also for such equations 
via replacing the attributes ``co-order one singular'' and ``co-order one meta-singular'' 
by ``regular'' and ``meta-regular'', respectively.
\end{remark}

Up to a change of variables we may suppose that the module~$M$ contains
a singularity-co-order-one family $\mathfrak M=\{Q^\Phi\}$ of $(n{-}1)$-dimensional modules in reduced form
parameterized by an arbitrary smooth function~$\Phi$ of $(x,u)$, i.e.,
$Q^\Phi=\langle\p_s+\eta^s\p_u\rangle$, where $\eta^s=-\Phi_s/\Phi_u$
and the index~$s$ runs from~1 to~$p=n-1$.
We use the representation for~$\eta^s$ from the very beginning.

By Theorem~\ref{TheoremOnDiffFunctionsWithMetaSingularModule},
the differential function~$L$ can be written in the form $L=\check L(x,\Omega_{r,1,n-1})$, where
$\Omega_{r,1,n-1}=\big(u_\alpha,\,\alpha_n\leqslant 1,\,|\alpha|\leqslant r\big)$,
and $\check L$ essentially depends on some $u_\alpha$ with $\alpha_n=1$.
In this case the restriction of~$L$ to $\smash{\mathcal Q^\Phi_{\ssl r\ssr}}$ coincides with
the restriction, to the same manifold $\smash{\mathcal Q^\Phi_{\ssl r\ssr}}$,
of the function $\tilde L^\Phi=\check L(x,\tilde\Omega_{r,1,n-1})$, where
\[
\tilde\Omega_{r,1,n-1}=\big(D_n^{\alpha_n}(Q^\Phi_1)^{\alpha_1}\cdots(Q^\Phi_{n-1})^{\alpha_{n-1}}u,
\alpha_n\leqslant 1,\,|\alpha|\leqslant r\big).
\]
Consequently, the form of $\tilde L^\Phi$ is determined by the form of~$L$ and a chosen value of the parameter-function~$\Phi$.
Depending on the value of~$\Phi$, the differential function~$\tilde L^\Phi$
may either identically vanish or be a nonvanishing differential function of order not greater than 1.
Thus either the module $Q^\Phi$ is ultra-singular for~$L$ or
$\sco_L Q^\Phi\leqslant0$ with $\tilde L^\Phi$ not identically vanishing
or $\sco_L Q^\Phi=1$.
We analyze each of these cases separately.
In addition, we suppose that the equation $\tilde L^\Phi=0$ can be solved with respect to $u$ (resp. $u_n$)
if $\sco_L Q^\Phi=0$ (resp. $\sco_L Q^\Phi=1$).

The condition $\tilde L^\Phi\equiv0$, where $u$ and $u_n$ are considered as independent variables,
determines those values of~$\Phi$ for which the module~$Q^\Phi$ is ultra-singular for~$\mathcal L$.
We split this condition with respect to~$u_n$, thereby obtaining a system $\mathcal S_{\rm ultra}$ of partial differential equations in~$\Phi$
of orders not greater than~$r$, which may be incompatible in the general case.
Incompatibility here amounts to the family~$\mathfrak M$ containing no ultra-singular modules.
For example, evolution equations of orders greater than 1 have no ultra-singular modules generated by vector fields
of the general form~\eqref{EqReducedForOfSingularModulesForEvolEqs}, see~Section~\ref{SectionOnExampleOfEvolEqs}.
That the parameter-function~$\Phi$ satisfies the ultra-singularity condition $\mathcal S_{\rm ultra}$ guarantees that $Q^\Phi\in\mathcal R^{n-1}(\mathcal L)$
and the family of $Q^\Phi$-invariant solutions of~$\mathcal L$ is parameterized
by an arbitrary function of the single $Q^\Phi$-invariant variable~$x_n$.

Under the assumption $\sco_L Q^\Phi\leqslant0$ with $\tilde L^\Phi\not\equiv0$,
it follows that the parameter-function~$\Phi$ satisfies the condition $\tilde L^\Phi_{u_n}\equiv0$
with $u$ and $u_n$ viewed as independent variables, which is weaker than the ultra-singularity condition.
In this case the corresponding system~$\mathcal S_0$ of partial differential equations in~$\Phi$ of orders not greater than~$r$,
obtained by splitting the zero co-order singularity condition with respect to~$u_n$,
is more likely to be compatible than~$\mathcal S_{\rm ultra}$.

Next we provide certain sufficient conditions for the compatibility of~$\mathcal S_0$.
If $\check L_{u_n}=0$, 
then the system $\mathcal S_0$ is compatible
since it is satisfied by any~$\Phi$ with $(\Phi_s/\Phi_u)_u=0$.
Up to equivalence of functions parameterizing modules from the family~$\mathfrak M$,
we can assume that $\Phi=u-\zeta(x)$ and $\eta^s=\zeta_s$, i.e., $Q^{u-\zeta}=\langle\p_s+\zeta_s\p_u\rangle$.
In other words, we have $\sco_L Q^{u-\zeta}\leqslant0$ for any $\zeta=\zeta(x)$.

Additionally assuming $\check L_u=0$, the condition $\tilde L^{u-\zeta}\equiv0$ under the assumption $\zeta=\zeta(x)$
implies only a single partial differential equation with respect to~$\zeta$.
Any of its solutions is a solution of $\mathcal S_{\rm ultra}$ and hence
the corresponding module $Q^{u-\zeta}$ is ultra-singular for~$L$.

\looseness=1
Otherwise $\sco_L Q^{u-\zeta}=0$ and we can resolve the equation~$\mathcal L$
with respect to~$u$ as a variable of the underlying jet space.
In other words, we represent this equation in the form $u=K[u]$.
Here $K[u]$ is a differential function
which depends at most on the variables~$x$ and $u_\alpha$
with $0<|\alpha|\leqslant r$, $\alpha_n\leqslant1$ and, if $\alpha_n=1$, $|\alpha|>1$.
The equation $\tilde L^{u-\zeta}=0$ is obtained via replacing, in~$\mathcal L$, 
$u_{\alpha+\delta_s}$ by $\zeta_{\alpha+\delta_s}(x)$ for all~$\alpha$ with $|\alpha|<r$; 
see the notation in Section~\ref{SectionOnDefOfRedModules}.
Therefore, this equation can be represented in the form $u=G^\zeta(x)$,
where $\smash{G^\zeta(x)=K[u]\big|_{u=\zeta(x)}}$.
Thus, the expression for the function $G^\zeta$ involves derivatives of the parameter-function~$\zeta=\zeta(x)$ up to order~$r$.
The conditional invariance of the equation~$\mathcal L$ with respect to the module~$Q^{u-\zeta}$
is equivalent to the compatibility of the system $u=G^\zeta$, $u_s=\zeta_s$ with respect to~$u$ and hence
leads to the system of $n-1$ partial differential equations $\zeta_s=G^\zeta_s$ with respect to $\zeta$.
Here $G^\zeta$ is interpreted as a differential function of $\zeta=\zeta(x)$.
Then $G^\zeta_s$ is the total derivative of~$G^\zeta$ with respect to~$x_s$.
As the parameter-function~$\zeta$ is defined for a fixed module up to a constant summand,
the system is equivalent to the single partial differential equation $\zeta=G^\zeta$.
This means that $Q^{u-\zeta}$ is a reduction module of the equation~$\mathcal L$
if and only if up to shifts of the dependent variable the parameter-function~$\zeta$ is a solution of the same equation.
The ansatz constructed with $Q^{u-\zeta}$ can be taken in the form $u=\varphi(\omega)+\zeta(x)$,
where $\varphi=\varphi(\omega)$ is the new unknown function and $\omega=x_n$ is the invariant independent variable.
It reduces the initial equation~$\mathcal L$ to the trivial algebraic equation $\varphi=0$,
i.e., the function $u=\zeta(x)$ is the unique $Q^{u-\zeta}$-invariant solution of~$\mathcal L$.
Conversely, we fix a solution $u=\zeta(x)$ of the equation~$\mathcal L$.
Then $\zeta=G^\zeta(x)$ and hence $\zeta_s=G^\zeta_s$,
which is equivalent to the conditional invariance criterion for the case of
the module $Q^{u-\zeta}=\langle\p_s+\zeta_s\p_u\rangle$ and the equation~$\mathcal L$, i.e.,
$Q^{u-\zeta}$ is a reduction module of~$\mathcal L$, and $\zeta_u=0$.
The solution $u=\zeta(x)$ is invariant with respect to~$Q^{u-\zeta}$ by construction.
Thus we obtain:

\begin{theorem}\label{TheoremOnSolutionsAndZeroOrderSingularRedOpsForGen2DPDEs}
Suppose that an equation~$\mathcal L$: $L=0$ possesses
a singularity-co-order-one family $\mathfrak M=\{Q^\Phi\}$ of $(n{-}1)$-dimensional modules in reduced form
$Q^\Phi=\langle\p_s-(\Phi_s/\Phi_u)\p_u\rangle$ parameterized by an arbitrary smooth function~$\Phi$ of $(x,u)$, i.e.,
the left hand side~$L$ of this equation is represented in the form $L=\check L(x,\Omega_{r,1,n-1})$, where
$\Omega_{r,1,n-1}=\big(u_\alpha,\,\alpha_n\leqslant 1,\,|\alpha|\leqslant r\big)$,
$\check L_{u_\alpha}\ne0$ for some $u_\alpha$ with $\alpha_n=1$,
and additionally $\check L_{u_n}=0$ and $\check L_u\ne0$.
Then there exists a one-to-one correspondence between solutions of~$\mathcal L$
and reduction modules from~$\mathfrak M$ with $\Phi=u-\zeta(x)$.
Namely, any such module is of singularity co-order~$0$ and
corresponds to the unique solution which is invariant with respect to this module.
The problems of solving the equation~$\mathcal L$
and of the exhaustive description of its reduction modules of the above form are completely equivalent.
\end{theorem}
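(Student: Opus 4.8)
The plan is to run, for the distinguished subfamily $\Phi=u-\zeta(x)$, exactly the mechanism already used in Theorem~\ref{TheoremOnOneParFamiliesOfSolutionsAndnDReductionModules} for $n$-dimensional reduction modules, turning the conditional invariance criterion into the bare requirement that $\zeta$ be a solution of $\mathcal L$. First I would pin down the singularity co-order of these modules. For $\Phi=u-\zeta(x)$ the basis operators are $Q_s^{u-\zeta}=\p_s+\zeta_s\p_u$, so on $\smash{\mathcal Q^{u-\zeta}_{\ssl r\ssr}}$ every coordinate $\omega_\alpha$ with $|\check\alpha|>0$ collapses to a pure function of $x$ (a derivative of $\zeta$), while the only arguments with $\check\alpha=\check0$ allowed in $\Omega_{r,1,n-1}$ are $u$ and $u_n$. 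Since $\check L_{u_n}=0$ by hypothesis, the associated function $\tilde L^{u-\zeta}$ is free of $u_n$ and hence of order zero in $(x,u)$, while $\check L_u\ne0$ forces genuine dependence on $u$; thus $\sco_L Q^{u-\zeta}=0$ (not $-1$) for every $\zeta$, so the whole subfamily consists of co-order-zero singular modules.

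Second, using $\check L_u\ne0$ (maximal rank in $u$) I would resolve $\mathcal L$ with respect to $u$ as a jet variable, writing it as $u=K[u]$ with $K$ involving only the derivatives listed in the statement, no bare $u$ and no $u_n$. Restricting to $\smash{\mathcal Q^{u-\zeta}_{\ssl r\ssr}}$ then turns $\tilde L^{u-\zeta}=0$ into $u=G^\zeta(x)$, where $G^\zeta=K[u]\big|_{u=\zeta(x)}$ depends on derivatives of $\zeta$ up to order $r$. The central step is to feed this into the reformulated conditional invariance criterion of Lemma~\ref{LemmaOnReformulationOfCondInvCriterion}: $Q^{u-\zeta}$ reduces $\mathcal L$ precisely when the order-zero equation $u=G^\zeta$ is compatible with the invariant surface condition $u_s=\zeta_s$. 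Differentiating $u=G^\zeta$ along $x_s$ ($s\le n-1$) and matching with $u_s=\zeta_s$ yields the system $\zeta_s=G^\zeta_s$, i.e.\ $\p_s(\zeta-G^\zeta)=0$; since $\zeta$ is fixed by the module only up to shifts of the dependent variable that can be absorbed into the reduced unknown, this collapses to the single equation $\zeta=G^\zeta$, which is exactly $L[u]\big|_{u=\zeta}=0$. Hence $Q^{u-\zeta}\in\mathcal R^{n-1}(\mathcal L)$ if and only if $\zeta$ solves $\mathcal L$.

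Finally I would read off the correspondence in both directions and check bijectivity. In the forward direction, for such a reduction module the ansatz $u=\varphi(x_n)+\zeta(x)$ reduces $\mathcal L$ to the algebraic identity $\varphi=G^\zeta-\zeta=0$, so $u=\zeta$ is the \emph{unique} $Q^{u-\zeta}$-invariant solution, and $\zeta$ solves $\mathcal L$. Conversely, starting from any solution $u=\zeta(x)$, setting $Q^{u-\zeta}=\langle\p_s+\zeta_s\p_u\rangle$ gives $\zeta=G^\zeta$, hence $\zeta_s=G^\zeta_s$, which is the conditional invariance criterion, and $u=\zeta$ is invariant by construction. Because the invariant solution is unique, the maps $\zeta\mapsto Q^{u-\zeta}$ and (module) $\mapsto$ (its invariant solution) are mutually inverse, giving the asserted bijection and the equivalence of the two problems.

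The hard part will be the third step: making the passage from the conditional invariance criterion to the clean statement ``$\zeta$ solves $\mathcal L$'' fully rigorous. One must verify that $\tilde L^{u-\zeta}$ is indeed of maximal rank in $u$ so that the resolution $u=G^\zeta$ is legitimate, keep careful track of which mixed $x_n$-derivatives of $\zeta$ actually enter $G^\zeta$ through $K$, and justify that integrating $\p_s(\zeta-G^\zeta)=0$ together with the module's freedom in $\zeta$ leaves $\zeta=G^\zeta$ rather than an uncontrolled $x_n$-dependent discrepancy.
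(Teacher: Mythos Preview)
Your proposal is correct and follows essentially the same route as the paper's argument (which is given in the discussion immediately preceding the theorem statement rather than in a separate proof environment): compute $\sco_LQ^{u-\zeta}=0$ from $\check L_{u_n}=0$ and $\check L_u\ne0$, resolve $\mathcal L$ as $u=K[u]$, reduce the conditional invariance criterion to the compatibility system $\zeta_s=G^\zeta_s$, and then pass to $\zeta=G^\zeta$ using the freedom in~$\zeta$. Your flagged ``hard part'' about the $x_n$-dependent discrepancy is handled exactly as you suspect---the module depends only on $\zeta_s$, so $\zeta$ is determined up to an additive function of~$x_n$, and $G^\zeta$ is invariant under such shifts because $K$ involves no bare $u$ and no pure $u_n$-derivative; the paper phrases this somewhat loosely as ``up to a constant summand''.
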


Next we turn to analyzing the regular values of~$\Phi$ for which the singularity co-order of $Q^\Phi$
coincides with the singularity co-order of the whole family~$\mathfrak M$ (and equals 1).
If $\sco_L Q^\Phi=1$, the parameter-function~$\Phi$ satisfies the regularity condition $\tilde L^\Phi_{u_n}\ne0$.
Thus the equation $\tilde L^\Phi=0$,
which is equivalent to~$\mathcal L$ on the manifold~$\smash{\mathcal Q^\Phi_{\ssl r\ssr}}$,
can be solved with respect to~$u_n$: $u_n=G^\Phi(x,u)$.
Here the expression for the function~$G^\Phi$ depends on derivatives of the parameter-function~$\Phi$ up to order~$r$.
The conditional invariance criterion, when applied to the equation~$\mathcal L$ and the module~$Q^\Phi$, gives the system
\begin{equation}\label{EqDEFor1stOrderSingularRedModulesOfGenPDEs}
\eta^s_n+\eta^s_uG^\Phi=G^\Phi_s+\eta^sG^\Phi_u
\end{equation}
with respect to the function~$\Phi$ (recall that $\eta^s=-\Phi_s/\Phi_u$).
This system coincides with the formal compatibility condition
of the equations $u_n=G^\Phi$ and $u_s=\eta^s$ with respect to~$u$.
As the function~$G^\Phi$ can also be expressed directly in terms of the coefficients~$\eta^s$,
the system~\eqref{EqDEFor1stOrderSingularRedModulesOfGenPDEs}
supplemented with the involutivity condition $\eta^s_{s'}+\eta^{s'}\eta^s_u=\eta^{s'}_s+\eta^s\eta^{s'}_u$
can be interpreted as a system with respect to~$\eta^s$, cf.\ Section~\ref{SectionOnExampleOfEvolEqs}.
Since
\begin{gather*}
G^\Phi_s+\eta^sG^\Phi_u-\eta^s_n-\eta^s_uG^\Phi
=G^\Phi_s-\frac{\Phi_s}{\Phi_u}G^\Phi_u+\left(\frac{\Phi_s}{\Phi_u}\right)_n+\left(\frac{\Phi_s}{\Phi_u}\right)_uG^\Phi
\\\qquad
=\frac1{\Phi_u}\left(\p_s-\frac{\Phi_s}{\Phi_u}\p_u\right)\left(\Phi_n +\Phi_uG^\Phi\right)=0,
\end{gather*}
system~\eqref{EqDEFor1stOrderSingularRedModulesOfGenPDEs} is in fact equivalent to the single equation $\Phi_n+\Phi_uG^\Phi=\chi(x_n,\Phi)$ in~$\Phi$,
where the parameter-function~$\chi$ is an arbitrary smooth function of its arguments.
The value of~$\Phi$ associated with a fixed module~$Q^\Phi$ is defined up to the transformation $\tilde\Phi=\theta(x_n,\Phi)$.
Since
$\tilde\eta^s=-\tilde\Phi_s/\tilde\Phi_u=-\Phi_s/\Phi_u=\eta^s,$
the functions~$G^\Phi$ and~$\smash{G^{\tilde\Phi}}$ coincide.
At the same time, if we choose~$\theta$ satisfying the equation $\theta_n+\chi\theta_\Phi=0$,
then $\tilde\Phi_n+\tilde\Phi_uG^{\tilde\Phi}=0$.
Therefore, up to the equivalence on the set of functions parameterizing modules from the family~$\mathfrak M$, 
we can assume that the function~$\Phi$ is a solution of the equation $\Phi_n+\Phi_uG^\Phi=0$.

The order of each equation from system~\eqref{EqDEFor1stOrderSingularRedModulesOfGenPDEs} with respect to~$\Phi$ equals $r+1$
and hence is greater than the order of the system~$\mathcal S_0$.
Under certain smoothness assumptions (e.g., analyticity) this implies
that the system~\eqref{EqDEFor1stOrderSingularRedModulesOfGenPDEs} has solutions which are not solutions of~$\mathcal S_0$.
Consequently, \emph{the equation~$\mathcal L$ necessarily possesses reduction modules of singularity co-order one
that belong to~$\mathfrak M$.}

\begin{proposition}\label{PropositionOnReductionOfDetEqsForFirstCoOrderSingularRedModulesForGenPDEs}
Suppose that any $(n{-}1)$-dimensional involutive module of the reduced form $Q=\langle\p_s+\eta^s\p_u\rangle$
is a module of singularity co-order one for an equation~$\mathcal L$.
Then the determining system for values of $\eta^s$ associated with reduction modules of~$\mathcal L$
is reduced by the composition of the nonlocal substitution $\eta^s=-\Phi_s/\Phi_u$,
where $\Phi$ is a smooth function of $(x,u)$ with~$\Phi_u\ne0$, and the hodograph transformation
\begin{gather*}
\mbox{the new independent variables:}\qquad\tilde x_i=x_i, \quad \varkappa=\Phi,
\\
\lefteqn{\mbox{the new dependent variable:}}\phantom{\mbox{the new independent variables:}\qquad }\tilde u=u
\end{gather*}
to the initial equation~$\mathcal L$ for the function $\tilde u=\tilde u(\tilde x,\varkappa)$
with $\varkappa$ playing the role of a parameter.
\end{proposition}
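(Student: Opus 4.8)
The plan is to mirror, almost verbatim, the argument already used for the evolution-equation case in Proposition~\ref{PropositionOnReductionOfDetEqs0ForRedModulesOfEvolEqs}, since all the heavy lifting has been done in the paragraphs immediately preceding the statement. First I would split the determining system into its two natural parts: the involution subsystem $\eta^s_{s'}+\eta^{s'}\eta^s_u=\eta^{s'}_s+\eta^s\eta^{s'}_u$, which encodes that $Q=\langle\p_s+\eta^s\p_u\rangle$ is involutive, and the conditional invariance criterion~\eqref{EqDEFor1stOrderSingularRedModulesOfGenPDEs}. The Frobenius theorem applied to the involution subsystem guarantees a smooth function $\Phi=\Phi(x,u)$ with $\Phi_u\ne0$ such that $\eta^s=-\Phi_s/\Phi_u$; thus the nonlocal substitution named in the statement is precisely the device that resolves the involution part of the determining system, leaving only the conditional invariance criterion to be accounted for.

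Next I would invoke the computation carried out just before the proposition, which shows that under $\eta^s=-\Phi_s/\Phi_u$ the system~\eqref{EqDEFor1stOrderSingularRedModulesOfGenPDEs} is equivalent to the single scalar equation $\Phi_n+\Phi_uG^\Phi=\chi(x_n,\Phi)$ with $\chi$ an arbitrary smooth function. Since a fixed module $Q^\Phi$ determines $\Phi$ only up to the reparameterization $\tilde\Phi=\theta(x_n,\Phi)$, and this leaves every ratio $\Phi_s/\Phi_u$—hence $G^\Phi$—unchanged, choosing $\theta$ to satisfy $\theta_n+\chi\theta_\Phi=0$ absorbs $\chi$. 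Therefore, up to this equivalence on the set of parameterizing functions, the full determining system collapses to the single equation $\Phi_n+\Phi_uG^\Phi=0$, i.e.\ $G^\Phi=-\Phi_n/\Phi_u$.

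Finally I would apply the hodograph transformation $\tilde x_i=x_i$, $\varkappa=\Phi$, $\tilde u=u$. Because $\Phi_u\ne0$, the relation $\varkappa=\Phi(x,u)$ inverts to $u=\tilde u(\tilde x,\varkappa)$, and differentiating $\varkappa=\Phi(x,\tilde u)$ along $\varkappa=\const$ gives the transformation rule $\tilde u_{\tilde x_i}=-\Phi_i/\Phi_u$; in particular $\tilde u_{\tilde x_s}=\eta^s$ for $s=1,\dots,n-1$ and $\tilde u_{\tilde x_n}=-\Phi_n/\Phi_u=G^\Phi$. Recalling that $G^\Phi$ was obtained by solving $\tilde L^\Phi=0$ for $u_n$, where $\tilde L^\Phi$ coincides with $L$ on $\smash{\mathcal Q^\Phi_{\ssl r\ssr}}$ after all derivatives $u_\alpha$ have been expressed through the relations $u_s=-\Phi_s/\Phi_u$ and their differential consequences, the equation $\tilde u_{\tilde x_n}=G^\Phi$ re-expressed via these same rules is exactly $L=0$ for $\tilde u(\tilde x,\varkappa)$ with $\varkappa$ entering as a parameter. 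I expect the main obstacle to be the bookkeeping in this last step: one must check that the differential function $G^\Phi$, once rewritten through the hodograph rule $\tilde u_{\tilde x_i}=-\Phi_i/\Phi_u$ and the induced rules for higher derivatives, reproduces precisely the $u_n$-resolved form of $\check L(x,\Omega_{r,1,n-1})=0$. This is a consistency check between two eliminations of the $x_s$-differentiations—one performed on $\smash{\mathcal Q^\Phi_{\ssl r\ssr}}$ to build $\tilde L^\Phi$, the other through the hodograph map—and it succeeds because both use the identical relations $u_s=-\Phi_s/\Phi_u$; as in Proposition~\ref{PropositionOnReductionOfDetEqs0ForRedModulesOfEvolEqs}, it therefore follows directly from the definition of $G^\Phi$ and introduces no genuinely new difficulty.
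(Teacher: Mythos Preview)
Your proposal is correct and follows essentially the same approach as the paper's own proof: resolve the involution condition via the Frobenius representation $\eta^s=-\Phi_s/\Phi_u$, reduce the conditional invariance system~\eqref{EqDEFor1stOrderSingularRedModulesOfGenPDEs} to the single equation $\Phi_n+\Phi_uG^\Phi=0$ using the computation and reparameterization argument given just before the proposition, and then apply the hodograph rule $\tilde u_{\tilde x_i}=-\Phi_i/\Phi_u$ together with the definition of~$G^\Phi$ to identify the result with~$\mathcal L$. The paper's proof is simply a terser version of what you wrote.
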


\begin{proof}
The possibility of representing $\eta^s$ in the form $\Phi_s/\Phi_u$ with some function $\Phi=\Phi(x,u)$ follows from
the involutivity condition $\eta^s_{s'}+\eta^{s'}\eta^s_u=\eta^{s'}_s+\eta^s\eta^{s'}_u$ for the module~$Q$.
In this representation, the system of determining equations for values of $\eta^s$ associated with reduction modules of~$\mathcal L$
has the form~\eqref{EqDEFor1stOrderSingularRedModulesOfGenPDEs} and is equivalent, up to equivalence on the set traversed by the parameter-function~$\Phi$,
to the single equation $\Phi_n+\Phi_uG^\Phi=0$.
In view of the definition of the function~$G^\Phi$ and the expressions for derivatives under the hodograph transformation,
$\tilde u_{\tilde x_i}=-\Phi_i/\Phi_u$, etc.,
the hodograph transformation maps the equation $\Phi_n+\Phi_uG^\Phi=0$ into
to the initial equation~$\mathcal L$ for the function $\tilde u=\tilde u(\tilde x,\varkappa)$.
\end{proof}

Proposition~\ref{PropositionOnReductionOfDetEqsForFirstCoOrderSingularRedModulesForGenPDEs} states
the reduction of the determining system~\eqref{EqDEFor1stOrderSingularRedModulesOfGenPDEs} to the initial equation~$\mathcal L$
and is thus a ``no-go'' assertion.
It can be reformulated in terms of a relation between one-parameter families of solutions
and $(n{-}1)$-dimensional reduction modules of singularity co-order one.
The results of Section~\ref{SectionOnReductionModulesAndParametricFamiliesOfSolutions} imply
that for each such reduction module~$Q$ of the equation~$\mathcal L$
there exists a one-parameter family of $Q$-invariant solutions of~$\mathcal L$.
If the equation~$\mathcal L$ admits an $n$-dimensional co-order one meta-singular module,
the converse statement is true as well.
It is convenient to prove this statement without transforming the meta-singular module to the reduced form.

\begin{theorem}\label{TheoremOn1parametricSolutionFamiliesAndFirstCoOrderSingularRedModulesForGenPDEsA}
Suppose that an equation~$\mathcal L$ possesses an $n$-dimensional co-order one meta-singular module~$M$.
Then for any one-parameter family~$\mathcal F$ of solutions of~$\mathcal L$
there exists an $(n{-}1)$-dimensional involutive submodule~$Q$ of~$M$
that is a reduction module of~$\mathcal L$ and each solution from~$\mathcal F$ is $Q$-invariant.
\end{theorem}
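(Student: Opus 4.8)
The plan is to route everything through an auxiliary $n$-dimensional ultra-singular module attached to $\mathcal F$, so that $M$ never has to be transformed to its reduced form. Since the parameter of $\mathcal F=\{u=f(x,\varkappa)\}$ is essential, $f_\varkappa\neq0$, so I would first invert $u=f(x,\varkappa)$ into $\varkappa=\Phi(x,u)$ with $\Phi_u\neq0$. By Theorem~\ref{TheoremOnOneParFamiliesOfSolutionsAndnDReductionModules} the $n$-dimensional module $\hat Q=\{V\mid V\Phi=0\}$, spanned by the vector fields $\partial_s-(\Phi_s/\Phi_u)\partial_u$, $s=1,\dots,n$, is then an ultra-singular reduction module of $\mathcal L$ whose invariant solutions form precisely $\mathcal F$.

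Next I would set $Q:=M\cap\{V\mid V\Phi=0\}$. Fixing a commuting basis $Q_0,Q_1,\dots,Q_{n-1}$ of the (involutive) module $M$ as in Proposition~\ref{PropositionOnReducedFormOfFamiliesOfSingularModules}, the function-linear functional $V\mapsto V\Phi$ on $M$ is nonzero because $Q_0\Phi\neq0$; the latter is the coordinate-free form of $\Phi_u\neq0$ and is unaffected by the point transformation sending $M$ to its reduced form, where $Q_0=\partial_u$. Hence $Q$ is $(n{-}1)$-dimensional and coincides with the member $Q^\Phi=\langle Q_s-(Q_s\Phi/Q_0\Phi)Q_0\rangle$ of the family $\mathfrak M$; in particular $Q\subseteq M$ is involutive and satisfies the rank condition. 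By construction $Q\subseteq\hat Q$, and since every $\hat Q$-invariant function solves the smaller characteristic system $Q[u]=0$, each member of $\mathcal F$ is automatically $Q$-invariant.

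It then remains to show $Q\in\mathcal R^{n-1}(\mathcal L)$. As $M$ is first co-order meta-singular, $\sco_L Q\leqslant1$. I would argue that the presence of the ultra-singular $\hat Q\supseteq Q$ excludes $\sco_L Q=0$: if $\hat L:=L|_{\mathcal Q^\Phi_{(r)}}$ were of order $0$, its further restriction to $\smash{\hat{\mathcal Q}_{(r)}}\subseteq\mathcal Q^\Phi_{(r)}$, on which the zero-order jet variables $(x,u)$ remain free, would force $\hat L(x,u)\equiv0$, i.e.\ $\sco_L Q=-1$, a contradiction. Thus $\sco_L Q\in\{-1,1\}$. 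If $\sco_L Q=-1$, then $Q$ is itself ultra-singular and hence a reduction module by Proposition~\ref{PropositionOnUltraSingularRedOpsAndFamiliesOfInvSolutions}. If $\sco_L Q=1$, then under the maximal-rank assumption in force in this section, Proposition~\ref{PropositionOnWSCO1ReductionModulesAndUltraSingularModules} applies, and the existence of an $n$-dimensional ultra-singular module containing $Q$ is equivalent to $Q$ being a reduction module of $\mathcal L$. Either way $Q\in\mathcal R^{n-1}(\mathcal L)$, which finishes the proof.

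The crux is this last step, and the main virtue of the approach is that the auxiliary module $\hat Q$ supplies the $Q$-invariance of $\mathcal F$ for free and, via the co-order dichotomy $\{-1,1\}$, reduces the reduction-module property to the already-established Propositions~\ref{PropositionOnUltraSingularRedOpsAndFamiliesOfInvSolutions} and~\ref{PropositionOnWSCO1ReductionModulesAndUltraSingularModules}. The points that need genuine care are the verification that $Q$ is truly $(n{-}1)$-dimensional and contained in $M$ (where $Q_0\Phi\neq0$ enters) and the exclusion of the intermediate co-order $0$; both are light, so that once the correspondence of Theorem~\ref{TheoremOnOneParFamiliesOfSolutionsAndnDReductionModules} is invoked, no serious obstacle remains.
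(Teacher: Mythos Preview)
Your overall route—build the ultra-singular $n$-dimensional module $\hat Q$ from $\Phi$ via Theorem~\ref{TheoremOnOneParFamiliesOfSolutionsAndnDReductionModules}, intersect it with $M$, and then finish with the dichotomy $\sco_LQ\in\{-1,1\}$ using Propositions~\ref{PropositionOnUltraSingularRedOpsAndFamiliesOfInvSolutions} and~\ref{PropositionOnWSCO1ReductionModulesAndUltraSingularModules}—is a legitimate alternative to the paper's argument, which instead rules out co-order~$0$ by counting parameters in $Q$-invariant solutions and closes the case $\sco_LQ=1$ with Proposition~\ref{PropositionOnSingularCodimOneRedModulesAndFamiliesOfInvSolutions}. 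Your exclusion of co-order~$0$ via the inclusion $\hat{\mathcal Q}_{(r)}\subseteq\mathcal Q_{(r)}$ is clean and correct.

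There is, however, a genuine gap at the point where you assert that the functional $V\mapsto V\Phi$ on $M$ is nonzero ``because $Q_0\Phi\neq0$; the latter is the coordinate-free form of $\Phi_u\neq0$''. This identification is wrong. The condition $\Phi_u\neq0$ means $\partial_u\Phi\neq0$ in the \emph{original} coordinates, whereas $Q_0$ is by definition the basis element that becomes $\partial_{\tilde u}$ only \emph{after} the point transformation $g$ reducing $M$. What is coordinate-free is $Q_0\Phi=(g_*Q_0)(\Phi\circ g^{-1})=\partial_{\tilde u}(\Phi\circ g^{-1})$, and there is no reason for $(\Phi\circ g^{-1})_{\tilde u}$ to be nonzero: the transformation $g$ may destroy the graph property of the family $\mathcal F$, so that the level sets of $\Phi\circ g^{-1}$ become tangent to $\partial_{\tilde u}$. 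Concretely, with $n=2$, $M=\langle\partial_1,\partial_2\rangle$ and $\Phi=u$ one has $Q_\sigma\Phi=0$ for every basis element, so $M\cap\{V\Phi=0\}=M$ is $n$-dimensional, not $(n{-}1)$-dimensional, and your construction of $Q$ collapses.

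The paper's proof anticipates exactly this: it splits into the case where some $Q_\sigma\Phi\neq0$ (your situation) and the case where $Q_\sigma\Phi=0$ for all $\sigma$, in which one simply takes $Q=\langle Q_1,\dots,Q_{n-1}\rangle$; every element of $\mathcal F$ is then trivially $Q$-invariant and the rest of the argument goes through unchanged. Adding this short case distinction repairs your proof completely; without it the argument is incomplete.
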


\begin{proof}
Let $\mathcal F=\{u=f(x,\varkappa)\}$ be a one-parameter family of solutions of~$\mathcal L$.
Here, $f_\varkappa$ is nonzero since the parameter $\varkappa$ is essential.
From $u=f(x,\varkappa)$ we conclude that $\varkappa=\Phi(x,u)$
with some function $\Phi=\Phi(x,u)$ with $\Phi_u\ne0$.

Let $(Q_0,\dots,Q_p)$ be a commutative basis of~$M$, $p=n-1$.
Suppose that there exists $\sigma\in\{0,\dots,p\}$ such that $Q_\sigma\Phi\ne0$.
Up to permutation of basis elements we can assume that $Q_0\Phi\ne0$. Then we set $\tilde Q_s=Q_s-(Q_s\Phi)/(Q_0\Phi)Q_0$.
If $Q_\sigma\Phi=0$ for all $\sigma\in\{0,\dots,p\}$, we set $\tilde Q_s=Q_s$.

Consider the submodule~$Q$ generated by the vector fields $\tilde Q_s$.
This submodule is involutive, 
satisfies the rank condition 
and is of dimension $p=n-1$.
Hence $\sco_L Q^\Phi\leqslant1$.
It is also obvious that $\tilde Q_s\Phi=0$.
As $f_i=-(\Phi_i/\Phi_u)|_{u=f}$, this means that any solution from the family~$\mathcal F$ is $Q$-invariant.
The case $\sco_L Q\leqslant0$ with $L|_{\mathcal Q_{\ssl r\ssr}}\not\equiv0$ is impossible as otherwise the equation~$\mathcal L$
could not have a one-parameter family of $Q$-invariant solutions.
Therefore either $Q$ is an ultra-singular module for~$L$ or $\sco_L Q=1$.
Any ultra-singular module for~$\mathcal L$ is a reduction module of~$\mathcal L$.
If $\sco_L Q=1$, then $Q$ is a reduction module of~$\mathcal L$
by Proposition~\ref{PropositionOnSingularCodimOneRedModulesAndFamiliesOfInvSolutions}.
\end{proof}

For the correspondence between one-parameter families of solutions and $(n{-}1)$-dimensional reduction modules of singularity co-order one
to be one-to-one, the related meta-singular module should satisfy additional restrictions.

\begin{theorem}\label{TheoremOn1parametricSolutionFamiliesAndFirstCoOrderSingularRedModulesForGenPDEsB}
Suppose that 
an $r$th order equation~$\mathcal L$: $L[u]=0$ possesses an $n$-dimensional co-order one meta-singular module~$M$,
where the entire module~$M$ is not ultra-singular for~$\mathcal L$,
each $(n{-}1)$-dimensional submodule~$Q$ of~$M$ is of singularity co-order one for~$\mathcal L$,
and the equation $\hat L=0$ with a first-order differential function~$\hat L=\hat L[u]$,
associated with~$L[u]$ on the manifold~$\mathcal Q_{\ssl r\ssr}$ up to a nonvanishing multiplier,
can be solved with respect to the first-order derivative of~$u$ appearing in this equation.
Then up to the equivalence of solution families
there exists a bijection between one-parameter families of solutions of~$\mathcal L$
and its $(n{-}1)$-dimensional reduction modules contained in~$M$.
Namely, each module of this kind corresponds to the family of solutions
which are invariant with respect to it.
\end{theorem}

In other words, the problems of the construction of all one-parameter solution families of the equation~$\mathcal L$
and the exhaustive description of its reduction modules of the above form are completely equivalent.

\begin{proof}
If $Q$ is an $(n{-}1)$-dimensional reduction module of~$\mathcal L$ contained in~$M$, we have $\sco_{\mathcal L}Q=1$.
Therefore, in view of Proposition~\ref{PropositionOnSingularCodimOneRedModulesAndFamiliesOfInvSolutions}
the equation~$\mathcal L$ possesses a one-parameter family~$\mathcal F$ of $Q$-invariant solutions,
and any $Q$-invariant solution of~$\mathcal L$ belongs to this family.
Each one-parameter family of $Q$-invariant solutions of~$\mathcal L$ is obtained from~$\mathcal F$ by re-parameterizing.

Conversely, consider a one-parameter family $\mathcal F=\{u=f(x,\varkappa)\}$ of solutions of~$\mathcal L$.
Theorem~\ref{TheoremOn1parametricSolutionFamiliesAndFirstCoOrderSingularRedModulesForGenPDEsA} implies that
there exists $Q\in \mathcal R^{n-1}(\mathcal L)$ such that $Q$ is a submodule of~$M$ and
any solution from~$\mathcal F$ is $Q$-invariant.
Let us prove that the module~$Q$ is unique.
Suppose that there exists an $(n{-}1)$-dimensional involutive submodule~$\tilde Q$ of~$M$ different from~$Q$ and
any solution from~$\mathcal F$ is $\tilde Q$-invariant.
This implies that the family~$\mathcal F$ consists of solutions invariant with respect to the entire module~$M$.

Therefore, the module~$M$ satisfies the rank condition.
To show this, we fix a basis $(Q_0,\dots,Q_p)$ of~$M$, where
$Q_\sigma=\xi^{\sigma i}(x,u)\p_i+\eta^\sigma(x,u)\p_u$, $\sigma=0,\dots,n$.
The function~$\Phi$ defined in the proof of Theorem~\ref{TheoremOn1parametricSolutionFamiliesAndFirstCoOrderSingularRedModulesForGenPDEsA}
is an invariant of all~$Q_\sigma$, i.e., $Q_\sigma\Phi=0$.
As $\Phi_u\ne0$, we have that $\rank(\xi^{\sigma i})=n$.

As $M$ is an $n$-dimensional involutive module satisfying the rank condition and
the family~$\mathcal F$ formed by $M$-invariant solutions of the equation~$\mathcal L$ is one-parameter,
in view of Proposition~\ref{PropositionOnUltraSingularRedOpsAndFamiliesOfInvSolutions}
the module~$M$ is ultra-singular for~$\mathcal L$, thereby contradicting an assumption of the theorem.
This means that the module~$Q$ is unique.
\end{proof}

\looseness=-1
The bijection discussed in Theorem~\ref{TheoremOn1parametricSolutionFamiliesAndFirstCoOrderSingularRedModulesForGenPDEsB}
is generally broken in the presence of $(n{-}1)$-dimensional submodules of singularity co-order less than one
or ultra-singularity of the entire meta-singular module.
Indeed, if an $(n{-}1)$-dimensional involutive module~$Q$ is ultra-singular for the equation~$\mathcal L$,
the family of $Q$-invariant solutions of~$\mathcal L$ is parameterized by an arbitrary function of a single argument,
cf.\ Proposition~\ref{PropositionOnUltraSingularRedOpsAndFamiliesOfInvSolutions}.
In the case $\sco_L Q=-\infty$ with $L|_{\mathcal Q_{\ssl r\ssr}}\not\equiv0$,
the equation~$\mathcal L$ possesses no $Q$-invariant solutions.
If there are reduction modules of~$\mathcal L$ among $(n{-}1)$-dimensional co-order zero singular submodules of~$M$,
the sets of invariant solutions of~$\mathcal L$ for these reduction modules are discrete.
If the entire $n$-dimensional module~$M$ is ultra-singular for the equation~$\mathcal L$,
this equation possesses a one-parameter family~$\mathcal F$ of $M$-invariant solutions.
Then any solution from
~$\mathcal F$ is invariant with respect to each $(n{-}1)$-dimensional involutive submodule of~$M$.

Singularity co-order one of reduction modules is also essential for the statement of
Proposition~\ref{PropositionOnReductionOfDetEqsForFirstCoOrderSingularRedModulesForGenPDEs} and
Theorem~\ref{TheoremOn1parametricSolutionFamiliesAndFirstCoOrderSingularRedModulesForGenPDEsB}.
For example, the only meta-singular module of the linear rod equation $u_{tt}+u_{xxxx}=0$
is $M=\langle\p_x,\p_u\rangle$, and its singularity co-order equals two, cf.\ point 2 in Example~\ref{ExamplesOfSingularModules}.
The one-dimensional singular reduction modules of this equation were found in explicit form in~\cite{Boyko&Popovych2013}.
They do not constitute a no-go case since all of them are of singularity co-order two,
and the corresponding families of invariant solutions are two-parameter.

\begin{remark}
Reduction modules of codimension one and singularity co-order one do not exhaust possible
no-go cases for finding reduction modules.
In particular, the system of determining equations for regular reduction operators
of any (1+1)-dimensional linear second-order evolution equation~$\mathcal L$ is reduced by a nonlocal transformation
to a system of three copies of~$\mathcal L$~\cite{Popovych2008a}.
Therefore, the regular reduction operators of~$\mathcal L$ constitute
a no-go case different from the no-go case of singular reduction operators,
which is common for all (1+1)-dimensional evolution equations.
A similar phenomenon occurs for the Burgers equation $u_t+uu_x-\mu u_{xx}=0$,
where a no-go case arises for regular reduction operators
of the form $\p_t+\xi(t,x,u)\p_x+\eta(t,x,u)\p_u$ with $\xi_u=1/2$
\cite{Arrigo&Hickling2002,Mansfield1999,Pocheketa&Popovych2013}.
We believe that these no-go cases are effected by the coupling
of several properties of related equations such as
the evolution form, the second order, and the linearity or linearizability.
For a class of differential equations, the study of reduction modules may lead to no-go cases
due to the appearance of arbitrary elements parameterizing equations of the class
in the corresponding determining equations.
We refer to~\cite{Pocheketa&Popovych2012} for the calculation of
one-dimensional regular reduction modules
spanned by vector fields of the form $\p_t+\xi(t,x)\p_x+\eta(t,x,u)\p_u$ with $\xi_{xx}\ne0$
for the class of generalized Burgers equations $u_t+uu_x+f(t,x)u_{xx}=0$.%
\noprint{
Is there a no-go theorem for one-dimensional singular modules
of (1+2)-dimensional degenerate nonlinear diffusion--convection equation?
No, there is not!
}
\end{remark}

\section{Singular modules for quasi-linear second-order PDEs}
\label{SectioOnSingularModulesForQuasiLinearSecondOrderPDEs}

For some classes of differential equations it is possible to exhaustively describe
the associated singular modules.
We study certain quasi-linear second-order PDEs from this point of view.
It is natural to distinguish elliptic, evolution and generalized wave equations.
In all cases, $Q$ is an involutive module of vector fields
defined in the corresponding space of independent and dependent variables and satisfies the rank condition,
and the dimension of~$Q$ is less than the number of independent variables.

\medskip

\noindent {\bf Elliptic equations.}
Consider an equation~$\mathcal L$ for the single unknown function~$u$ of the independent variables $x=(x_1,\dots,x_n)$,
having the general form
\[
L[u]:=a^{ij}(x,u_{\ssl1\ssr})u_{ij}+b(x,u_{\ssl1\ssr})=0,
\]
where the coefficients $a^{ij}$ and $b$ are defined on the same domain~$\Omega$ of the first-order jet space,
$a^{ij}=a^{ji}$ and the matrix-function~$(a^{ij})$ is positive definite in each point of~$\Omega$.
We will prove that the equation~$\mathcal L$ possesses no singular modules of dimensions less than~$n$.

Denote $\dim Q$ by~$p$, $0<p<n$.
Up to permutation of~$x$'s we can locally choose a basis of~$Q$ which consists of vector fields of the form
$\hat Q_s=\p_s+\hat\xi^{s\iota}(x,u)\p_\iota+\hat\eta^s(x,u)\p_u$,
cf.\ Section~\ref{SectionOnSingularModulesOfVectorFieldsForDiffFunctions}.
Here and in what follows the index~$\iota$ runs from $p+1$ to~$n$.
Then any derivative of~$u$ of order one or two is expressed, on the manifold~$\smash{\mathcal Q_{\ssl2\ssr}}$,
via derivatives of~$u$ with respect to $x_{p+1}$, \dots, $x_n$ only
and the coefficients $\hat\xi^{s\iota}$ and $\hat\eta^s$,
see equation~\eqref{EqSingularRedModuleExpressionsForU_2}.
As only second-order terms in the expressions for second-order derivatives of~$u$ are essential here,
we use the representations
\[
u_{s\iota}=-\hat\xi^{s\iota'} u_{\iota\iota'}+R^{s\iota}(x,u_{\ssl1\ssr}), \quad
u_{ss'}=\hat\xi^{s\iota}\hat\xi^{s'\iota'} u_{\iota\iota'}+R^{ss'}(x,u_{\ssl1\ssr}),
\]
where the $R^{si}$ denote the terms without second-order derivatives, and $R^{ss'}=R^{s's}$.
Substituting the above expressions for the derivatives~$u_{si}$ into~$L[u]$,
we obtain the differential function
$\hat L[u]:=\hat a^{\iota\iota'}u_{\iota\iota'}+\hat b$,
which is associated with~$L[u]$ on the manifold~$\smash{\mathcal Q_{\ssl2\ssr}}$.
Here $\hat a^{\iota\iota'}=a^{\iota\iota'}-a^{s\iota}\hat\xi^{s\iota'}-a^{s\iota'}\hat\xi^{s\iota}+a^{ss'}\hat\xi^{s\iota}\hat\xi^{s'\iota'}$
and $\hat b=b+2a^{s\iota}R^{s\iota}+a^{ss'}R^{ss'}$.
In other words, for each fixed~$\iota$ and $\iota'=\iota$ the coefficient~$\hat a^{\iota\iota'}$ coincides
with the value of the quadratic form whose matrix is~$(a^{ij})$ at the tuple $(z^1,\dots,z^n)$,
where $z^s=-\hat\xi^{s\iota}$, $z^\iota=1$ and the other~$z$'s equal zero.
As the matrix~$(a^{ij})$ is positive definite, the coefficient~$\hat a^{\iota\iota'}$ is nonvanishing (and, moreover, positive).
This implies that the differential function~$L[u]$ cannot coincide on the manifold~$\smash{\mathcal Q_{\ssl2\ssr}}$
with a differential function of order less than two, not even up to a nonvanishing multiplier.
Therefore, $\wsco_{\mathcal L}Q=\sco_{\mathcal L}Q=2$.

\medskip

\noindent {\bf Evolution equations.}
Similarly to Section~\ref{SectionOnExampleOfEvolEqs}, for this and the next class of equations
we set the number of independent variables to $n+1$ instead of~$n$
and additionally single out the variable~$t=x_0$, i.e.,
the unknown function $u$ depends on the variables~$t$ and $x=(x_1,\dots,x_n)$.
The general form of a quasi-linear second-order evolution equation~$\mathcal L$ we intend to study is
\begin{equation}\label{EqQuasiLinear2ndOrderEvolution}
u_t=H[u]:=a^{ij}(t,x,u_{\ssl1,x\ssr})u_{ij}+b(t,x,u_{\ssl1,x\ssr}),
\end{equation}
where $u_{\ssl1,x\ssr}=(u,u_1,\dots,u_n)$,
the coefficients $a^{ij}$ and $b$ are defined on the same domain~$\Omega$ of the first-order jet space,
and the matrix-function~$(a^{ij})$ is symmetric and positive definite in each point of~$\Omega$.
Up to permutation of the variables~$x_i$, any involutive module~$Q$
of vector fields defined in the space of $(t,x,u)$ and satisfying the rank condition,
where $\dim Q<n+1$, can be locally assumed to be spanned either by the vector fields
$\p_t+\xi^{0\iota}\p_\iota+\eta^0\p_u$, $\p_s+\xi^{s\iota}\p_\iota+\eta^s\p_u$ with $p=\dim Q-1$
or by the vector fields $\p_s+\xi^{s\iota}\p_\iota+\eta^s\p_u$ with $p=\dim Q$.
Here $\xi^{0\iota}$, $\eta^0$, $\xi^{s\iota}$ and $\eta^s$ are smooth functions of $(t,x,u)$.
We can show similarly as for elliptic equations that
in the second case with $p=n$ and only in this case the module~$Q$ is singular for~$\mathcal L$,
and $\wsco_{\mathcal L}Q=\sco_{\mathcal L}Q=1$.
Then the basis elements of~$Q$ take the form $Q^s=\p_s+\eta^s\p_u$.
This is the case that has been studied in Section~\ref{SectionOnExampleOfEvolEqs}.
In contrast to general evolution equations, cf.\ Corollary~\ref{CorollaryOSingularVectorFieldsOfEvolEqs},
we can guarantee that $M=\langle\p_1,\dots,\p_n,\p_u\rangle$ is the only meta-singular module
of the equation~$\mathcal L$.

\medskip

\noindent {\bf Generalized wave equations.}
Consider the equation~$\mathcal L$: $u_{tt}=H[u]$
for the single unknown function~$u$ of the independent variables $t$ and $x=(x_1,\dots,x_n)$,
where the differential function $H=H[u]$ is defined analogously to~\eqref{EqQuasiLinear2ndOrderEvolution}
but the coefficients~$a^{ij}$ and~$b$ may additionally depend on~$u_t$.
We partition the set of appropriate involutive modules
in a way different from that used for evolution equations.
Up to permutation of the variables~$x$, any involutive module~$Q$
of vector fields defined in the space of $(t,x,u)$ and satisfying the rank condition,
where $\dim Q<n+1$, can be locally assumed to be spanned either by the vector fields
$\p_t+\eta^0\p_u$, $\p_s+\xi^{s\iota}\p_\iota+\eta^s\p_u$
with $p=\dim Q-1$
or by the vector fields $\p_s+\tau^s\p_t+\xi^{s\iota}\p_\iota+\eta^s\p_u$ with $p=\dim Q$.
Here $\eta^0$, $\tau^s$, $\xi^{s\iota}$ and $\eta^s$ are smooth functions of $(t,x,u)$.
Only in the second case with $p=n$ the module~$Q$ may be singular for~$\mathcal L$.
We consider this case in detail.

As $p=n$, the basis elements of~$Q$ take the form $Q^s=\p_s+\tau^s\p_t+\eta^s\p_u$.
On the manifold~$\smash{\mathcal Q_{\ssl2\ssr}}$ we have that
$
u_s=\eta^s-\tau^s u_t$,
$
u_{ss'}
=\tau^s\tau^{s'}u_{tt}+R^{ss'}$,
where $R^{ss'}$ denotes the corresponding collection of the terms depending at most on $t$, $x$, $u$, and~$u_t$.
Substituting the above expressions for the derivatives of~$u$ with respect to~$x$ into the differential function $L[u]=u_{tt}-H[u]$,
we obtain the differential function
$\hat L[u]:=(1-\hat a^{ij}\tau^i\tau^j)u_{tt}+\hat b(t,x,u,u_t)$,
which is associated with~$L[u]$ on the manifold~$\smash{\mathcal Q_{\ssl2\ssr}}$ and depends at most on $t$, $x$, $u$, $u_t$ and~$u_{tt}$.
Here the coefficient $\hat a^{ij}=\hat a^{ij}(t,x,u,u_t)$ is obtained from $a^{ij}$ by the substitution $u_s=\eta^s-\xi^s u_t$
and the precise form of the coefficient $\hat b=\hat b(t,x,u,u_t)$ is not essential.
The module~$Q$ is singular for~$\mathcal L$ if and only if $\ord\hat L[u]<2$, i.e.,
the coefficients $\tau^s=\tau^s(t,x,u)$ satisfy the equation $\hat a^{ij}\tau^i\tau^j=1$.
If some of the coefficients~$\hat a^{ij}$ depend on~$u_t$, this equation should be split with respect to this derivative and hence it may be inconsistent.

As the module~$Q$ is involutive and, therefore, the vector fields~$Q^s$ commute,
the coefficients $\tau^s$ and $\eta^s$ jointly satisfy the system $Q^s\tau^{s'}=Q^{s'}\tau^s$, $Q^s\eta^{s'}=Q^{s'}\eta^s$, i.e.,
\begin{gather*}
\tau^{s'}_s+\tau^s\tau^{s'}_t+\eta^s\tau^{s'}_u=\tau^s_{s'}+\tau^{s'}\tau^s_t+\eta^{s'}\tau^s_u,\\
\eta^{s'}_s+\tau^s\eta^{s'}_t+\eta^s\eta^{s'}_u=\eta^s_{s'}+\tau^{s'}\eta^s_t+\eta^{s'}\eta^s_u.
\end{gather*}
In view of the Frobenius theorem this implies that
the system $Q^s\Phi=0$ with respect to the unknown function~$\Phi=\Phi(t,x,u)$
admits solutions $\Phi^l$, $l=1,2$, such that $\Phi^1_t\Phi^2_u-\Phi^1_u\Phi^2_t\ne0$.
Solving the pair of the equations $\Phi^l_s+\tau^s\Phi^l_t+\eta^s\Phi^l_u=0$ for each fixed~$s$
as a system of linear algebraic equations with respect to $(\tau^s,\eta^s)$, we derive the representation
\begin{gather}\label{EqRepresentationForCoeffsOfInvModulesForNWEs}
\tau^s=-\frac{\Phi^1_s\Phi^2_u-\Phi^2_s\Phi^1_u}{\Phi^1_t\Phi^2_u-\Phi^2_t\Phi^1_u},\quad
\eta^s=-\frac{\Phi^1_t\Phi^2_s-\Phi^2_t\Phi^1_s}{\Phi^1_t\Phi^2_u-\Phi^2_t\Phi^1_u}.\quad
\end{gather}
Conversely, it can be checked by direct calculation that
for arbitrary functions $\Phi^1$ and $\Phi^2$ with $\Phi^1_t\Phi^2_u-\Phi^1_u\Phi^2_t\ne0$
the module~$Q$ spanned by the vector fields~$Q^s=\p_s+\tau^s\p_t+\eta^s\p_u$
with the coefficients defined by~\eqref{EqRepresentationForCoeffsOfInvModulesForNWEs} is involutive.
Up to functional dependence of pairs of functions $\Phi^1$ and $\Phi^2$ with $\Phi^1_t\Phi^2_u-\Phi^1_u\Phi^2_t\ne0$,
there exists a bijection between such pairs and involutive modules spanned by $n$~vector fields of the form~$Q^s=\p_s+\tau^s\p_t+\eta^s\p_u$.

If the coefficients~$\hat a^{ij}$ do not depend on~$u_t$
(this is the case if, e.g., $a^{ij}=a^{ij}(t,x,u)$ and then $\hat a^{ij}=a^{ij}$),
the substitution of the expressions~\eqref{EqRepresentationForCoeffsOfInvModulesForNWEs} for~$\tau^s$
into the equation $\hat a^{ij}\tau^i\tau^j=1$ gives
a single equation in the two unknown functions $\Phi^1$ and $\Phi^2$.
Each solution of this equation is associated with a singular module~$Q$ for the equation~$\mathcal L$.
In general, in the multidimensional case $n>1$, the coefficients $\tau^s$ and $\eta^s$
of the corresponding basis elements~$Q^s$ are coupled in a nonlocal and nonlinear way.
This is why \emph{it is impossible to exhaustively describe singular modules
of multidimensional nonlinear wave equations within the framework of meta-singular modules.}
At the same time, under additional conditions for the coefficients~$a^{ij}$, nonlinear wave equations possess
families of singular modules which fit well into the above framework.

\looseness=-1
Thus, let the coefficients~$\hat a^{ij}$ only depend on~$t$ and~$x$.
We look for singular modules of~$\mathcal L$ for which the corresponding coefficients~$\tau^s$ also do not depend on~$u$.
It is then sufficient to consider pairs of functions~$\Phi^1$ and~$\Phi^2$ with $\Phi^1_u=0$
and, therefore, $\Phi^1_t\Phi^2_u\ne0$.
Then the expressions~\eqref{EqRepresentationForCoeffsOfInvModulesForNWEs} for~$\tau^s$
are reduced to $\tau^s=-\Phi^1_s/\Phi^1_t$.
The equation $a^{ij}\tau^i\tau^j=1$ is equivalent to the equation $(\Phi^1_t)^2=a^{ij}\Phi^1_i\Phi^1_j$,
which is the eikonal equation associated with the wave equation~$\mathcal L$.
We fix a solution~$\Psi=\Psi(t,x)$ of the eikonal equation with $\Psi_t\ne0$ and consider the module~$M^\Psi$ spanned by
the vector fields $\Psi_t\p_s-\Psi_s\p_t$, $s=1,\dots,n$, and~$\p_u$.
The module~$M^\Psi$ is meta-singular for the equation~$\mathcal L$, and $\sco_{\mathcal L}M^\Psi\leqslant1$ for each~$\Psi$.
Hence, the results of Section~\ref{SectionOnReductionModulesOfSingularityCoOrder1} are relevant in this case.
The number of such meta-singular modules of singularity co-order one is infinite
since they are parameterized by~$\Phi$ traversing the set of solutions of the above eikonal equation with $\Phi_t\ne0$.

\begin{example}
To consider a more concrete example, we briefly describe singular one-dimensional modules
of (1+1)-dimensional nonlinear wave equations of the general form
\[
u_{tt}-(G(u)u_x)_x-F(u)=0,
\]
where~$F$ and~$G$ are arbitrary smooth functions of~$u$ with $G>0$.
For convenience we use the specific notation of variables, $t=x_0$ and $x=x_1$.
The case $G=\const$ in the characteristic, or light-cone, variables
was exhaustively studied in~\cite[Section~6]{Kunzinger&Popovych2010a}.

Denote a nonlinear wave equation with fixed~$F$ and~$G$ by~$\mathcal L$.
Let a vector field $Q_1=\tau\p_t+\xi\p_x+\eta\p_u$,
where the components~$\tau$, $\xi$ and~$\eta$ are smooth functions of~$(t,x,u)$ with $(\tau,\xi)\ne(0,0)$, constitute a basis of a one-dimensional module~$Q$.
The module~$Q$ is singular for~$\mathcal L$ if and only if
it is strongly singular for~$\mathcal L$, i.e. $\sco_{\mathcal L}Q\leqslant1$.
Both the components~$\tau$ and~$\xi$ are then necessarily nonvanishing.
Changing the basis of~$Q$, we set $\tau=1$.
Then the condition $\sco_{\mathcal L}Q\leqslant1$ is equivalent to the following constraint for the component~$\xi$:
\[
\xi=\pm g(u), \quad\mbox{where}\quad g:=\sqrt G.
\]
This means that the equation~$\mathcal L$ possesses exactly two two-dimensional meta-singular modules,
$M_{\pm}=\langle\p_t\pm g\p_x,\p_u\rangle$ with $\sco_{\mathcal L}M_{\pm}=1$,
and thus the results of Section~\ref{SectionOnReductionModulesOfSingularityCoOrder1}
are relevant here.

The singularity co-order of~$Q$ for~$\mathcal L$ is nonpositive,
$\mathop{\rm (w)sco}_{\mathcal L}Q\leqslant0$,
if and only if additionally the component~$\eta$ satisfies the equation
\[
(\sqrt g\eta)_u=0, \quad\mbox{i.e.}\quad \eta=\frac{\alpha(t,x)}{\sqrt g}
\]
for some smooth function~$\alpha$ of~$(t,x)$.
If we assume~$u_t$ as the principal derivative of the characteristic equation $Q_1[u]=0$,
then the left hand side of the equation~$\mathcal L$ is associated,
on the manifold~$\smash{\mathcal Q_{\ssl 2\ssr}}$,
with the function
\[
\hat L=\frac{\alpha_t-g\alpha_x}{\sqrt g}-\frac{\alpha^2g_u}{2g^2}-F,
\]
which depends only on~$(t,x,u)$.
If the equation $\hat L=0$ can be solved with respect to~$u$, then $\wsco_{\mathcal L}Q=0$;
otherwise $\wsco_{\mathcal L}Q=-\infty$.
The condition singling out ultra-singular modules
from the modules of nonpositive co-order of singularity
is $\hat L\equiv0$ with respect to~$(t,x,u)$.
In the case $g_u\ne0$ this implies that there exist constants $b$, $a_0$ and~$a_1$ such that
$F=bg_u/(2g^2)$,
$\alpha=\alpha(z)$, where $z=a_0t+a_1x$ and $\alpha_z=\alpha^2+b$.
A~nonconstant value of the parameter-function~$\alpha$ is possible only for~$g$
satisfying the equation $g_u=2g^{3/2}(a_0-a_1g)$.

For~$\alpha=\const$, there is no constraint for~$g$.
Therefore, for any positive smooth function $g=g(u)$ and any constant~$\alpha$,
each of the vector fields $Q_{\pm}=\p_t\pm g\p_x+\alpha g^{-1/2}\p_u$ spans an ultra-singular reduction module
of the equation $u_{tt}=(g^2u_x)_x-\alpha^2g_u/(2g^2)$,
which displays the factorization
\[
u_{tt}-(g^2u_x)_x+\frac{\alpha^2g_u}{2g^2}=
\left(D_t\mp D_x\circ g-\frac{\alpha g_u}{2g^{3/2}}\right)
\left(u_t\pm gu_x-\frac{\alpha}{g^{1/2}}\right).
\]
For the particular value $g=1/u$, this factorization was indicated in~\cite{Manganaro&Pavlov2014} in terms of "first order reductions".
Interpreting and developing the above results for (1+1)-dimensional nonlinear wave equations
similarly to the consideration of the particular case $G=\const$ in \cite[Section~6]{Kunzinger&Popovych2010a}
will be the subject of a forthcoming paper.
\end{example}

\noindent{\bf Acknowledgements.}
This work was supported by the Austrian Science Fund (FWF), project P25064.
The authors thank the referees for useful comments and remarks.

\end{document}